%% file: journal_version.tex
\documentclass[acmsmall]{acmart}
\AtBeginDocument{%
  }

\setcopyright{cc}
\setcctype{by}
\acmJournal{PACMMOD}
\acmYear{2026} \acmVolume{4} \acmNumber{3 (SIGMOD)} \acmArticle{165}
\acmMonth{6} \acmDOI{10.1145/3802042}

\newcommand{\M}{\mathcal{M}}
\newcommand{\R}{\mathcal{R}}
\newcommand{\bsmu}{\boldsymbol{\mu}}
\newcommand{\GS}{\operatorname{GS}}
\newcommand{\LS}{\operatorname{LS}}
\newcommand{\DS}{\operatorname{DS}}
\newcommand{\Gauss}{\operatorname{Gauss}}
\newcommand{\Lap}{\operatorname{Lap}}

\usepackage{algorithm}
\usepackage[noend]{algpseudocode}

\algnewcommand\algorithmicforeach{\textbf{for each}}
\algdef{S}[FOR]{ForEach}[1]{\algorithmicforeach\ #1\ \algorithmicdo}
\usepackage{multirow}
\usepackage{subcaption}
\usepackage{pgfplots}
\usepackage{caption}
\usepackage{subcaption}
\usepackage{calc}
\usepackage{comment}
\usepackage{listings}
\usepackage{xcolor} 
\usepackage{multicol}
\definecolor{r2tCol}{HTML}{D55E00}
\definecolor{spCol}{HTML}{0072B2}
\definecolor{dpsCol}{HTML}{009E73}
\definecolor{pmsjaCol}{HTML}{CC79A7}
\usepackage{ifthen}
\newboolean{fullver}
\setboolean{fullver}{true}

\begin{document}

\title{DP-S4S: Accurate and Scalable Select-Join-Aggregate Query Processing with User-Level Differential Privacy}

\author{Yuan Qiu}
\email{yuanqiu@seu.edu.cn}
\orcid{0000-0002-3488-6386}
\affiliation{%
  \institution{Southeast University}
  \city{Nanjing}
  \state{Jiangsu}
  \country{China}
}

\author{Xiaokui Xiao}
\email{xkxiao@nus.edu.sg}
\orcid{0000-0003-0914-4580}
\affiliation{%
  \institution{National University of Singapore}
  \city{Singapore City}
  \country{Singapore}
}

\author{Yin Yang}
\email{yyang@hbku.edu.qa}
\orcid{0000-0002-0549-3882}
\affiliation{%
  \institution{Hamad Bin Khalifa University}
  \city{Doha}
  \country{Qatar}
}


\begin{abstract}
 Answering Select-Join-Aggregate (SJA) queries with differential privacy (DP) is a fundamental problem with important applications in various domains. The current state-of-the-art methods ensure user-level DP (i.e., the adversary cannot infer the presence or absence of any given individual user with high confidence) and achieve instance-optimal accuracy on the query results. However, these solutions involve solving expensive optimization programs, which may incur prohibitive computational overhead for large databases.

 One promising direction to achieve scalability is through sampling, which provides a tunable trade-off between result utility and computational costs. However, applying sampling to differentially private SJA processing is a challenge for two reasons. First, it is unclear \textit{what to sample}, in order to achieve the best accuracy within a given computational budget. Second, prior solutions were not designed with sampling in mind, and their mathematical tool chains are not sampling-friendly. To our knowledge, the only known solution that applies sampling to private SJA processing is {\sf S\&E}, a recent proposal that (i) samples users and (ii) combines sampling directly with existing solutions to enforce DP. We show that both are suboptimal designs; consequently, even with a relatively high sample rate, the error incurred by {\sf S\&E} can be 10x higher than the underlying DP mechanism without sampling.

 Motivated by this, we propose \textit{Differentially Private Sampling for Scale} ({\sf DP-S4S}), a novel mechanism that addresses the above challenges by (i) sampling \textit{aggregation units} instead of users, and (ii) laying the mathematical foundation for SJA processing under \textit{R\'enyi-DP}, which composes more easily with sampling. Further, {\sf DP-S4S} can answer both scalar (e.g., COUNT, SUM, etc.) and vector (with GROUP BY predicts) SJA queries. Extensive experiments on real data demonstrate that DP-S4S enables scalable SJA processing on large datasets under user-level DP, while maintaining high result utility.
\end{abstract}

\begin{CCSXML}
<ccs2012>
 <concept>
  <concept_id>10002978.10003006.10003013.10003014</concept_id>
  <concept_desc>Security and privacy~Differential privacy</concept_desc>
  <concept_significance>500</concept_significance>
 </concept>
 <concept>
  <concept_id>10002951.10002952.10003197</concept_id>
  <concept_desc>Information systems~Database query processing</concept_desc>
  <concept_significance>300</concept_significance>
 </concept>
 <concept>
  <concept_id>10002950.10003648.10003671</concept_id>
  <concept_desc>Mathematics of computing~Probability and statistics</concept_desc>
  <concept_significance>100</concept_significance>
 </concept>
</ccs2012>
\end{CCSXML}

\ccsdesc[500]{Security and privacy~Differential privacy}
\ccsdesc[300]{Information systems~Database query processing}
\ccsdesc[100]{Mathematics of computing~Probability and statistics}

\keywords{differential privacy, user-level privacy, sampling, query processing, graph analytics, relational databases}

\received{October 2025}
\received[revised]{January 2026}
\received[accepted]{February 2026}

\maketitle

\sloppy

\input{sections/introduction}
\input{sections/preliminaries}

\input{sections/sampling-for-a-single-query}

\input{sections/sampling-for-multiple-queries}

\input{sections/experiments}
\input{sections/conclusion}

\begin{acks}
Yuan Qiu is supported by the Start-up Research Fund of Southeast University, under grant RF1028625150.
Xiaokui Xiao is supported by the Ministry of Education, Singapore, under an AcRF Tier-2 Grant (MOE-000761-01), and by Google under grant 00006186.
Yin Yang is supported by the Qatar National Research Fund, a member of the Qatar Foundation (No. PPM 05-0506-210017). Any opinions, findings, and conclusions, or recommendations expressed
in this material are those of the authors and do not reflect the views of the funding agencies.
We thank the anonymous reviewers for their constructive comments. We also thank Yangfan Jiang for discussions on privacy amplification, and Yiwei Chen for discussions on smooth sensitivity.
\end{acks}

\bibliographystyle{ACM-Reference-Format}
\bibliography{ref}

\ifthenelse{\boolean{fullver}}{
\appendix

\input{sections/useful-lemmas}
\input{sections/proofs}
\input{sections/ablation}

\input{sections/sande}
\input{sections/ablation-ss}
\input{sections/vector-error}

\input{sections/cdp}

\input{sections/data}

}{}

\end{document}

%% file: sections/introduction.tex
\section{Introduction}

{\it Differential privacy (DP)}~\cite{DBLP:journals/fttcs/DworkR14} provides a rigorous framework for releasing query results while protecting individuals' privacy, by requiring outputs to remain statistically similar on any two {\it neighboring datasets}, so that the adversary cannot infer with high confidence which dataset was used to produce the outputs. The precise privacy guarantee depends on the definition of neighboring datasets. In a relational database, {\it record-level DP} considers two datasets as neighbors if they differ in exactly one record~\cite{DBLP:conf/icalp/Dwork06,DBLP:conf/sigmod/0002HIM19,DBLP:journals/pacmmod/ZhangH23,DBLP:journals/vldb/TaoGMR25,DBLP:journals/pacmmod/GuanYZXCXXZ25}. {\it User-level DP}, in contrast, treats two datasets as neighbors if they differ in the set of \textit{all the records} contributed by one user~\cite{DBLP:conf/tcc/KasiviswanathanNRS13,DBLP:journals/pacmmod/FangY24}. The latter, stronger notion is necessary when each user may contribute multiple, potentially correlated records, e.g., connections in a social network, purchase histories in an e-commerce platform, etc. 




This paper focuses on enforcing user-level DP when answering {\it Select-Join-Aggregate (SJA)} queries, which combine selection predicates, joins, and aggregation operators to capture a broad spectrum of reporting workloads, making them a central target for private query processing systems~\cite{DBLP:journals/pvldb/KotsogiannisTHF19,DBLP:journals/pacmmod/0007S023}. A major challenge in processing SJA queries under user-level DP is that such a query often has a high \textit{sensitivity}, meaning that adding or removing a user (which may lead to the addition or removal of many records) can significantly alter the query output. As we review in Section~\ref{sec:dp-definition}, a high sensitivity necessitates large amounts of noise to be injected into the query result to satisfy DP, leading to poor result utility.



To address this issue, state-of-the-art approaches to SJA processing under user-level DP (e.g., \cite{DBLP:conf/sigmod/DongF0TM22, DBLP:journals/pacmmod/0007S023}) commonly employ a technique called {\it truncation}. The main idea is to impose a limit on the influence that any single user can have on the SJA query result. 
For instance, consider a simple query: edge counting in a social network, in which a user can have a large number of connections (e.g., up to 5,000 on Facebook~\cite{Ugander2011Facebook}). With truncation, the query engine can impose an upper bound, e.g., up to 10 edges are counted for each user, which reduces the query sensitivity dramatically, leading to a lower noise requirement to satisfy DP, as we explain in Section~\ref{sec:dp-definition}.



Truncation, however, also has a negative impact on query result accuracy, since it introduces a bias by capping the influence of every user. Thus, to design an effective truncation scheme that maximizes result utility, one must carefully balance the error due to truncation bias and the noise injected to satisfy DP, which is a non-trivial problem.
Prior work {\sf R2T}~\cite{DBLP:conf/sigmod/DongF0TM22} addresses this by considering all possible truncation thresholds that are a power of two, which guarantees asymptotically optimal result accuracy, as we elaborate further in Section~\ref{sec:pre-trunc}. The downside of {\sf R2T}, however, is that for each truncation threshold, it needs to solve a linear program to choose aggregation units to retain~\cite{DBLP:conf/tcc/KasiviswanathanNRS13}. Since it is computationally expensive to solve so many linear programs, {\sf R2T} does not scale well to large datasets~\cite{DBLP:journals/pacmmod/FangY24}, as shown in our experiments in Section~\ref{sec:exp}.
Further, R2T is limited to SJA queries with scalar outputs, i.e., each query returns a single numeric result. For SJA queries with \textit{vector outputs}, e.g., those involving a GROUP BY clause, the best known solution to our knowledge is {\sf PMSJA}~\cite{{DBLP:journals/pacmmod/0007S023}}, which involves solving even more expensive quadratically constrained quadratic programs (QCQPs)~\cite{NesterovNemirovskii1994, PardalosVavasis1991}, which exacerbates the scalability issue. 

A natural approach to improving efficiency is sampling, which enables the user to reach a desirable trade-off between query result accuracy and computational costs by choosing an appropriate sample size. Unfortunately, incorporating sampling into user-level private SJA processing is challenging, since (i) it is unclear what objects should be sampled to stay within a computational budget, and (ii) existing truncation-based pipelines are calibrated in the $(\varepsilon,\delta)$-DP model, which is not easy to compose with sampling.
To our knowledge, the only user-level DP solution with sampling is {\sf S\&E}~\cite{DBLP:journals/pacmmod/FangY24}, which (i) samples users and then extends the sample by exploring their neighboring records, and (ii) applies an existing solution such as {\sf R2T} or {\sf PMSJA} on the extended sample set. Since the sampled users may remain highly correlated, the privacy analysis demands injecting far more noise, yielding errors that can exceed those of {\sf R2T} or {\sf PMSJA} by more than an order of magnitude.

This paper presents \textit{Differentially Private Sampling for Scale} ({\sf DP-S4S}), a mechanism that delivers user-level DP accuracy comparable to {\sf R2T} and {\sf PMSJA} while scaling to much larger databases. {\sf DP-S4S} samples {\it aggregation units} (e.g., join tuples) instead of users, which keeps samples compact and avoids the worst cross-user correlations as in \textsf{S\&E}. We show that the error introduced by sampling can be partially compensated for by its privacy amplification effect, which reduces the noise level to satisfy DP. Meanwhile, for vector SJA queries, we build new mathematical infrastructure to adapt the idea of {\sf PMSJA} to satisfy \textit{R\'enyi DP} (which composes well with sampling), before converting the privacy guarantee to the standard DP notion. In particular, we design the first smooth sensitivity mechanism under R\'enyi DP, which can be of independent interest.



Extensive experiments with real data demonstrate that for scalar queries, {\sf DP-S4S} sustains the accuracy of {\sf R2T} while dramatically improving efficiency. For vector queries, the extended version of {\sf DP-S4S} matches {\sf PMSJA} at a fraction of the runtime. Both versions of {\sf DP-S4S} consistently outperform {\sf S\&E}, achieving more than $10\times$ lower error at smaller sample rates under the same privacy budget.

%% file: sections/preliminaries.tex
\section{Background} \label{sec:pre}

Section~\ref{sec:dp-definition} provides preliminaries on the definitions and properties of differential privacy. Section~\ref{sec:pre-def} clarifies the problem setting, and Section~\ref{sec:pre-trunc} reviews notable existing solutions.

\subsection{Differential Privacy} \label{sec:dp-definition}
In this paper, we follow the standard centralized differential privacy setting, in which a trusted data curator has access to the raw data, and releases only query results that under differential privacy constraints. The adversary aims to infer private information from the DP-compliant query outputs. Since DP offers information-theoretic privacy guarantees, we do not rely on any assumption on the computational power of the adversary.
Let $\mathcal{I}$ be the domain of all instances of a given relational database schema, \textit{differential privacy} (\textit{DP}) is defined as follows: 

\begin{definition}[Differential Privacy~\cite{DBLP:journals/fttcs/DworkR14}]
For $\varepsilon>0$ and $\delta\geq 0$, a randomized mechanism $\M:\mathcal{I}\to\mathcal{O}$ is $(\varepsilon,\delta)$-DP, if for any pair of neighboring database instances $\mathbf{I}\sim \mathbf{I}'$ and any subset of outputs $O\subseteq\mathcal{O}$, it holds that
\[
\Pr[\M(\mathbf{I})\in O]\leq e^\varepsilon \cdot\Pr[\M(\mathbf{I}')\in O]+\delta\,.
\]
When $\delta=0$, we say the mechanism satisfies \textit{pure DP}. Otherwise (i.e., when $\delta>0$), we say the mechanism satisifes \textit{approximate DP}. Note that in the above definition, the concept of ``neighboring'' database instances $\mathbf{I}\sim \mathbf{I}'$ is left abstract, which we will clarify with concrete definitions in Section~\ref{sec:pre-def}.

A common approach to answering numerical queries under differential privacy is to inject random noise into the query results, whose scale is calibrated to the \textit{sensitivity} of the query.
The widely used notion of \textit{global sensitivity} (\textit{GS}) considers all possible neighboring instances in the domain $\mathcal{I}$, as follows.

\begin{definition}[Global Sensitivity~\cite{DBLP:conf/focs/DworkRV10}]\label{def:gs}
Given $F:\mathcal{I}\to \mathbb{R}^d$, the global sensitivity of $F$ is defined as 
\[
\GS_F:=\max_{\mathbf{I},\mathbf{I}' \in \mathcal{I}:\mathbf{I}\sim \mathbf{I}'} \|F(\mathbf{I})-F(\mathbf{I}')\|\,.
\]
\end{definition}

$\GS$ can be defined over $L_1$ or $L_2$ norms. The Laplace mechanism calibrates noise to the $L_1$-$\GS$ of the query to achieve pure DP:

\begin{lemma}[Laplace Mechanism]
Given $F:\mathcal{I}\to \mathbb{R}^d$, 
\[
\M_{\Lap}(\mathbf{I}):=F(\mathbf{I})+b\cdot \Lap^d(0,1)
\]
satisfies $(\varepsilon,0)$-DP when $b\geq \GS_{L_1}/\varepsilon$.
\end{lemma}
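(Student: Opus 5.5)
We argue directly from the density of the Laplace distribution and reduce the privacy condition to a pointwise bound on density ratios. Fix neighboring instances $\mathbf{I}\sim\mathbf{I}'$. Write $\mu=F(\mathbf{I})$ and $\mu'=F(\mathbf{I}')$, both in $\mathbb{R}^d$. The output $\M_{\Lap}(\mathbf{I})$ has density
\[
p(z)=\prod_{i=1}^d \frac{1}{2b}\exp\!\left(-\frac{|z_i-\mu_i|}{b}\right)=\frac{1}{(2b)^d}\exp\!\left(-\frac{\|z-\mu\|_1}{b}\right),
\]
and $p'(z)$ is defined analogously with $\mu'$. This uses that $b\cdot\Lap^d(0,1)$ consists of $d$ i.i.d.\ coordinates, each Laplace with scale $b$.

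The key step is bounding the ratio pointwise. For every $z\in\mathbb{R}^d$,
\[
\frac{p(z)}{p'(z)}=\exp\!\left(\frac{\|z-\mu'\|_1-\|z-\mu\|_1}{b}\right)\le \exp\!\left(\frac{\|\mu-\mu'\|_1}{b}\right).
\]
The inequality is the triangle inequality for the $L_1$ norm. By Definition~\ref{def:gs} with the $L_1$ norm, $\|\mu-\mu'\|_1\le \GS_{L_1}$. The assumption $b\ge \GS_{L_1}/\varepsilon$ then gives the bound $e^{\varepsilon}$.

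Finally, we integrate over an arbitrary measurable output set $O\subseteq\mathbb{R}^d$:
\[
\Pr[\M_{\Lap}(\mathbf{I})\in O]=\int_O p(z)\,dz\le e^\varepsilon\int_O p'(z)\,dz=e^{\varepsilon}\Pr[\M_{\Lap}(\mathbf{I}')\in O].
\]
This is exactly $(\varepsilon,0)$-DP.

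There is no real obstacle. The only points needing care are that the per-coordinate densities multiply into an $L_1$ norm, and the degenerate case $\GS_{L_1}=0$. In that case $F$ is constant on neighboring instances, so the two outputs are identically distributed. Any $b>0$ then works, with the convention that $b$ must be positive.
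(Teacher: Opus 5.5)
Your proof is correct. The paper gives no proof of this lemma; it states it as a standard background fact, and your pointwise density-ratio bound (via the $L_1$ triangle inequality), integrated over an arbitrary output set, is the standard textbook argument for it.
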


An important property of DP mechanisms is \textit{composability}, meaning that multiple DP mechanisms can be combined to form a composite mechanism that still satisfies DP, as follows.

\begin{lemma}[Composition Theorems~\cite{DBLP:conf/focs/DworkRV10,DBLP:journals/fttcs/DworkR14}]\label{lemma:composition}
Let $\M$ be the adaptive composition of DP algorithms $\M_1,\dots\M_k$, then
\begin{enumerate}
\item (Basic Composition) If each $\M_i$ is $(\varepsilon_i, \delta_i)$-DP, then $\M$ is $(\sum_i\varepsilon_i, \sum_i \delta_i)$-DP.
\item (Advanced Composition) If each $\M_i$ is $(\varepsilon,\delta)$-DP, then for any $\delta'>0$, $\M$ is $(\varepsilon\sqrt{2k\ln(1/\delta')} + k\varepsilon\frac{e^\varepsilon-1}{e^\varepsilon+1}, k\delta+\delta')$-DP.
\end{enumerate}
\end{lemma}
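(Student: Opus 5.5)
The statement to prove is Lemma~\ref{lemma:composition}, with its two parts. I will argue both through the privacy-loss random variable under adaptive composition. For part (1), basic composition, I proceed by induction on $k$, so it suffices to treat two mechanisms. Let $\M_1$ be $(\varepsilon_1,\delta_1)$-DP. Let $\M_2(\cdot, y_1)$ be $(\varepsilon_2,\delta_2)$-DP for every fixed output $y_1$ of $\M_1$. Fix neighbors $\mathbf{I}\sim\mathbf{I}'$ and an output set $O$, and write $O_{y_1}=\{y_2:(y_1,y_2)\in O\}$. Conditioning on $y_1$ gives
\[
\Pr[\M(\mathbf{I})\in O]=\mathbb{E}_{y_1\sim\M_1(\mathbf{I})}\Pr[\M_2(\mathbf{I},y_1)\in O_{y_1}]\le \mathbb{E}_{y_1\sim\M_1(\mathbf{I})}\min\{1,\ e^{\varepsilon_2}\Pr[\M_2(\mathbf{I}',y_1)\in O_{y_1}]+\delta_2\}.
\]
I then pull out $\delta_2$. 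The remaining integrand is $\min\{1, e^{\varepsilon_2} g(y_1)\}$, which is a $[0,1]$-valued function of $y_1$. I apply the DP guarantee of $\M_1$ to it in its functional form, $\mathbb{E}_{\M_1(\mathbf{I})}h\le e^{\varepsilon_1}\mathbb{E}_{\M_1(\mathbf{I}')}h+\delta_1$ for $h\in[0,1]$, which follows from the set form by the layer-cake integral $h=\int_0^1 \mathbf 1[h>t]\,dt$. This yields $\Pr[\M(\mathbf{I})\in O]\le e^{\varepsilon_1+\varepsilon_2}\Pr[\M(\mathbf{I}')\in O]+\delta_1+\delta_2$.

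For part (2), advanced composition, I follow Dwork--Rothblum--Vadhan. The first step removes the $\delta$'s. Each $(\varepsilon,\delta)$-DP pair $\M_i(\mathbf{I}),\M_i(\mathbf{I}')$ can be written as a mixture: with probability $1-\delta$ it draws from distributions whose likelihood ratio lies in $[e^{-\varepsilon},e^{\varepsilon}]$, and with probability $\delta$ it draws from arbitrary ones. This is the standard decomposition lemma, obtained by trimming the excess mass where the ratio exceeds $e^\varepsilon$. A union bound over the $k$ bad events then costs $k\delta$, so it suffices to prove the pure case with additive slack $\delta'$. The second step defines the privacy loss $L=\sum_i L_i$, where
\[
L_i=\ln\frac{\Pr[\M_i(\mathbf{I},y_{<i})=y_i]}{\Pr[\M_i(\mathbf{I}',y_{<i})=y_i]}.
\]
Each $L_i$ satisfies $|L_i|\le\varepsilon$. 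Conditioned on the past, its mean is a KL divergence between two distributions that are $\varepsilon$-close in max-divergence. A known bound gives this mean as at most $\varepsilon\frac{e^\varepsilon-1}{e^\varepsilon+1}$; the sharper form comes from optimizing over two-point distributions, since the worst case is randomized response. The third step applies Azuma--Hoeffding to the martingale differences $L_i-\mathbb{E}[L_i\mid y_{<i}]$, which are bounded in a range of length $2\varepsilon$. This gives
\[
\Pr\Bigl[L> k\varepsilon\tfrac{e^\varepsilon-1}{e^\varepsilon+1}+\varepsilon\sqrt{2k\ln(1/\delta')}\Bigr]\le\delta'.
\]
The fourth step converts this tail bound to DP. Split $O$ into the outputs where $L$ is at most the threshold $\varepsilon^*$ and those where it exceeds it. On the first part the probability ratio is at most $e^{\varepsilon^*}$, and the second part contributes at most $\delta'$.

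I expect the main obstacle to be the KL bound $\varepsilon\frac{e^\varepsilon-1}{e^\varepsilon+1}$, because the naive estimate only gives $\varepsilon(e^\varepsilon-1)$. I would establish it by noting that $D(P\|Q)$ is jointly convex in $(P,Q)$, and that the set of pairs with pointwise ratio in $[e^{-\varepsilon},e^{\varepsilon}]$ has extreme points given by randomized-response pairs. The computation then reduces to the two-point distributions $(\tfrac{e^\varepsilon}{1+e^\varepsilon},\tfrac{1}{1+e^\varepsilon})$ and their swap.
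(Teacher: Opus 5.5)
Your proof is correct. The paper gives no proof of this lemma and only cites Dwork--Rothblum--Vadhan and Dwork--Roth. Your part (1) and the skeleton of your part (2) are the argument from those sources: reduction to $\delta=0$, the privacy-loss martingale, Azuma--Hoeffding with conditional range $2\varepsilon$, and conversion of the tail bound to DP.

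The genuine difference is the drift term. The cited proofs bound the conditional mean of $L_i$ by $\varepsilon(e^\varepsilon-1)$, so they only establish the version with $k\varepsilon(e^\varepsilon-1)$. The sharper term $k\varepsilon\frac{e^\varepsilon-1}{e^\varepsilon+1}$ stated here is the Kairouz--Oh--Viswanath form. Your convexity/extreme-point argument is exactly what upgrades the DRV template to it, and it checks out:
\begin{itemize}
\item On an $n$-point output space, the set $\{(P,Q): e^{-\varepsilon}Q\le P\le e^{\varepsilon}Q\}$ is a polytope in $\mathbb{R}^{2n}$.
\item A vertex needs $2n-2$ independent tight ratio constraints.
\item Both constraints at a point are tight only when $p_x=q_x=0$. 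Counting then leaves only point masses with $P=Q$ and the two-point randomized-response pairs.
\item General output spaces follow because KL is a supremum over finite partitions, and coarsening preserves the ratio bound.
\end{itemize}

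One detail in the decomposition needs tightening. Trimming to $P_0=\min\{P,e^{\varepsilon}Q\}$ and $Q_0=\min\{Q,e^{\varepsilon}P\}$ does give mutually $e^{\varepsilon}$-bounded sub-measures. However, their masses $1-a$ and $1-b$ can differ, while your union-bound step needs the good/bad coin to have the same law under $\mathbf{I}$ and $\mathbf{I}'$. The fix takes one line. If $a\ge b$, replace $Q_0$ by the convex combination of $e^{-\varepsilon}P_0$ (which is $\le Q_0$) and $Q_0$ that has mass $1-a$; it stays between $e^{-\varepsilon}P_0$ and $e^{\varepsilon}P_0$. Then rescale both sub-measures by $(1-\delta)/(1-a)$. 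This gives the exact mixture with common weight $\delta$ that your argument uses.
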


\end{definition}

The composability properly enables DP mechanism designers to split the problem into subtasks, each with a portion of the privacy parameters $\varepsilon$ and $\delta$ called its \textit{privacy budget}.  

\medskip
\noindent
\textbf{R\'enyi DP.} In Lemma~\ref{lemma:composition}, the basic composition is clean but loose for approximate DP, whereas the advanced composition is tighter but with a messy formula. Next, we introduce an alternative definition of differential privacy: R\'enyi DP, which has tight and clean composition, and can be easily converted to standard DP, as follows.

\begin{definition}[R\'enyi Differential Privacy (RDP)~\cite{DBLP:conf/csfw/Mironov17}]\label{def:rdp}
For $\alpha>1$ and $\rho>0$, a randomized mechanism $\M:\mathcal{I}\to\mathcal{O}$ is $(\alpha,\rho)$-RDP, if for any pair of neighboring datasets $\mathbf{I}\sim \mathbf{I}'$, it holds that
\[
\mathrm{D}_\alpha(\M(\mathbf{I}) \| \M(\mathbf{I}'))\leq \rho\,,
\]
where $\mathrm{D}_\alpha(P\|Q):=\frac{1}{\alpha-1}\ln\int_{z\in \operatorname{supp}(Q)} P(z)^\alpha Q(z)^{1-\alpha}\,\mathrm{d}z$ is the R\'enyi divergence of order $\alpha$.
\end{definition}

\begin{lemma}[RDP Composition~\cite{DBLP:conf/csfw/Mironov17}]\label{lemma:rdp-composition}
Let $\M$ be the adaptive composition of DP algorithms $\M_1,\dots\M_k$, where each $\M_i$ is $(\alpha, \rho_i)$-RDP, then $\M$ is $(\alpha, \sum_i \rho_i)$-RDP.
\end{lemma}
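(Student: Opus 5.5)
The plan is to prove the claim of Lemma~\ref{lemma:rdp-composition} for two mechanisms and then extend it to $k$ by induction. For adaptive composition, the composite output is a pair $(y_1,y_2)$. Here $y_1\sim \M_1(\mathbf{I})$, and $y_2\sim \M_2(\mathbf{I},y_1)$, where $\M_2(\cdot,y_1)$ is $(\alpha,\rho_2)$-RDP for every fixed $y_1$.

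Fix neighbors $\mathbf{I}\sim\mathbf{I}'$. Write $P_1,Q_1$ for the output densities of $\M_1$ on $\mathbf{I},\mathbf{I}'$, and $P_2(\cdot\mid y_1),Q_2(\cdot\mid y_1)$ for the conditional densities of the second stage. The joint densities then factor as $P(y_1,y_2)=P_1(y_1)P_2(y_2\mid y_1)$ and similarly for $Q$. Substituting into Definition~\ref{def:rdp} gives
\[
e^{(\alpha-1)\mathrm{D}_\alpha(P\|Q)}=\int P_1(y_1)^\alpha Q_1(y_1)^{1-\alpha}\left(\int P_2(y_2\mid y_1)^\alpha Q_2(y_2\mid y_1)^{1-\alpha}\,\mathrm{d}y_2\right)\mathrm{d}y_1 .
\]
By the RDP guarantee of $\M_2$ at each fixed $y_1$, the inner integral is at most $e^{(\alpha-1)\rho_2}$, uniformly in $y_1$. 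Pulling this constant out leaves exactly $e^{(\alpha-1)\mathrm{D}_\alpha(P_1\|Q_1)}\le e^{(\alpha-1)\rho_1}$. Taking logarithms and dividing by $\alpha-1>0$ gives $\rho_1+\rho_2$. Since $\alpha>1$, the map $t\mapsto e^{(\alpha-1)t}$ is increasing, so every inequality goes in the correct direction.

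For general $k$, treat $(\M_1,\dots,\M_{k-1})$ as a single adaptive mechanism with output $(y_1,\dots,y_{k-1})$. By the induction hypothesis it is $(\alpha,\sum_{i<k}\rho_i)$-RDP. Then apply the two-mechanism case with $\M_k$ as the second stage.

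The main obstacle is measure-theoretic care rather than the computation.
\begin{itemize}
\item \emph{Supports.} The integral ranges only over $\operatorname{supp}(Q)$. One has to check that the joint support of $Q$ factors appropriately: $Q(y_1,y_2)>0$ if and only if $Q_1(y_1)>0$ and $Q_2(y_2\mid y_1)>0$. This lets the outer integral range over $\operatorname{supp}(Q_1)$ and the inner one over $\operatorname{supp}(Q_2(\cdot\mid y_1))$.
\item \emph{Densities.} One needs conditional densities with respect to a common dominating measure. I would first handle discrete or absolutely continuous outputs, then note that the general case follows by disintegration.
\item \emph{Infinite divergence.} If $\mathrm{D}_\alpha$ is infinite the bound is vacuous or false, but the hypotheses already exclude this.
\end{itemize}
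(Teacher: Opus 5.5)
Your proof is correct and matches the standard argument from Mironov; the paper does not reprove this lemma and only cites that source. As there, you factor the joint density of the adaptive composition, bound the inner integral by $e^{(\alpha-1)\rho_2}$ uniformly in $y_1$, and induct on $k$, while your remarks on supports and conditional densities add care without changing the argument.
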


\begin{lemma}[Conversion between DP and RDP~\cite{DBLP:conf/tcc/BunS16,DBLP:conf/csfw/Mironov17}] \label{lem:vec-RDP-convert}
If $\M$ is $(\varepsilon, 0)$-DP, then $\M$ is also $(\alpha, \frac{\alpha\varepsilon^2}{2})$-RDP for any $\alpha>1$.
If $\M$ is $(\alpha,\rho)$-RDP, then $\M$ is also $\left(\rho+\frac{\ln (1/\delta)}{\alpha-1},\delta\right)$-DP for any $\delta>0$.
\end{lemma}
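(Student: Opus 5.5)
The statement has two parts, and I would prove them separately.

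\emph{Part 1: pure DP implies RDP.} Fix neighbors $\mathbf{I}\sim\mathbf{I}'$ and write $P=\M(\mathbf{I})$, $Q=\M(\mathbf{I}')$. Pure $(\varepsilon,0)$-DP applied in both directions gives the pointwise bound $|L(z)|\le\varepsilon$ on the privacy loss $L(z)=\ln\frac{P(z)}{Q(z)}$. It also gives mutual absolute continuity of $P$ and $Q$, so the integral in Definition~\ref{def:rdp} over $\operatorname{supp}(Q)$ is unambiguous.

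The R\'enyi divergence can then be written as
\[
\mathrm{D}_\alpha(P\|Q)=\frac{1}{\alpha-1}\ln \mathbb{E}_{z\sim P}\bigl[e^{(\alpha-1)L(z)}\bigr].
\]
The naive bound $|L|\le\varepsilon$ only yields $\varepsilon$, so I would follow Bun--Steinke and use two further facts.

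First, the mean of $L$ under $P$ is the KL divergence. Combining $\mathbb{E}_P[L]=\mathrm{KL}(P\|Q)$ with $\mathbb{E}_Q[e^{L}]=1$ and the bound $|L|\le\varepsilon$ gives $\mathrm{KL}\le\varepsilon(e^\varepsilon-1)/(e^\varepsilon+1)\le\varepsilon^2/2$.

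Second, $L$ is supported on $[-\varepsilon,\varepsilon]$, an interval of width $2\varepsilon$. Hoeffding's lemma therefore gives
\[
\mathbb{E}_P\bigl[e^{t(L-\mathbb{E}L)}\bigr]\le e^{t^2(2\varepsilon)^2/8}=e^{t^2\varepsilon^2/2}.
\]
Taking $t=\alpha-1$ yields
\[
\mathrm{D}_\alpha\le \mathrm{KL}+\frac{(\alpha-1)\varepsilon^2}{2}\le \frac{\alpha\varepsilon^2}{2}.
\]

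The main obstacle is the KL bound: getting exactly $\varepsilon^2/2$ needs the refined tanh-type estimate. I would prove it by maximizing $\mathbb{E}_P[L]$ subject to $L\in[-\varepsilon,\varepsilon]$ and $\mathbb{E}_P[e^{-L}]=1$. The extremizer is a two-point distribution on $\pm\varepsilon$, which gives the stated value; then $\tanh(\varepsilon/2)\le\varepsilon/2$ finishes the bound.

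\emph{Part 2: RDP implies approximate DP.} Fix an event $O$. By Markov's inequality on the privacy loss, $\Pr_P[L>\varepsilon']\le e^{-(\alpha-1)\varepsilon'}\mathbb{E}_P[e^{(\alpha-1)L}]=e^{(\alpha-1)(\rho-\varepsilon')}$. Setting $\varepsilon'=\rho+\ln(1/\delta)/(\alpha-1)$ makes this at most $\delta$.

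Then split $O$ into the part where $L\le\varepsilon'$ and the rest. On the first part, $P(z)\le e^{\varepsilon'}Q(z)$ pointwise. The second part has $P$-mass at most $\delta$. Together these give $\Pr[\M(\mathbf{I})\in O]\le e^{\varepsilon'}\Pr[\M(\mathbf{I}')\in O]+\delta$, which is the claimed $(\varepsilon',\delta)$-DP guarantee. This direction is routine.
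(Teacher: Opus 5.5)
Your proof is correct. The paper does not prove this lemma; it imports both claims from Bun--Steinke and Mironov, so the comparison below is with those sources.

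\emph{Part~1.} This is the standard argument for the cited Bun--Steinke fact that $\varepsilon$-DP implies $\tfrac12\varepsilon^2$-zCDP, which is exactly $\mathrm{D}_\alpha\le\alpha\varepsilon^2/2$ for every $\alpha>1$. The mean of the privacy loss, $\mathrm{KL}(P\|Q)$, is at most $\varepsilon\tanh(\varepsilon/2)\le\varepsilon^2/2$, and Hoeffding's lemma controls the fluctuation. Your two-point extremal claim can be made rigorous in one line. On $[-\varepsilon,\varepsilon]$ the chord bound $e^{-x}\le\frac{\varepsilon-x}{2\varepsilon}e^{\varepsilon}+\frac{\varepsilon+x}{2\varepsilon}e^{-\varepsilon}$ holds. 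Taking $\mathbb{E}_P$ and using $\mathbb{E}_P[e^{-L}]=1$ gives $\mathbb{E}_P[L]\le\varepsilon\tanh(\varepsilon/2)$ directly.

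\emph{Part~2.} Here you genuinely depart from the cited proof. Mironov first proves probability preservation, $P(O)\le(e^{\rho}Q(O))^{(\alpha-1)/\alpha}$, via H\"older's inequality. He then splits on whether $e^{\rho}Q(O)$ exceeds $\delta^{\alpha/(\alpha-1)}$. This works event by event and yields the slightly stronger per-event bound $P(O)\le\max\{e^{\varepsilon'}Q(O),\delta\}$. Your Markov bound instead exhibits a single bad set $\{L>\varepsilon'\}$ of $P$-mass at most $\delta$ that serves every $O$ at once. That is a tail bound on the privacy loss, which is in general a stronger structural conclusion than $(\varepsilon',\delta)$-DP. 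Both routes give the same $\varepsilon'=\rho+\ln(1/\delta)/(\alpha-1)$.

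\emph{One step needs explicit justification.} In Part~2, the identity $\mathbb{E}_P[e^{(\alpha-1)L}]=e^{(\alpha-1)\mathrm{D}_\alpha(P\|Q)}$ requires $P\ll Q$. In Part~1 you derived this from pure DP, but here it must come from $\rho<\infty$. That inference is valid only under the standard convention that $\mathrm{D}_\alpha(P\|Q)=+\infty$ for $\alpha>1$ whenever $P\not\ll Q$.

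Definition~\ref{def:rdp}, read literally, integrates only over $\operatorname{supp}(Q)$ and does not enforce this. An atom of $P$ with mass $1/2$ outside $\operatorname{supp}(Q)$ contributes nothing to that integral, yet it breaks $(\varepsilon',\delta)$-DP for every $\delta<1/2$. So state that you work with the standard definition.

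Also, the final ``$=$'' in your Markov display should be ``$\le$'', since only $\mathrm{D}_\alpha(P\|Q)\le\rho$ is known.
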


A common mechanism for enforcing R\'enyi DP is the Gaussian mechanism, which is based on the concept of Gaussian R\'enyi divergence, as follows.


\begin{lemma}[Gaussian R\'enyi Divergence~\cite{DBLP:journals/isci/GilAL13}]\label{lem:vec-diverge}
Let $\mathcal{N}(\boldsymbol{\mu}, \sigma^2 \mathbf{1}_d)$ denote a multivariate Gaussian distribution, then
\begin{align*}
&\mathrm{D}_\alpha(\mathcal{N}(\bsmu_1, \sigma_1^2 \mathbf{1}_d)\|\mathcal{N}(\bsmu_2, \sigma_2^2 \mathbf{1}_d))\\
=&\frac{\alpha}{2}\cdot \frac{\|\bsmu_1-\bsmu_2\|_2^2}{(1-\alpha)\sigma_1^2+\alpha\sigma_2^2}+\frac{d}{2(\alpha-1)} \ln\frac{\sigma_1^{2(1-\alpha)}\sigma_2^{2\alpha}}{(1-\alpha)\sigma_1^2+\alpha\sigma_2^2}
\end{align*}
if $(1-\alpha)\sigma_1^2+\alpha\sigma_2^2>0$, otherwise infinite.
\end{lemma}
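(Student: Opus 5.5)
The statement is the closed form for the R\'enyi divergence between two isotropic Gaussians, so the plan is a direct calculation from Definition~\ref{def:rdp}. Write $P=\mathcal{N}(\bsmu_1,\sigma_1^2\mathbf{1}_d)$ and $Q=\mathcal{N}(\bsmu_2,\sigma_2^2\mathbf{1}_d)$.

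First I would reduce to one dimension. Both covariances are isotropic, so both densities factor over coordinates, and hence so does $P(z)^\alpha Q(z)^{1-\alpha}$. The integral is then a product of $d$ one-dimensional integrals, and after taking $\frac{1}{\alpha-1}\ln$ the divergence is additive over coordinates. Since $\sum_i (\mu_{1,i}-\mu_{2,i})^2=\|\bsmu_1-\bsmu_2\|_2^2$, the mean term sums to the stated quantity. The variance term is identical in every coordinate, which explains the factor $d$. It therefore suffices to prove the case $d=1$.

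For $d=1$ I would expand the exponent of $P(z)^\alpha Q(z)^{1-\alpha}$:
\[
-\frac{\alpha(z-\mu_1)^2}{2\sigma_1^2}-\frac{(1-\alpha)(z-\mu_2)^2}{2\sigma_2^2}.
\]
The coefficient of $-z^2/2$ is $\frac{\alpha}{\sigma_1^2}+\frac{1-\alpha}{\sigma_2^2}=\frac{\sigma_\alpha^2}{\sigma_1^2\sigma_2^2}$, where $\sigma_\alpha^2:=(1-\alpha)\sigma_1^2+\alpha\sigma_2^2$. If $\sigma_\alpha^2\le 0$, the integrand does not decay as $|z|\to\infty$ (note $\alpha>1$, so this coefficient can be nonpositive). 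The integral then diverges, which gives the ``otherwise infinite'' clause.

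If $\sigma_\alpha^2>0$, I would complete the square, which yields a Gaussian integral times a constant. The standard identity for a convex combination of quadratics gives the residual constant
\[
-\frac{\alpha(1-\alpha)(\mu_1-\mu_2)^2}{2\sigma_\alpha^2}.
\]
The normalization contributes the ratio
\[
\frac{\sqrt{2\pi}\,\sigma_1\sigma_2/\sigma_\alpha}{\sqrt{2\pi}\,\sigma_1^\alpha\sigma_2^{1-\alpha}}.
\]
Taking $\frac{1}{\alpha-1}\ln$ turns the first factor into $\frac{\alpha(\mu_1-\mu_2)^2}{2\sigma_\alpha^2}$. The second factor becomes
\[
\frac{1}{2(\alpha-1)}\ln\frac{\sigma_1^{2(1-\alpha)}\sigma_2^{2\alpha}}{\sigma_\alpha^2}.
\]
This is the claimed formula for $d=1$, and the factorization step then gives the general case.

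The main obstacle is the bookkeeping in completing the square. Specifically, I need to verify that the cross terms give exactly the coefficient $\alpha(1-\alpha)$ over $\sigma_\alpha^2$ with the correct sign after dividing by $\alpha-1$. I would check this by using the identity $a(z-\mu_1)^2+b(z-\mu_2)^2=(a+b)(z-\bar\mu)^2+\frac{ab}{a+b}(\mu_1-\mu_2)^2$ with $a=\alpha/\sigma_1^2$ and $b=(1-\alpha)/\sigma_2^2$. This identity holds even when $b<0$, provided $a+b>0$.
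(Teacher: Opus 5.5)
Your proposal is correct. With $a=\alpha/\sigma_1^2$ and $b=(1-\alpha)/\sigma_2^2$ you get $ab/(a+b)=\alpha(1-\alpha)/\sigma_\alpha^2$. The one-dimensional integral is then $\sigma_1^{1-\alpha}\sigma_2^{\alpha}\sigma_\alpha^{-1}\exp\bigl(\alpha(\alpha-1)(\mu_1-\mu_2)^2/(2\sigma_\alpha^2)\bigr)$, and applying $\frac{1}{\alpha-1}\ln$ gives exactly the stated expression, with the positive sign you were worried about.

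The paper does not prove this lemma. It imports the formula from Gil, Alajaji, and Linder, whose results are for \emph{univariate} distributions. So the one step not literally contained in the cited source is the passage to $d$ dimensions. Your coordinatewise factorization supplies it: both covariances are scalar multiples of the identity, so the integrand factors and the divergence tensorizes. This also accounts for the factor $d$ on the logarithmic term. Your derivation therefore makes the $d$-dimensional isotropic statement self-contained, whereas the paper relies on that extension being obvious.

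One small point is the boundary case $\sigma_\alpha^2=0$. There your completing-the-square identity is unavailable, since it divides by $a+b$. The exponent is then affine in $z$, with slope proportional to $a(\mu_1-\mu_2)$ and $a>0$, rather than ``non-decaying as $|z|\to\infty$.'' The integral still diverges: along one tail if $\mu_1\neq\mu_2$, and because the integrand is a positive constant if $\mu_1=\mu_2$. The ``otherwise infinite'' clause therefore holds, but this case is worth stating separately.
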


\begin{lemma}[Gaussian Mechanism] \label{lem:ec-gauss}
Given $F:\mathcal{I}\to \mathbb{R}^d$, 
\[
\M_{\operatorname{Gauss}}(\mathbf{I}):=F(\mathbf{I})+\sigma\cdot \mathcal{N}(0,\mathbf{1}_d)
\]
satisfies $(\alpha,\rho)$-RDP when $\sigma\geq \GS_{L_2}\cdot \sqrt{\alpha/2\rho}$.
\end{lemma}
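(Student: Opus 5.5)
The plan is to reduce the claim to Lemma~\ref{lem:vec-diverge}: compute the R\'enyi divergence between the output distributions of $\M_{\Gauss}$ on neighboring instances and bound it by $\rho$. Fix neighboring instances $\mathbf{I}\sim\mathbf{I}'$. Then $\M_{\Gauss}(\mathbf{I})\sim\mathcal{N}(F(\mathbf{I}),\sigma^2\mathbf{1}_d)$ and $\M_{\Gauss}(\mathbf{I}')\sim\mathcal{N}(F(\mathbf{I}'),\sigma^2\mathbf{1}_d)$. These two Gaussians have the same isotropic covariance and differ only in their means.

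First, apply Lemma~\ref{lem:vec-diverge} with $\sigma_1=\sigma_2=\sigma$. The denominator is $(1-\alpha)\sigma^2+\alpha\sigma^2=\sigma^2>0$, so the divergence is finite. The logarithmic term becomes $\frac{d}{2(\alpha-1)}\ln\frac{\sigma^{2(1-\alpha)}\sigma^{2\alpha}}{\sigma^2}=\frac{d}{2(\alpha-1)}\ln 1=0$. We are left with
\[
\mathrm{D}_\alpha\bigl(\M_{\Gauss}(\mathbf{I})\,\|\,\M_{\Gauss}(\mathbf{I}')\bigr)=\frac{\alpha\,\|F(\mathbf{I})-F(\mathbf{I}')\|_2^2}{2\sigma^2}.
\]

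Second, bound the numerator by Definition~\ref{def:gs} with the $L_2$ norm: $\|F(\mathbf{I})-F(\mathbf{I}')\|_2\le \GS_{L_2}$. The hypothesis $\sigma\ge \GS_{L_2}\sqrt{\alpha/(2\rho)}$ gives $\sigma^2\ge \alpha\GS_{L_2}^2/(2\rho)$. Substituting,
\[
\frac{\alpha\GS_{L_2}^2}{2\sigma^2}\le\frac{\alpha\GS_{L_2}^2}{2}\cdot\frac{2\rho}{\alpha\GS_{L_2}^2}=\rho.
\]
This is exactly the condition of Definition~\ref{def:rdp}, uniformly over all neighboring pairs.

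The only point needing care is the degenerate case $\GS_{L_2}=0$. Then the lower bound on $\sigma$ is vacuous, but the divergence is identically $0\le\rho$, so the claim still holds; this should be handled separately to avoid dividing by zero in the substitution above. I expect no genuine obstacle: the argument is a direct substitution into the closed-form Gaussian R\'enyi divergence.
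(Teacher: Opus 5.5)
Your proposal is correct and is essentially the standard argument the paper relies on: the paper cites this lemma without proof (from Mironov), and in the appendix it uses exactly your equal-variance specialization of Lemma~\ref{lem:vec-diverge}, $\mathrm{D}_\alpha = \alpha\|F(\mathbf{I})-F(\mathbf{I}')\|_2^2/(2\sigma^2)$, bounded via $\GS_{L_2}$. Your separate handling of the degenerate case $\GS_{L_2}=0$ covers $\sigma=0$, where the closed form does not apply; the paper does not spell this out.
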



\medskip
\noindent
\textbf{Local sensitivity and smooth sensitivity.} Lastly, we define {\it local sensitivity} of a function $F$ with respect to a given instance $\mathbf{I}$ as: 
\[
\LS_{F}(\mathbf{I}):=\max_{\mathbf{I'\in \mathcal{I}}:\mathbf{I}\sim \mathbf{I}'} \|F(\mathbf{I})-F(\mathbf{I}')\|\,.
\]

Sometimes, a function $F$'s local sensitivity is far smaller than its global sensitivity, which is a situation we encounter later in Section~\ref{sec:vec}. However, it is known that injecting noise directly based on local sensitivities cannot guarantee DP~\cite{DBLP:books/sp/17/Vadhan17}. The following definition provides a middle ground between global and local sensitivities.

\begin{lemma}[Smooth-sensitivity~\cite{DBLP:conf/stoc/NissimRS07}]\label{lemma:ss}
Given $F:\mathcal{I}\to\mathbb{R}^d$ and an upper bound function $B(\mathbf{I})\geq \LS_{F}(\mathbf{I})$, the $\gamma$-smooth-sensitivity of $F$ is defined as 
\[
\operatorname{SS}_F(\mathbf{I},\gamma):=\max_{\mathbf{I}'\in \mathcal{I}} \{B(\mathbf{I}') \cdot e^{-\gamma\cdot \operatorname{dist}(\mathbf{I}, \mathbf{I}')}\}\,,
\]
where $\operatorname{dist}(\cdot,\cdot)$ is defined over the neighboring relationship.
It satisfies $\operatorname{SS}_F(\mathbf{I},\gamma)\geq \LS_F  (\mathbf{I})$ and $\operatorname{SS}_F(\mathbf{I},\gamma)\leq e^\gamma \cdot \operatorname{SS}_F(\mathbf{I}',\gamma)$ for any $\mathbf{I}\sim\mathbf{I}'$.
\end{lemma}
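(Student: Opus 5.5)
The statement has two parts: $\operatorname{SS}_F(\mathbf{I},\gamma)\geq \LS_F(\mathbf{I})$, and $\operatorname{SS}_F(\mathbf{I},\gamma)\leq e^\gamma\cdot\operatorname{SS}_F(\mathbf{I}',\gamma)$ for $\mathbf{I}\sim\mathbf{I}'$. Both follow directly from the definition as a maximum over all instances, together with two properties of $\operatorname{dist}$. I take $\operatorname{dist}(\mathbf{I},\mathbf{I}')$ to be the shortest-path distance in the graph on $\mathcal{I}$ whose edges are the neighboring pairs. Then $\operatorname{dist}(\mathbf{I},\mathbf{I})=0$, the distance is symmetric, and it satisfies the triangle inequality. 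In particular, for any $\mathbf{J}$ and any $\mathbf{I}\sim\mathbf{I}'$,
\[
\operatorname{dist}(\mathbf{I}',\mathbf{J})\leq \operatorname{dist}(\mathbf{I},\mathbf{J})+1 .
\]

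For the lower bound, I will restrict the maximum to the single candidate $\mathbf{I}'=\mathbf{I}$. The distance is $0$, so the exponential factor equals $1$, and therefore
\[
\operatorname{SS}_F(\mathbf{I},\gamma)\geq B(\mathbf{I})\geq \LS_F(\mathbf{I}),
\]
where the last inequality is the assumption on the upper bound function $B$.

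For the smoothness property, fix neighbors $\mathbf{I}\sim\mathbf{I}'$ and take an arbitrary instance $\mathbf{J}$. By the triangle inequality above, $-\gamma\operatorname{dist}(\mathbf{I},\mathbf{J})\leq \gamma-\gamma\operatorname{dist}(\mathbf{I}',\mathbf{J})$. Since $B\geq 0$, this gives
\[
B(\mathbf{J})e^{-\gamma\operatorname{dist}(\mathbf{I},\mathbf{J})}\leq e^{\gamma}\, B(\mathbf{J})e^{-\gamma\operatorname{dist}(\mathbf{I}',\mathbf{J})}\leq e^\gamma\operatorname{SS}_F(\mathbf{I}',\gamma).
\]
Taking the supremum over $\mathbf{J}$ on the left yields the claim.

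The only point needing care is well-definedness, which is where I expect any obstacle to lie. The argument needs $B\geq 0$ (true because it dominates a norm) and an interpretation of the maximum as a supremum, possibly infinite. The inequalities hold in the extended reals, so I will state them in that form. If $\mathbf{I}'$ is unreachable from $\mathbf{J}$, both distances are infinite and the corresponding terms are $0$, so those instances do not affect either bound.
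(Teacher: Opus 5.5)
Your proof is correct and is the standard argument of Nissim, Raskhodnikova and Smith; the paper does not reprove this lemma and simply cites that work. Both parts go through as you wrote them. Evaluating the maximum at $\mathbf{I}'=\mathbf{I}$ gives $\operatorname{SS}_F(\mathbf{I},\gamma)\geq B(\mathbf{I})\geq \LS_F(\mathbf{I})$, and the triangle inequality $\operatorname{dist}(\mathbf{I}',\mathbf{J})\leq \operatorname{dist}(\mathbf{I},\mathbf{J})+1$ for $\mathbf{I}\sim\mathbf{I}'$, together with $B\geq 0$ and the implicit $\gamma\geq 0$, yields the $e^\gamma$-smoothness.
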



\subsection{SJA Queries with User-Level DP} \label{sec:pre-def}

\textbf{SJA queries.} Consider a database schema $\mathcal{R}=\{R_1,\dots,R_s\}$ where each $R_i$ is a logical relation.
A multi-way join is defined as 
\[
J=R_1(\mathbf{x}_1)\Join R_2(\mathbf{x}_2)\Join\dots\Join R_s(\mathbf{x}_s)\,,
\]
where each $\mathbf{x}_i$ is a set of attributes.
The set of variables in the full join is denoted $var(J)=\mathbf{x}_1\cup\dots\cup\mathbf{x}_s$.

A database instance $\mathbf{I}$ defined over $\mathcal{R}$ consists of tables $\mathbf{I}(R_1),\dots,\mathbf{I}(R_s)$.
Correspondingly, the join result $J(\mathbf{I})$ is a flat table $T$ where each tuple takes value in $\operatorname{dom}(var(J))$. A Select-Join-Aggregate (SJA) query is defined by both the join $J$ and an aggregation function $f:\operatorname{dom}(var(J))\to \mathbb{R}_{\geq 0}$ that assigns a weight $f(t)=w_t$ to each join result $t\in J(\mathbf{I})$. Evaluating query $f$ on $J(\mathbf{I})$ gives the query result
\[
Q(\mathbf{I}):=f(T)=\sum_{t\in T} w_t\,,
\]
where $T=J(\mathbf{I})$. Our target is to answer scalar or vector SJA queries.
We refer to each join result $t\in T$ in the flat table as a \textit{join tuple} or an \textit{aggregation unit} interchangeably.
We also assume the range of $f$ is $w_t\in [0,1]$ for simplicity.
This can always be done through normalization: if $w_t\in [0, W]$, then we can consider an $f'=f/W$ and rescale the query result by $W$ to answer $f$.

\medskip
\noindent
\textbf{User-level DP for SJA queries.} Next, we define user-level DP in the context of SJA query processing. Recall from Section~\ref{sec:dp-definition} that in the DP definition, the concept of neighboring databases instances $\mathbf{I}\sim \mathbf{I}'$ is left abstract. 
In this paper, we adopt ``user-level DP with foreign-key constraints''~\cite{DBLP:journals/pvldb/KotsogiannisTHF19,DBLP:conf/sigmod/DongF0TM22,DBLP:journals/pacmmod/FangY24}.
Specifically, we assume that there is a primary private relation $R_p\in\mathcal{R}$ where each record $u\in \mathbf{I}(R_p)$ is a private user.
The set of all users is denoted as $U:=\mathbf{I}(R_p)$.

We say a record $r\in \mathbf{I}(R)$ is \textit{contributed} by user $u$, denoted $u\leadsto r$, following a recursive definition: 1) $u\leadsto u$; 2) if there is an $r'\in \mathbf{I}(R')$ such that $u\leadsto r'$, and there is a foreign key in $r$ that references the primary key in $r'$, then we have $r$ is also contributed by $u$, i.e., $u\leadsto r$.
Two instances $\mathbf{I}$ and $\mathbf{I'}$ are neighbors (i.e., $\mathbf{I}\sim \mathbf{I}'$), if there exists a user $u$ such that for each record $r$ that $\mathbf{I}$ and $\mathbf{I'}$ differ, we have $u\leadsto r$. We call such a user $u$ a {\it witness}.


Let $n=|J(\mathbf{I})|$ be the number of join results, and $m=|U|$ be the number of users.
Note that by our definition, a user may contribute to multiple tuples, and a tuple may be contributed by multiple users. We assume that each user can have at most $\Delta$ tuples, and that each tuple is contributed by $l$ users.
Note that this model generalizes DP definitions in the literature: 
when $l=1$, it degenerates to ``user-DP without self-joins''~\cite{DBLP:conf/nips/LevySAKKMS21,DBLP:conf/nips/GhaziK0MMZ23}; when $\Delta=l=1$, it reduces to tuple-level DP; without a constraint on $\Delta$ or $l$, it is user-level DP.




\subsection{Existing Solutions} \label{sec:pre-trunc}

Although there are generic mechanisms to enforce differential privacy, e.g., Laplace and Gaussian mechanisms reviewed in Section~\ref{sec:dp-definition}, they cannot be directly applied to the problem of SJA query processing under user-level DP as described in Section~\ref{sec:pre-def}, in which the problem setting incurs a prohibitively high global sensitivity (Definition~\ref{def:gs}). In particular, assume that a record can have up to $D$ matching partners in one join operation, and that the aggregate is COUNT. With $t$ joins, in the worst case, one user's record $u$ can contribute $O(D^t)$ join results, leading to a global sensitivity of $O(D^t)$, and, thus, overwhelming noise required to satisfy DP.




\medskip
\noindent
\textbf{Trucation~\cite{DBLP:conf/tcc/KasiviswanathanNRS13}.} To reduce the global sensitivity of the SJA query, a common technique is \textit{truncation}, which imposes a constraint on the maximum influence of a user on the query result.
Formally, given a \textit{truncation threshold} $\tau$, we compute the result of a truncated SJA query where the contribution of any user is capped to $\tau$, meaning that the global sensitivity of the truncated query is bounded by $\tau$~\cite{DBLP:conf/tcc/KasiviswanathanNRS13}. 
This step involves solving a linear program~\cite{DBLP:conf/tcc/KasiviswanathanNRS13}, which minimizes the error after injecting noise to satisfy DP.


Clearly, the result of the truncated SJA query can differ from the original query, which we call the \textit{truncation bias}. Hence, the truncation threshold $\tau$ is a critical system parameter that balances truncation bias with DP noise: a larger $\tau$ reduces the truncation error, 
but at the same leads to a larger global sensitivity, and, thus, higher noise needed to satisfy DP, and vice versa.
The optimal value for $\tau$ depends upon the maximum contribution of a user in the dataset, called the {\it downward sensitivity}, defined as 
\[
\DS_F(\mathbf{I}):=\max_{\mathbf{I}'\subseteq {\mathbf{I}}:\mathbf{I}\sim \mathbf{I}'} \|F(\mathbf{I})-F(\mathbf{I}')\|\,.
\]

Note that choosing the truncation threshold $\tau$ using the exact downward sensitivity $\DS_F(\mathbf{I})$ would violate differential privacy~\cite{DBLP:conf/tcc/KasiviswanathanNRS13, DBLP:conf/sigmod/DongF0TM22}, since $\DS_F(\mathbf{I})$ itself may depend upon private information.  


\medskip
\noindent
\textbf{\textsf{R2T~\cite{DBLP:conf/sigmod/DongF0TM22}}.} 
The \textsf{R2T} algorithm chooses an appropriate trucation threshold $\tau$
by considering all the 
candidate $\tau$'s that are powers of $2$, each of which is allocated with a portion of the privacy budget. For every $\tau$, the algorithm solves the linear program in~\cite{DBLP:conf/tcc/KasiviswanathanNRS13}, whose target value gives the truncated query result. Then, \textsf{R2T} injects random noise according to the corresponding $\tau$'s to satisfy DP.
After subtracting the error bound, all of them become under-estimates with high probability, and the maximum among them provides a balance between the truncation error and the DP noise.
Since the candidate $\tau$'s are powers of $2$, the number of candidates is logarithmic; accordingly, the authors of \cite{DBLP:conf/sigmod/DongF0TM22} prove that the utility guarantee of \textsf{R2T} matches the instance-optimal lower bound $\Omega(\DS(\mathbf{I}))$ up to polylogarithmic factors.
 
\medskip
\noindent
\textbf{\textsf{PMSJA~\cite{DBLP:journals/pacmmod/0007S023}}.} \textsf{R2T} is designed for {\it scalar} queries that return a single value. For SJA queries that return {\it vectors} (e.g., GROUP BY aggregates), a natural idea is to exploit the composition property in Lemma~\ref{lemma:composition} to split the privacy budget among the vector entries. In particular, 
using advanced composition in Lemma~\ref{lemma:composition}, the $L_2$ error of the noisy result is then $\tilde{O}\left(\sqrt{d}\cdot \sqrt{\sum_{i=1}^d \DS_i^2(\mathbf{I})}\right)$, where $\DS_i(\mathbf{I})$ is the maximum contribution of an user to the $i$-th entry.
This is optimal if the user with maximum contribution is the same across all entries.
However, if each user contributes to a different entry, this error becomes $\tilde{O}(d)$, whereas the optimal bound is $\tilde{O}(\sqrt{d})$ as $\DS_F(\mathbf{I})=1$.
To address this issue, the PMSJA algorithm~\cite{DBLP:journals/pacmmod/0007S023} resorts to using a quadratically constrained quadratic program (QCQP) combined with local-sensitivity based mechanisms to achieve instance-optimal error $\tilde{O}(\sqrt{d}\cdot \DS_{F}(\mathbf{I}))$, which we revisit in Section~\ref{sec:vec}.

\medskip
\noindent
\textbf{\textsf{S\&E}~\cite{DBLP:journals/pacmmod/FangY24}}. The user-level DP solutions presented so far for SJA queries all involve solving expensive optimization programs, which are inherently difficult to scale. To reduce the computational costs, \textit{sample-and-explore} (\textsf{S\&E}) applies sampling to the set of users.

Specifically, {\sf S\&E} takes as input the number of iterations; in each iteration, it samples a user uniformly at random. The algorithm then builds an ``explored'' instance containing all the aggregation units where the sampled user appears first.
After that, a DP mechanism (e.g. $\textsf{R2T}$ or $\textsf{PMSJA}$) is applied on the explored instance. To obtain the estimator for an iteration, the output is scaled back by the number of users $m$, which is also the inverse of the probability that each aggregation unit is sampled. 
{\sf S\&E} then takes the average over multiple iterations to reduce the sample variance.

The privacy guarantee of {\sf S\&E} relies on a given upper bound $C$ on the number of \textit{collaborators} of each user. Two users \textit{collaborate} if they contribute to the same aggregation unit. 
The intuition is that on a neighboring dataset with an extra user, the explored instance will be the same unless the sampled user happens to be the witness or one of their collaborators, which occurs with probability approximately $C/m$. 

A major issue for {\sf S\&E} is that its privacy cost is linear to $C$.
which leads to large errors as we show in Section~\ref{sec:sca-analysis}.
The proposed approach \textsf{DP-S4S}, presented in the next section, 
overcomes this problem. In particular, our amplification bound has no assumption on $C$, and achieves a strict amplification in the privacy budget. As a result, the result utility of \textsf{DP-S4S} may even surpass that of the {\sf R2T} mechanism \textit{without sampling} in some cases, as shown in our experiments in Section~\ref{sec:exp}.

%% file: sections/sampling-for-a-single-query.tex
\section{\textsf{DP-S4S} for Scalar Queries}\label{sec:sca}

\begin{figure}[t]
\centering

\begin{minipage}[t]{0.3\linewidth}\vspace{0pt}\centering
  \input{fig/TriCnt}
  \captionsetup{skip=2mm}
  \caption{A social network $G$.}
  \label{fig:pre-graph}
\end{minipage}
\hfill
\begin{minipage}[t]{0.38\linewidth}\vspace{0pt}\centering\small
  \begin{tabular}{|c|c|c|}\hline
  \multirow{2}{*}{\bf User} & \multicolumn{2}{|c|}{\bf Triangle Count} \\\cline{2-3}
   & before trunc. & after trunc. \\\hline
   A, G & 1 & 1 \\\hline
   B, F & 2 & 2 \\\hline
   C, D, E & 3 & 2 \\\hline
  \end{tabular}
  \vspace{2mm}
  \captionsetup{type=table,skip=0mm}
  \captionof{table}{Users’ triangle counts in $G$.}
  \label{tab:pre-tri-cnt}
\end{minipage}
\hfill
\begin{minipage}[t]{0.3\linewidth}\vspace{0pt}\centering\small
  \begin{tabular}{|c|c|}\hline
  {\bf Triangle} & {\bf Weight} \\\hline
  $\triangle_{\text{ABC}},\ \triangle_{\text{EFG}}$ & 1 \\\hline
  $\triangle_{\text{BCD}},\ \triangle_{\text{DEF}}$ & 1 \\\hline
  $\triangle_{\text{CDE}}$ & 0 \\\hline
  \end{tabular}
  \vspace{2mm}
  \captionsetup{type=table,skip=0mm}
  \captionof{table}{A truncation with $\tau=2$.}
  \label{tab:pre-thresh}
\end{minipage}

\end{figure}


This section presents \textsf{DP-S4S} for scalar SJA queries that return a single aggregate value. The extension to vector SJA queries is deferred to Section~\ref{sec:vec}. The main idea is that instead of sampling users as in \textsf{S\&E}, \textsf{DP-S4S} samples {\it aggregate units}, i.e., the join tuples upon which the scalar query is aggregating. 

For example, consider a triangle count query with pure DP in a social network $G$, illustrated in Figure~\ref{fig:pre-graph}, where each vertex is a user. Table~\ref{tab:pre-tri-cnt} lists the number of triangles contributed by each user, before and after applying a truncation threshold of 2. The $5$ triangles shown in Table~\ref{tab:pre-thresh} are the aggregate units. \textsf{DP-S4S} first constructs a sample set $S$ of such aggregate units, and then applies truncations on $S$ to reduce the $L_1$ sensitivity of the aggregation on $S$; after that, it generates the aggregation result on the truncated $S$, injects Laplace noise, and then returns the noisy result scaled up by a factor decided by the sampling rate. The truncation threshold $\tau$ used on $S$ is decided using a differentially private algorithm, which ensures that the choice of $\tau$ does not lead to a privacy breach. 

In the following, we first focus on pure DP (i.e., ($\varepsilon$, 0)-DP) in Sections~\ref{sec:sca-sample}-\ref{sec:sca-analysis}, and extend the result to approximate DP in Section~\ref{sec:rdp-sample}. In particular, Section~\ref{sec:sca-sample} explains the truncation of a sample set $S$ given a threshold $\tau$. Section~\ref{sec:sca-choice} focuses on the DP algorithm for choosing $\tau$. Section~\ref{sec:sca-analysis} analyzes the theoretical guarantee of \textsf{DP-S4S} and compares against existing solutions.

\subsection{Sample and Truncate} \label{sec:sca-sample}

Recall from Section~\ref{sec:pre-def} that for a query $f$ over the join tuples $T$, our target is to compute $f(T):=\sum_{t\in T} w_t$.
Similar to~\citet{DBLP:conf/sigmod/DongF0TM22}, the proposed mechanism \textsf{DP-S4S} applies after computing all the joins and before the final aggregation.
Specifically, \textsf{DP-S4S} samples each join tuple $t\in T$ with probability $q$ to form $S$, and rescales the aggregation result $f(S)$ by $1/q$ as an estimator for $f(T)$. Thus, the problem reduces to how to release $f(S)$ accurately with user-level DP. Given a truncation threshold $\tau$, we can simply apply the steps in \cite{DBLP:conf/tcc/KasiviswanathanNRS13, DBLP:conf/sigmod/DongF0TM22} reviewed in Section~\ref{sec:pre-def}, i.e., solving a linear program to compute the optimal truncation for each user, and then applying the Laplace mechanism to achieve pure DP. 

Formally, Algorithm~\ref{algo:sample-truncate} presents the above algorithm for releasing $f(T,\tau)$ for a given threshold $\tau$ under ($\varepsilon$, 0)-DP. The linear program for computing optimal truncation~\cite{DBLP:conf/tcc/KasiviswanathanNRS13,DBLP:conf/sigmod/DongF0TM22} is as follows.

\begin{align}\label{eq:r2t}
&\underset{x_t}{\text{maximize}} && \sum_{t\in S} x_t \\ 
&\text{subject to} && \sum_{t:u\leadsto t} x_t \leq \tau, &&\forall  u\in U \nonumber\\
& && 0\leq x_t\leq  w_t && \forall t\in S \nonumber
\end{align}
where each variable $x_t\in[0,w_t]$ is the fractional contribution of join tuple $t$ in the sample set $S$ to the truncated query, subject to the constraint that the total contribution of $u$ does not exceed the truncation threshold $\tau$.
For a given query, minimizing the truncation bias is equivalent to maximizing the remaining weights; thus, the optimal target value $\bar{f}(S,\tau):=\sum_{t\in S} x_t^*$ is used as the truncated query result.
Note that for any $\tau\geq \DS({\mathbf{I}})$ (see Section~\ref{sec:pre-trunc}), the optimal solution is $x_t^*=w_t$ and we have $\bar{f}(S,\tau)=\sum_{t\in S} w_t=f(S)$.

After obtaining the result of the truncated query, the algorithm proceeds to add Laplace noise with scale $\tau/\varepsilon$ (Line 5), and then scale the result back by $1/q$ to correct for sampling (Line 6). 

Although Algorithm~\ref{algo:sample-truncate} is simple, its privacy analysis is highly non-trivial.
A major contribution of this work is to show that sampling join tuples (rather than users), followed by an $\varepsilon$-user-level DP mechanism on the sample, preserves a stronger $\varepsilon'$-user-level DP guarantee, for $\varepsilon'\leq \varepsilon$.
Note that prior work has only considered either sampling users for user-level DP~\cite{DBLP:journals/pacmmod/FangY24} or tuples for tuple-level DP~\cite{DBLP:conf/nips/BalleBG18}, but not a combination of both.
In addition, when the subsequent pure DP mechanism is truncation-based, we can further achieve an \textit{amplified} privacy guarantee due to randomness in subsampling.
Formally, we have the following result~\ref{lemma:pure}.



\begin{algorithm}[t]
\caption{Sample-and-Truncate for pure DP}\label{algo:sample-truncate}
\begin{algorithmic}[1]
\Require Join tuples $T$, query $f$, truncation threshold $\tau$, sample rate $q$, privacy parameter $\varepsilon$
\Ensure Private query result $\hat{f}(T,\tau)$
\State $S\gets \varnothing$
\ForEach{$t\in T$}
    \State Sample $t$ into $S$ with probability $q$
\EndFor
\State $\bar{f}(S,\tau)\gets$ Optimal target value to problem \eqref{eq:r2t} on $S$, $\tau$
\State Release $\hat{f}(S,\tau)\gets\bar{f}(S,\tau)+\Lap(\frac{\tau}{\varepsilon})$
\State Report $\hat{f}(T,\tau)\gets\frac{1}{q}\cdot \hat{f}(S,\tau)$
\end{algorithmic}
\end{algorithm}

\begin{lemma}[Sample-and-Truncate Amplification]\label{lemma:pure}
Algorithm~\ref{algo:sample-truncate} satisfies $\varepsilon'$-user-level DP for 
\begin{align*}
\varepsilon'(\varepsilon,\tau,\Delta,q)=\ln&\bigg(\max\bigg\{\sum_{k=0}^{\lfloor\tau\rfloor} p_k\exp\left(\frac{k}{\tau}\varepsilon\right)+\Pr[\operatorname{Bin}(\Delta,q)>\tau]\exp(\varepsilon),\\
&\frac{1}{\sum_{k=0}^{\lfloor\tau\rfloor} p_k\exp\left(-\frac{k}{\tau}\varepsilon\right)+\Pr[\operatorname{Bin}(\Delta,q)>\tau]\exp(-\varepsilon)}\bigg\}\bigg)\,,
\end{align*}
where $p_k=\Pr[\operatorname{Bin}(\Delta,q)=k]=\binom{\Delta}{k}q^k(1-q)^{\Delta-k}$.
\end{lemma}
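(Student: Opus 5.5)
Fix neighbours $\mathbf{I}\subseteq\mathbf{I}'$ in which $\mathbf{I}'$ additionally contains the witness $u$ and the records it contributes. Let $T\subseteq T'$ be the two join results; $D=T'\setminus T$ is the set of join tuples contributed by $u$, with $|D|\le\Delta$. The plan is to couple the two sampling processes. Sample the tuples of $T$ identically in both worlds, and sample the tuples of $D$ independently with rate $q$. Conditioned on the common sample $S\subseteq T$, the sample on $\mathbf{I}'$ is $S\cup D_S$, where $K:=|D_S|\sim\operatorname{Bin}(|D|,q)$ is independent of $S$. The output density on $\mathbf{I}$ is then $P(z)=\mathbb{E}_S[\,g_{S}(z)\,]$. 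On $\mathbf{I}'$ it is $P'(z)=\mathbb{E}_S\,\mathbb{E}_K[\,g_{S\cup D_S}(z)\,]$, where $g_A$ is the Laplace density with scale $\tau/\varepsilon$ centred at $\bar f(A,\tau)$. The final rescaling by $1/q$ is post-processing and can be ignored.

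The key step is a per-$k$ sensitivity bound on the LP value:
\[
0\le \bar f(S\cup D_S,\tau)-\bar f(S,\tau)\le \min\{K,\tau\}.
\]
The lower bound holds because adding tuples and the user $u$'s constraint only adds variables; any feasible $x$ for $S$ extends by zeros. For the upper bound, take an optimal solution on $S\cup D_S$ and restrict it to $S$. The constraints of users other than $u$ remain satisfied. The removed mass is $\sum_{t\in D_S}x_t$, which is at most $\sum_{t\in D_S} w_t\le K$ and also at most $\tau$ by $u$'s own constraint. One subtlety needs care: in the neighbouring-instance model, the records of $u$ may also change constraints of other users. Since every $t\in D$ satisfies $u\leadsto t$, all new tuples are in $D$, so it suffices to handle the users' constraints restricted to $S$. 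I will state this explicitly.

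The Laplace ratio then gives, pointwise in $z$,
\[
e^{-\varepsilon\min\{K,\tau\}/\tau}\,g_S(z)\le g_{S\cup D_S}(z)\le e^{\varepsilon\min\{K,\tau\}/\tau}\,g_S(z).
\]
Averaging over $K$ (independent of $S$) and then over $S$ yields $P'(z)\le \mathbb{E}_K[e^{\varepsilon\min\{K,\tau\}/\tau}]\,P(z)$ and $P'(z)\ge \mathbb{E}_K[e^{-\varepsilon\min\{K,\tau\}/\tau}]\,P(z)$. For an integer $K\le\lfloor\tau\rfloor$ we have $\min\{K,\tau\}=K$, and for $K>\tau$ it equals $\tau$. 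Hence
\[
\mathbb{E}_K\!\left[e^{\pm\varepsilon\min\{K,\tau\}/\tau}\right]=\sum_{k\le\lfloor\tau\rfloor}\Pr[K=k]\,e^{\pm k\varepsilon/\tau}+\Pr[K>\tau]\,e^{\pm\varepsilon}.
\]
Taking the logarithm of the larger of the upper bound and the reciprocal of the lower bound covers both directions of the DP inequality, since the roles of $\mathbf{I}$ and $\mathbf{I}'$ swap.

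The remaining step, which I expect to be the main obstacle, is replacing $\operatorname{Bin}(|D|,q)$ by $\operatorname{Bin}(\Delta,q)$. I will use a monotonicity/stochastic-dominance argument. The function $h(k)=e^{\varepsilon\min\{k,\tau\}/\tau}$ is nondecreasing, and $\operatorname{Bin}(|D|,q)$ is stochastically dominated by $\operatorname{Bin}(\Delta,q)$ when $|D|\le\Delta$, so $\mathbb{E}[h]$ is maximised at $|D|=\Delta$. Symmetrically, $e^{-\varepsilon\min\{k,\tau\}/\tau}$ is nonincreasing, so its expectation is minimised at $|D|=\Delta$, which maximises the reciprocal. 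This gives exactly the stated $\varepsilon'(\varepsilon,\tau,\Delta,q)$. One further point needs checking: $\Delta$ must bound $|D|$ even when $u$'s tuples on $\mathbf{I}'$ are counted. This follows from the model's assumption that each user has at most $\Delta$ tuples.
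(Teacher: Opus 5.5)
Your proposal is correct and follows essentially the same route as the paper. Both arguments condition on the common subsample $S\subseteq T$, bound $|\bar f(S\cup S_u,\tau)-\bar f(S,\tau)|\le\min\{\tau,|S_u|\}$ by zero-extending and restricting LP solutions, apply the Laplace density ratio, and average over $|S_u|\sim\operatorname{Bin}(\Delta,q)$. The differences are minor: you average the pointwise bound directly rather than using the paper's ratio-of-sums/$\arg\max_S$ step, and you justify the reduction to $|T_u|=\Delta$ by stochastic dominance, which the paper simply asserts as WLOG.
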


\begin{proof}
We will show that $\M(T)=\hat{f}(S,\tau)$ is private, as the final step is a post-processing.
Denote the independent Bernoulli (a.k.a. Poisson) subsampling process by $\lambda$, then the probability of observing output $z$ can be expressed as\footnote{We abuse the notation of $\Pr[\cdot]$ for probability density as with convention.} 
\[
\Pr[\M(T)=z]=\sum_{S\subseteq T} \Pr[\lambda(T)=S]\cdot \Pr[\hat{f}(S,\tau)=z]\,.
\]
Consider a user-level neighbor $T'=T\cup T_u$ where the witness is $u$ and $T_u$ are the records contributed by $u$.
WLOG, we assume $|T_u|=\Delta$, which brings the maximum divergence. 
Due to independent sampling,
\[
\Pr[\M(T')=z]= \sum_{S\subseteq T} \Pr[\lambda(T)=S]\ \cdot \sum_{S_u\subseteq T_u} \Pr[\lambda(T_u)=S_u] \cdot \Pr[\hat{f}(S\cup S_u,\tau)=z]\,.
\]

We will use the fact that for $p_i, A_i, B_i>0$, 
\[\frac{\sum_ip_iA_i}{\sum_ip_iB_i}\leq \max_i\frac{A_i}{B_i}\,,\]
which can be verified by $A_i\leq B_i\cdot \max_i\frac{A_i}{B_i}$ for any $i$.
Borrowing ideas from~\cite{DBLP:journals/pvldb/JiangLYX24}, we consider
\begin{align*}
\frac{\Pr[\M(T)=z]}{\Pr[\M(T')=z]}=&\frac{\sum_{S\subseteq T}\Pr[\lambda(T)=S]\cdot \Pr[\hat{f}(S,\tau)=z]}{\sum_{S\subseteq T}\Pr[\lambda(T)=S]\cdot \sum_{S_u\subseteq T_u}\Pr[\lambda(T_u)=S_u] \Pr[\hat{f}(S\cup S_u,\tau)=z]}\\
\leq & \max_{S\subseteq T}\frac{\Pr[\hat{f}(S,\tau)=z]}{\sum_{S_u\subseteq T_u}\Pr[\lambda(T_u)=S_u]\Pr[\hat{f}(S\cup S_u,\tau)=z]} \\
=&\frac{1}{\sum_{S_u\subseteq T_u}\Pr[\lambda(T_u)=S_u]\frac{\Pr[\hat{f}(S^*\cup S_u,\tau)=z]}{\Pr[\hat{f}(S^*,\tau)=z]}}\,,
\end{align*}
where $S^*=\arg\max_{S\subseteq T}\frac{\Pr[\hat{f}(S,\tau)=z]}{\sum_{S_u\subseteq T_u}\Pr[\lambda(T_u)=S_u]\Pr[\hat{f}(S\cup S_u,\tau)=z]}$.
Consider the optimal solution $\{x_t\}$ that achieves $S^*$, it is also feasible for $S^*\cup S_u$, thus $\bar{f}(S^*,\tau)\leq \bar{f}(S^*\cup S_u,\tau)$. On the other hand, given an optimal solution to $S^*\cup S_u$, we can obtain a feasible solution for $S^*$ by setting $x_t'=0$ for $t\in S_u$. In doing so, the weight reduced is $\sum_{t\in S_u} x_t\leq \min\{\tau, |S_u|\}$, since $w_t \in [0, 1]$ (see Section~\ref{sec:pre-def}). Thus, we also have $\bar{f}(S^*\cup S_u,\tau)\leq \bar{f}(S^*)+\min\{\tau,|S_u|\}$. In summary, we have $|\bar{f}(S^*\cup S_u,\tau)-\bar{f}(S^*)|\leq \min\{\tau,|S_u|\}$ for any $S_u$.

Based on the above fact, we have
\[
\frac{\Pr[\hat{f}(S^*\cup S_u,\tau)=z]}{\Pr[\hat{f}(S^*,\tau)=z]} \geq \exp\left(-\frac{\varepsilon}{\tau}\left|\bar{f}(S^*\cup S_u,\tau)-\bar{f}(S^*,\tau)\right|\right)
\geq \exp\left(-\frac{\min\{\tau, |S_u|\}}{\tau}\varepsilon\right)\,,
\]
thus, 
\[
\frac{\Pr[\M(T)=z]}{\Pr[\M(T')=z]}\leq \frac{1}{\sum_{k=0}^\Delta p_k\cdot \exp(-\frac{\min\{\tau,k\}}{\tau}\varepsilon)}\,,
\]
where $p_k=\Pr[\operatorname{Bin}(\Delta,q)=k]=\binom{\Delta}{k}q^k(1-q)^{\Delta-k}$.
Similarly,
\begin{align*}
\frac{\Pr[\M(T')=z]}{\Pr[\M(T)=z]}\leq & \max_{S\subseteq T}\frac{\sum_{S_u\subseteq T_u}\Pr[\lambda(T_u)=S_u]\Pr[\hat{f}(S\cup S_u,\tau)=z]}{\Pr[\hat{f}(S,\tau)=z]} \\
=&\sum_{S_u\subseteq T_u}\Pr[\lambda(T_u)=S_u]\frac{\Pr[\hat{f}(S^\circ\cup S_u,\tau)=z]}{\Pr[\hat{f}(S^\circ,\tau)=z]}\\
\leq &\sum_{k=0}^\Delta p_k\cdot \exp\left(\frac{\varepsilon}{\tau} \min\{\tau, k\}\right)\,,
\end{align*}
where $S^\circ=\arg\max_{S\subseteq T}\frac{\sum_{S_u\subseteq T_u}\Pr[\lambda(T_u)=S_u]\Pr[\hat{f}(S\cup S_u,\tau)=z]}{\Pr[\hat{f}(S,\tau)=z]}$.

Putting it together, we get Lemma~\ref{lemma:pure}.
\end{proof}

\noindent
\textbf{Optimization with a PGF trick.} Lemma~\ref{lemma:pure} requires computing $\sum_{k=0}^{\lfloor\tau\rfloor} p_k\exp\left(\frac{k}{\tau}\varepsilon\right)$ and $\sum_{k=0}^{\lfloor\tau\rfloor} p_k\exp\left(-\frac{k}{\tau}\varepsilon\right)$, which takes $O(\tau)=O(\Delta)$ time, which can be expensive when $\tau$ is large.
We show that with the help of probability-generating function (PGF), we can obtain a numerically stable result in $O(1)$ time, assuming the CDF function of a binomial distribution can be evaluated in $O(1)$.

Let $K\sim \operatorname{Bin}(\Delta,q)$, consider the PGF of $K$, defined as
\[
G(z):=\mathbf{E}[z^K]=\sum_{k=0}^\Delta p_k\cdot z^k=\sum_{k=0}^\Delta \binom{\Delta}{k} (1-q)^{\Delta-k}(qz)^k=(1-q+qz)^\Delta\,.
\]
WLOG, we assume an integer $\tau$.
Then the target is to compute the sum of the first $\tau+1$ terms in $G(z)$ for $z=\exp(\pm\frac{\varepsilon}{\tau})$. 
Let $A=1-q+qz$. We divide each summand by $A^\Delta$. Then, the $k$-th term becomes
\[
\binom{\Delta}{k} \left(\frac{1-q}{A}\right)^{\Delta-k}\left(\frac{qz}{A}\right)^k =  \binom{\Delta}{k} \left(1-\frac{qz}{A}\right)^{\Delta-k}\left(\frac{qz}{A}\right)^k
= \Pr\left[\operatorname{Bin}(\Delta, \frac{qz}{A})=k\right]\,,
\]
which is exactly the probability of another binomial distribution.
Taking the sum is equivalent to querying its CDF.
Accordingly, we present in Lemma~\ref{lemma:simple} a faster computation of the amplification. 

\begin{lemma}[Simplified Computation]\label{lemma:simple}
Algorithm~\ref{algo:sample-truncate} satisfies $\varepsilon'$-user-level DP for 
\begin{align*}
\varepsilon'(\varepsilon,\tau,\Delta,q)
& = \max\bigg\{(\operatorname{logcdf}(\tau,\Delta,q_1) + \Delta\cdot  \log A_1) \oplus (\operatorname{logsf}(\tau,\Delta,q)+\varepsilon),\\
& \qquad \qquad -\bigg((\operatorname{logcdf}(\tau,\Delta,q_2)+\Delta\cdot \log A_2) \oplus(
\operatorname{logsf}(\tau,\Delta,q)-\varepsilon)\bigg)\bigg\}\,,
\end{align*}
where $\oplus$ denotes the $\operatorname{LogSumExp}$ function, $\operatorname{logcdf}$ and $\operatorname{logsf}$ are the cumulative distribution function and survival function for the binomial distribution respectively, $A_1=1-q+q\cdot\exp(\varepsilon/\tau)$, $q_1=q\cdot \exp(\varepsilon/\tau)/A_1$; $A_2=1-q+q\cdot\exp(-\varepsilon/\tau)$, $q_2=q\cdot \exp(-\varepsilon/\tau)/A_2$.
\end{lemma}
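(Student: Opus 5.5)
The plan is to show that the expression in Lemma~\ref{lemma:simple} equals, in log-space, the quantity $\varepsilon'$ of Lemma~\ref{lemma:pure}. Privacy then follows immediately from Lemma~\ref{lemma:pure}, so all the work is an algebraic identity. Take $\tau$ to be an integer; for non-integer $\tau$, replace $\tau$ by $\lfloor\tau\rfloor$ in the CDF/SF arguments and keep $\tau$ in the exponents. Write the two quantities inside the $\max$ of Lemma~\ref{lemma:pure} as
\[
X_{\pm}:=\sum_{k=0}^{\tau} p_k\exp\left(\pm\frac{k}{\tau}\varepsilon\right)+\Pr[\operatorname{Bin}(\Delta,q)>\tau]\exp(\pm\varepsilon).
\]
Then $\varepsilon'=\max\{\ln X_+,-\ln X_-\}$, since $\ln(1/X_-)=-\ln X_-$. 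It therefore suffices to prove
\[
\ln X_+=(\operatorname{logcdf}(\tau,\Delta,q_1)+\Delta\log A_1)\oplus(\operatorname{logsf}(\tau,\Delta,q)+\varepsilon),
\]
together with the analogous identity for $X_-$ with $A_2,q_2,-\varepsilon$.

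The key step is the PGF rewriting already sketched before the lemma. Fix $z=\exp(\varepsilon/\tau)$, so $A_1=1-q+qz$ and $q_1=qz/A_1$. One checks that $1-q_1=(1-q)/A_1$, and that $q_1\in(0,1)$ because $A_1>qz>0$. Hence for every $k$,
\[
p_kz^k=\binom{\Delta}{k}(1-q)^{\Delta-k}(qz)^k=A_1^{\Delta}\binom{\Delta}{k}(1-q_1)^{\Delta-k}q_1^k=A_1^\Delta\Pr[\operatorname{Bin}(\Delta,q_1)=k].
\]
Summing over $k=0,\dots,\tau$ gives
\[
\sum_{k\le\tau}p_k e^{k\varepsilon/\tau}=A_1^\Delta\cdot\operatorname{cdf}(\tau,\Delta,q_1),
\]
whose logarithm is $\operatorname{logcdf}(\tau,\Delta,q_1)+\Delta\log A_1$. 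The tail term has logarithm $\operatorname{logsf}(\tau,\Delta,q)+\varepsilon$, where the survival function is $\Pr[\operatorname{Bin}>\tau]$. Since $\ln(a+b)=\ln a\oplus\ln b$, this gives the identity for $X_+$. The same computation with $z=\exp(-\varepsilon/\tau)$ gives the identity for $X_-$; there $A_2\in(0,1]$ and $q_2\in(0,1)$, so everything is well defined.

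I expect no real obstacle; the only care points are bookkeeping. First, the convention that $\operatorname{cdf}(\tau,\cdot)$ includes $k=\tau$ while $\operatorname{sf}$ is strict must match the split $\min\{\tau,k\}$ used in Lemma~\ref{lemma:pure}. Second, the degenerate cases where a term is zero (e.g., $\tau\ge\Delta$, so the SF vanishes) should be handled by the $\oplus$ convention $\log 0=-\infty$. As a remark, this form requires only $O(1)$ CDF evaluations and avoids overflow of $A^\Delta$ because the computation is carried out in log-space.
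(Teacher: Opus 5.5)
Your proposal is correct and is essentially the paper's argument. The paper likewise assumes integer $\tau$ and takes $z=\exp(\pm\varepsilon/\tau)$. It rewrites each summand as $p_k z^k = A^{\Delta}\Pr[\operatorname{Bin}(\Delta, qz/A)=k]$, so the truncated sum becomes $A^{\Delta}$ times a binomial CDF, and then combines this with the tail term of Lemma~\ref{lemma:pure} in log-space. Your extra bookkeeping (non-integer $\tau$ via $\lfloor\tau\rfloor$, the $k=\tau$ boundary, and $\log 0=-\infty$ for vanishing terms) is consistent with that derivation.
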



\subsection{Truncation Threshold Selection} \label{sec:sca-choice}

The Sample-and-Truncate mechanism presented in Algorithm~\ref{algo:sample-truncate} requires the knowledge of the truncation threshold $\tau$. As explained in Section~\ref{sec:pre-trunc}, choosing an appropriate $\tau$ under user-level DP is highly non-trivial, and it is a major contribution of \textsf{R2T}~\cite{DBLP:conf/sigmod/DongF0TM22}. In this subsection, we carefully combine the doubling research solution in \cite{DBLP:conf/sigmod/DongF0TM22} to Sample-and-Truncate, leading to the complete \textsf{DP-S4S} for scalar SJA query with pure DP in Algorithm~\ref{algo:sample-r2t}.

\begin{algorithm}[t]
\caption{\textsf{DP-S4S for scalar SJA and pure DP}}\label{algo:sample-r2t}
\begin{algorithmic}[1]
\Require  Join tuples $T$, query $f$, sample rate $q$, privacy parameter $\varepsilon$, tuple bound $\Delta$, probability $\beta$
\Ensure Private query result $\hat{f}(T)$
\State Compute sample set $S$ with Lines 1-3 of Algorithm~\ref{algo:sample-truncate}.
\State $\varepsilon_0\gets\varepsilon$
\State $L\gets \lfloor \log\Delta\rfloor+1$
\For{$i\gets L$ downto $1$}
    \State $\tau_i\gets 2^{i-1}$, $\varepsilon_i\gets \varepsilon_{0} / i$
    \State $\bar{f}(S, \tau_i)\gets$ Optimal target value to problem~\eqref{eq:r2t} on $S,\tau_i$
    \State $\hat{f}(S,\tau_i)\gets \bar{f}(S,\tau_i)+\Lap(\frac{\tau_i}{\varepsilon_i})$
    \State $\tilde{f}(S,\tau_i)\gets \hat{f}(S,\tau_i)-\frac{\tau_i}{\varepsilon_i}\ln\frac{3L}{\beta}$
    \State $\varepsilon_i'\gets \varepsilon'(\varepsilon_i, \tau_i,\Delta,q)$ by Lemma~\ref{lemma:simple}
    \State $\varepsilon_0\gets \varepsilon_0-\varepsilon_i'$
\EndFor
\State Report $\hat{f}(T)\gets\frac{1}{q}\cdot \max_i \tilde{f}(S,\tau_i)$
\end{algorithmic}
\end{algorithm}

Recall from Lemma~\ref{lemma:pure} that $\varepsilon'\leq \varepsilon$, and that a mechanism on the sampled join tuples preserves user-level pure DP. Therefore, a naive solution is to allocate $\varepsilon_i=\varepsilon/L$ to each iteration. However, this approach fails to exploit the amplification from combining a truncation mechanism with subsampling.
Ideally, one wish to adopt some $\varepsilon^*>\varepsilon$ such that the amplified $\varepsilon'(\varepsilon^*,\tau,\Delta,q)\leq \varepsilon$.
While the amplified $\varepsilon'(\varepsilon,\tau,\Delta,q)$ is easily computable for a given $\varepsilon$, finding the optimal $\varepsilon^*$ involves a binary search.
Observe from Lemma~\ref{lemma:pure} that the amplification effect is more significant for large $\tau$'s. Accordingly, we adopt an \textit{adaptive budget allocation} strategy in \textsf{DP-S4S}.
In particular, we run the $L$ subroutines in descending order of $\tau_i$ (Line 4 of Algorithm~\ref{algo:sample-r2t}). For each $\tau_i$, we allocate a target privacy consumption $\varepsilon_i$ inversely proportional to the number of remaining subproblems, i.e.~applying basic composition in Lemma~\ref{lemma:composition}.
After obtaining the private result, we only charge its actual consumption from the remaining budgets (Lines 9-10). Note that the algorithm also takes as input a probability $\beta$ (used in Line 8), which is important for its utility analysis, presented in the next subsection.


\medskip
\noindent 
{\bf Remark.} In \cite{DBLP:journals/pacmmod/FangY24}, the authors also propose a faster variant of \textsf{S\&E} for scalar queries that explicitly requires that a good truncation threshold $\tau$ be given in advance; in their experiments, $\tau$ is tuned from data, which might leak private information according to \cite{DBLP:conf/sigmod/DongF0TM22}. In this paper, we follow \cite{DBLP:conf/sigmod/DongF0TM22} to properly select $\tau$ with differential privacy, and do not consider this variant of \textsf{S\&E}.




\subsection{Utility Analysis} \label{sec:sca-analysis}

The following lemma establishes the correctness of \textsf{DP-S4S} for scalar SJA queries under user-level pure DP, as well as its result utility guarantee.
The error bound covers both sampling error and noise injected to satisfy DP.

\begin{lemma}\label{lemma:s4s-s}
Algorithm~\ref{algo:sample-r2t} satisfies $(\varepsilon,0)$-DP.
Let $n$ be the number of records, $m$ be the number of users and $q$ be the sample rate.
With probability $1-\beta$, its error is 
\[
|\hat{f}(T)-f(T)|=\tilde{O}\left(\sqrt{\frac{n}{q}}+\frac{1}{q}+\frac{1}{\varepsilon_{i^*}}\left(\tau^*(T)+\sqrt{\frac{\tau^*(T)}{q}}\right)\right)\,,
\]
where $\tilde{O}(\cdot)$ omits constants and polylogarithmic factors in $m$ and $1/\beta$; $\tau^*(T):=\max_{u\in U}\sum_{t\in T}\mathbf{1}[u\leadsto t]$ is the maximum number of tuples a user contribute to in $T$; and $\varepsilon_{i^*}\geq \varepsilon/L$ is the 
allocated noise‑calibration budget 
where $L=\lfloor\log\Delta\rfloor+1$.
\end{lemma}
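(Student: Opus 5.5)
The proof splits into a privacy part and a utility part.

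\textbf{Privacy.} Algorithm~\ref{algo:sample-r2t} draws the sample $S$ once and then runs the $L$ Sample-and-Truncate subroutines on that same $S$. So we cannot apply Lemma~\ref{lemma:pure} to each iteration and compose blindly. Instead we view the whole pipeline, conditioned on $S$, as a single mechanism. It releases the vector $(\hat f(S,\tau_i))_i$, whose density factorizes into Laplace densities. We then repeat the coupling argument in the proof of Lemma~\ref{lemma:pure}. For each added subset $S_u$ with $|S_u|=k$, the likelihood ratio of the vector output is at most $\prod_i \exp(\varepsilon_i\min\{\tau_i,k\}/\tau_i)$. Averaging over $k\sim\operatorname{Bin}(\Delta,q)$ gives a bound on the joint amplification.

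The adaptive accounting charges $\sum_i\varepsilon'(\varepsilon_i,\tau_i,\Delta,q)$. To justify this, we need that this sum upper bounds the joint amplified loss, i.e.\ that $\ln\mathbf{E}_k\prod_i e^{a_i(k)}\le\sum_i\ln\mathbf{E}_k e^{a_i(k)}$ (and its mirror for the other direction). This is the step I expect to be the main obstacle. The functions $a_i(k)=\varepsilon_i\min\{\tau_i,k\}/\tau_i$ are all nondecreasing in $k$, so Chebyshev's association inequality pushes in the \emph{wrong} direction. If that fails, the fallback is weaker but still suffices for the stated claim. The $\varepsilon_i$ are chosen data-independently, since they depend only on $\varepsilon,\tau_j,\Delta,q$. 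The accounting guarantees $\sum_i\varepsilon_i'\le\varepsilon$ because each $\varepsilon_i\le\varepsilon_0$ and $\varepsilon'\le\varepsilon_i$ by Lemma~\ref{lemma:pure}. Applying unamplified basic composition (Lemma~\ref{lemma:composition}) conditioned on $S$, together with the user-level sensitivity argument of Lemma~\ref{lemma:pure} with $q=1$, gives at least $\sum_i\varepsilon_i\le\varepsilon$ when $\varepsilon_i=\varepsilon/L$. I would attempt the amplified version by proving that the adaptive charging telescopes. The last step of the privacy part is post-processing (max and rescaling).

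\textbf{Utility.} Decompose
\[
|\hat f(T)-f(T)|\le \tfrac1q|\max_i\tilde f(S,\tau_i)-f(S)|+|\tfrac1q f(S)-f(T)|.
\]

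The second term is a sum of independent bounded terms with variance at most $n/q$. Bernstein's inequality therefore gives $\tilde O(\sqrt{n/q}+1/q)$ with probability $1-\beta/3$.

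For the first term we follow R2T. By a union bound over the $L$ Laplace tails, with probability $1-\beta/3$ every $\tilde f(S,\tau_i)\le\bar f(S,\tau_i)\le f(S)$. Hence the max is an underestimate, and it is at least $\tilde f(S,\tau_{i^*})$ for $\tau_{i^*}$ the smallest power of two with $\tau_{i^*}\ge\tau^*(S)$. At that threshold there is no truncation bias, so the error is $O(\tau_{i^*}\ln(L/\beta)/\varepsilon_{i^*})=\tilde O(\tau^*(S)/\varepsilon_{i^*})$. Here $\varepsilon_{i^*}\ge\varepsilon/L$ because the remaining budget at any iteration is at least $\varepsilon(L-i+1)/L$, and inductively each budget is at least $\varepsilon/L$.

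Finally we bound $\tau^*(S)$. Each user's count in $S$ is $\operatorname{Bin}(\le\tau^*(T),q)$. A Chernoff bound plus a union bound over the $m$ users gives, with probability $1-\beta/3$,
\[
\tau^*(S)\le q\tau^*(T)+\tilde O\bigl(\sqrt{q\tau^*(T)}+1\bigr).
\]
Dividing by $q$ yields $\tfrac1{\varepsilon_{i^*}}\bigl(\tau^*(T)+\sqrt{\tau^*(T)/q}\bigr)$ plus an $\tilde O(1/(q\varepsilon_{i^*}))$ term. I would absorb that extra term into $\tilde O(1/q)$, treating $\varepsilon$ as a constant, as the statement implicitly does. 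A union bound over the three events completes the argument.
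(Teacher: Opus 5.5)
Your utility half follows the paper's proof step for step. Both use Bernstein for $|f(S)-qf(T)|$, a union bound over the $L$ Laplace tails so that every $\tilde f(S,\tau_i)$ underestimates $f(S)$, the smallest candidate $\tau_{i^*}\ge\tau^*(S)$, $\varepsilon_{i^*}\ge\varepsilon/L$ by induction, and Chernoff plus a union bound over the $m$ users. Your additive $\tilde O(1)$ in the bound on $\tau^*(S)$ is more careful than the paper's direct use of Lemma~\ref{lemma:chernoff}, which requires $t\le 4$. The resulting $1/(q\varepsilon_{i^*})$ term is not dominated by the stated terms in general, so absorbing it does rely on treating $\varepsilon$ as a constant.

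\textbf{The gap is in the privacy half, and your fallback does not close it.} The fallback proves $\varepsilon$-DP only for the fixed schedule $\varepsilon_i=\varepsilon/L$. Algorithm~\ref{algo:sample-r2t} instead sets $\varepsilon_i=\varepsilon_0/i$, with $\varepsilon_0$ reduced only by the amplified charges. Hence $\sum_i\varepsilon_i=\varepsilon+\sum_{i\ge 2}(\varepsilon_i-\varepsilon_i')>\varepsilon$ as soon as any iteration $i\ge2$ is amplified, and unamplified composition gives only $(\sum_i\varepsilon_i)$-DP.

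The telescoping you hope for is the inequality $\ln\mathbf{E}_K e^{A(K)}\le\sum_i\ln\mathbf{E}_K e^{a_i(K)}$ with $A=\sum_i a_i$. It is false, not merely unproven. As you observe, the $e^{a_i}$ are comonotone in $K$, so the reverse inequality holds, strictly once $L\ge 2$.

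Your joint bound is also not loose. Suppose $u$'s $\Delta$ tuples are contributed by $u$ alone and all $w_t=1$. Then $\bar f(S\cup S_u,\tau_i)=\bar f(S,\tau_i)+\min\{\tau_i,|S_u|\}$. On the event that every coordinate of $(\hat f(S,\tau_i))_i$ exceeds $\bar f(T\cup T_u,\tau_i)$, the likelihood ratio between the neighbors equals $\mathbf{E}_K e^{A(K)}$ exactly. So the charges of Lines 9--10 do not bound the privacy loss of the released vector. A proof for the algorithm as written would have to show that the leftover budget $\varepsilon-\sum_i\varepsilon_i'$ absorbs this excess, or exploit that only the maximum is released.

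Neither you nor the paper does this. The appendix proof establishes only the error bound. The privacy claim rests on the accounting of Section~\ref{sec:sca-choice}: each iteration is charged its marginal cost from Lemma~\ref{lemma:pure}, and the charges are summed by basic composition. That tacitly treats the $L$ runs as if each drew its own sample. So you have correctly located the step the paper takes for granted, but as written your proposal proves the lemma only for the non-adaptive variant.

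Two repairs leave $\varepsilon_{i^*}\ge\varepsilon/L$ and your utility analysis intact.
\begin{enumerate}
\item Draw an independent Poisson sample for each $\tau_i$. The $\varepsilon_i'$ then compose legitimately, since the schedule is data-independent, as you note. The utility argument needs only an extra union bound over the $L$ samples.
\item Keep one sample and choose the schedule so that your joint quantity $\max\{\ln\mathbf{E}_K e^{A(K)},-\ln\mathbf{E}_K e^{-A(K)}\}$ is at most $\varepsilon$. This quantity is data-independent, computable in $O(L\Delta)$ time, and satisfied by $\varepsilon_i=\varepsilon/L$.
\end{enumerate}
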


The proof of Lemma~\ref{lemma:s4s-s} can be found in 
\ifthenelse{\boolean{fullver}}{
Appendix~\ref{sec:proof2}}{
Appendix~B.1 of the full version~\cite{full-ver}}.
As a sketch, main ideas of the proof include (i) we apply standard Bernstein bounds for the sampling error and (ii) for DP noise, conditioned on the high probability event that $\tilde{f}$'s are all underestimates, we show that the output error depends on $\tau^*(S)$, i.e., the maximum contribution of a user in the sample.
A key observation in the proof is that besides the amplification effect that reduces the DP noise scale (see Lemma~\ref{lemma:pure}), another benefit of sampling in \textsf{DP-S4S} is that it reduces the maximum contribution among the users. Roughly speaking, if the maximum contribution of a user is $\tau^*(T)$ for the raw join results, then we expect the maximum contribution to be $\tau^*(S)\approx q\cdot \tau^*(T)$ in the sample, subject to some lower-order terms. 
After scaling the results by $\frac{1}{q}$, we obtain $\tau^*(T)$, so that the DP error does not increase. 
\ifthenelse{\boolean{fullver}}{Appendix~\ref{sec:ablation}}{
Appendix C in~\cite{full-ver}} evaluates the contribution of both error sources.

\medskip
\noindent
\textbf{Comparison with \textsf{R2T}.} The utility guarantee of R2T is as follows. 
\begin{lemma}[Result Utility of R2T~\cite{DBLP:conf/sigmod/DongF0TM22}]\label{lemma:r2t}
With probability $1-\beta$, the output of the {\sf R2T} mechanism satisfies
\[
|\M_\textsf{R2T}(T)-f(T)|=\tilde{O}\left(\frac{L}{\varepsilon}\cdot \DS(\mathbf{I})\right)\,.
\]
\end{lemma}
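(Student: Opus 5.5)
The plan is to reconstruct the standard argument of \cite{DBLP:conf/sigmod/DongF0TM22}, specialised to the \textsf{R2T} instantiation reviewed in Section~\ref{sec:pre-trunc}. Concretely, assume \textsf{R2T} runs on the full join result $T$ with the following choices:
\begin{itemize}
\item thresholds $\tau_i=2^{i-1}$ for $i=1,\dots,L$;
\item a budget of $\varepsilon/L$ per threshold (basic composition, Lemma~\ref{lemma:composition});
\item the truncated value $\bar f(T,\tau_i)$, computed from the linear program~\eqref{eq:r2t} on $T$;
\item the estimate $\tilde f(T,\tau_i)=\bar f(T,\tau_i)+\Lap(L\tau_i/\varepsilon)-\frac{L\tau_i}{\varepsilon}\ln\frac{L}{\beta}$, and the output $\max_i\tilde f(T,\tau_i)$.
\end{itemize}
I will show the output is sandwiched between $f(T)-O\!\left(\frac{L}{\varepsilon}\DS(\mathbf I)\ln\frac{L}{\beta}\right)$ and $f(T)$, both on the same event of probability at least $1-\beta$.

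\emph{Step 1 (good event and upper bound).} By the Laplace tail, $\Pr\left[|\Lap(b)|>b\ln(L/\beta)\right]=\beta/L$. A union bound over the $L$ runs gives, with probability at least $1-\beta$, that every noise term has magnitude at most $\frac{L\tau_i}{\varepsilon}\ln\frac{L}{\beta}$. On this event each $\tilde f(T,\tau_i)\le\bar f(T,\tau_i)$. Moreover $\bar f(T,\tau_i)\le\sum_t w_t=f(T)$, because of the constraint $x_t\le w_t$ in~\eqref{eq:r2t}. Hence the output is at most $f(T)$.

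\emph{Step 2 (a good threshold exists).} I need the fact that $\bar f(T,\tau)=f(T)$ whenever $\tau\ge\DS(\mathbf I)$. To see this, take $x_t=w_t$. For each user $u$, the constraint value $\sum_{t:u\leadsto t}w_t$ is exactly the weight lost when $u$ and all records $u$ contributes are deleted. That deletion yields a neighbouring sub-instance $\mathbf I'\subseteq\mathbf I$, and for this monotone sum query the lost weight equals $f(T)-f(J(\mathbf I'))$. This is at most $\DS(\mathbf I)$, so the choice $x_t=w_t$ is feasible.

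\emph{Step 3 (lower bound and conclusion).} Let $i^*$ be the smallest index with $\tau_{i^*}\ge\DS(\mathbf I)$. Then $\tau_{i^*}\le\max\{1,2\DS(\mathbf I)\}$, and such an $i^*\le L$ exists since $\DS(\mathbf I)\le\Delta$ under the $\Delta$ bound. On the good event,
\[
\max_i\tilde f(T,\tau_i)\ \ge\ \tilde f(T,\tau_{i^*})\ \ge\ f(T)-\frac{2L\tau_{i^*}}{\varepsilon}\ln\frac{L}{\beta}\ \ge\ f(T)-\frac{4L}{\varepsilon}\max\{1,\DS(\mathbf I)\}\ln\frac{L}{\beta}.
\]
Combining this with Step~1 gives $|\M_{\textsf{R2T}}(T)-f(T)|=\tilde O\!\left(\frac{L}{\varepsilon}\DS(\mathbf I)\right)$. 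The $\ln(L/\beta)$ factor and the additive constant from the case $\DS(\mathbf I)<1$ are absorbed into $\tilde O(\cdot)$.

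I expect Step~2 to be the main point requiring care. The argument must use the foreign-key neighbour definition to justify that removing a single user removes precisely the tuples $t$ with $u\leadsto t$. It also needs $\DS(\mathbf I)\ge1$ (or else an additive $1$) so that the power-of-two grid overshoots $\DS(\mathbf I)$ by at most a factor of $2$. The remaining steps are routine tail bounds and monotonicity of the linear program.
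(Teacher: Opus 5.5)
Your argument is correct and follows the standard \textsf{R2T} argument that the paper imports from \cite{DBLP:conf/sigmod/DongF0TM22} and replays in its own proof of Lemma~\ref{lemma:s4s-s}. It uses the same three steps: a union-bounded Laplace tail event that makes every shifted estimate an underestimate, exactness of the LP once $\tau\ge\DS(\mathbf{I})$, and the smallest power of two above $\DS(\mathbf{I})$ to get the lower bound. One detail needs pinning down: the existence of $i^*\le L$ requires $2^{L-1}\ge\DS(\mathbf{I})$, which the paper's choice $L=\lfloor\log\Delta\rfloor+1$ guarantees only when $\Delta$ is a power of two. So take $L=\lceil\log\Delta\rceil+1$ (or assume $\Delta$ is a power of two, as the paper's experiments do), and the rest goes through unchanged.
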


Compared to R2T, \textsf{DP-S4S} introduces an additional error term $\sqrt{n/q}$ for $q=\Omega((\tau^*(T))^{-1})$, due to sampling error.
For counting queries, $\tau^*(T)=\DS(\mathbf{I})$ is the downward sensitivity of query $f$ (see Section~\ref{sec:pre-trunc}). Thus, the sampling-based mechanism achieves the same asymptotic error for a sufficiently large sample rate. In addition, we improve the $\frac{L}{\varepsilon}$ term in the DP error to $\frac{1}{\varepsilon_{i^*}}\leq \frac{L}{\varepsilon}$, which benefits from the privacy amplification.
To see why, note that the first iteration allocates exactly $\varepsilon_L=\varepsilon/L$.
Meanwhile, its amplified privacy guarantee $\varepsilon_L'\leq \varepsilon_L$, meaning the remaining budget $\varepsilon_0\geq \varepsilon-\varepsilon/L=\varepsilon\cdot (1-1/L)$. 
By induction, we have in each iteration the allocated $\varepsilon_i$ is at least as large as $\varepsilon/L$.

\begin{algorithm}[t]
\caption{Sample-and-Truncate for R\'enyi DP}\label{algo:sample-truncate-renyi}
\begin{algorithmic}[1]
\Require Join tuples $T$, query $f$, truncation threshold $\tau$, sample rate $q$, privacy parameters $\alpha$ and $\rho$
\Ensure Private query result $\hat{f}(T,\tau)$
\State Compute $\bar{f}(S,\tau)$ with Lines 1-4 in Algorithm~\ref{algo:sample-truncate}
\State Release $\hat{f}(S,\tau)\gets\bar{f}(S,\tau)+\mathcal{N}(0,\sigma^2)$ for $\sigma=\tau\cdot \sqrt{\alpha/2\rho}$
\State Report $\hat{f}(T,\tau)\gets\frac{1}{q}\cdot \hat{f}(S,\tau)$ as in Algorithm~\ref{algo:sample-truncate}
\end{algorithmic}
\end{algorithm}

\medskip
\noindent
\textbf{Comparison with \textsf{S\&E}.}
\citet{DBLP:journals/pacmmod/FangY24} do not include a closed-form error bound for \textsf{S\&E} under pure DP. In the following, we show that its error can be large even when the exact (i.e., non-private) best truncation threshold is used.

Specifically, the error analysis in \cite{DBLP:journals/pacmmod/FangY24} for running {\sf S\&E} with {\sf R2T} focuses on approximate DP. It is not hard to see that when each of the $k$ iterations is allocated with a privacy budget of $\varepsilon_i=O(\varepsilon/k)$, the mechanism satisfies pure DP, according to basic composition (Lemma~\ref{lemma:composition}).
When $\varepsilon_i$ is small, the amplified $\varepsilon_i'$ for each iteration can be approximated as
\[
\varepsilon_i'=\ln\left(1+\frac{m}{C}\left(e^{\varepsilon_i}-1\right)\right)=
O\left(\frac{m \varepsilon}{kC}\right)\,, 
\]
where $m$ is the inverse of the sample rate. The final estimator scales the sample average by $m$. Thus, its variance from DP noise is 
\[
\frac{m^2}{k^2} \cdot \left(\sum_{i=1}^k \frac{2\tau^2}{(\varepsilon_i')^2}\right)=
O\left(kC^2\cdot\frac{\tau^2}{\varepsilon^2}\right)\,.
\]
Thus, the injected error to satisfy pure DP grows with $\sqrt{k}$, which becomes high when we set a large $k$, which is important for reducing sampling error. More importantly, its DP error is also linear to $C$, the maximum number of collaborators for each user (see Section~\ref{sec:pre-trunc}), which is data-dependent and can be large, as our experiments show in Section~\ref{sec:exp}. The proposed approach \textsf{DP-S4S} does not suffer from either of these issues, according to Lemma~\ref{lemma:s4s-s}.


Finally, as discussed in \ifthenelse{\boolean{fullver}}{Appendix~\ref{section:sande}}{Appendix~D of the full version~\cite{full-ver}}, the sampling error of \textsf{S\&E} is also a factor of $l$ larger compared to our \textsf{DP-S4S}, since the former takes \textit{all} tuples associated with the sampled user, whereas the latter performs sampling on aggregation units, which automatically takes into account the correlation between users.

\subsection{\textsf{DP-S4S} for Approximate DP} \label{sec:rdp-sample}

So far, our description of \textsf{DP-S4S} aims at enforcing pure DP, i.e., $\delta=0$. For approximate DP, i.e., $\delta>0$, we adapt \textsf{DP-S4S} to employ the Gaussian mechanism (Lemma~\ref{lem:ec-gauss}) to satisfy R\'enyi-DP, and subsequently convert R\'enyi DP to ($\varepsilon$, $\delta$)-DP using Lemma~\ref{lem:vec-RDP-convert}. Algorithm~\ref{algo:sample-truncate-renyi} adapts Algorithm~\ref{algo:sample-truncate} to enforce R\'enyi DP.

Under R\'enyi-DP, we derive privacy amplification as follows.

\begin{lemma}[RDP Amplification]\label{lemma:sample-truncate}
Algorithm~\ref{algo:sample-truncate-renyi} satisfies $(\alpha,\rho')$-user-level RDP for
\[
\rho'(\alpha,\rho,\tau,\Delta,q)=\frac{1}{\alpha-1}\ln\bigg(\sum_{k=0}^{\lfloor\tau\rfloor}p_k\cdot\exp\left((\alpha-1)\frac{k^2}{\tau^2}\rho  \right)  + \Pr[\operatorname{Bin}(\Delta, q)>\tau]\cdot \exp((\alpha-1)\rho)\bigg)\,,
\]
where $p_k=\Pr[\operatorname{Bin}(\Delta,q)=k]=\binom{\Delta}{k}q^k(1-q)^{\Delta-k}$.
\end{lemma}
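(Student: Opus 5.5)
The plan follows the proof of Lemma~\ref{lemma:pure}, with density ratios replaced by R\'enyi divergence and the Gaussian mechanism playing the role of the Laplace mechanism. Write $\M(T)=\hat{f}(S,\tau)$; the final rescaling by $1/q$ is post-processing. Take a user-level neighbor $T'=T\cup T_u$ with $|T_u|=\Delta$, which is the worst case. Both output distributions are mixtures over the subsampled set $S\subseteq T$ with the same weights $\Pr[\lambda(T)=S]$. For each fixed $S$, the output under $T'$ is itself the mixture $\sum_{S_u}\Pr[\lambda(T_u)=S_u]\,\mathcal{N}(\bar f(S\cup S_u,\tau),\sigma^2)$, while the output under $T$ is the single Gaussian $\mathcal{N}(\bar f(S,\tau),\sigma^2)$.

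\textbf{Step 1 (sensitivity).} Reuse the LP argument from Lemma~\ref{lemma:pure} to get $|\bar f(S\cup S_u,\tau)-\bar f(S,\tau)|\le\min\{\tau,|S_u|\}$. By Lemma~\ref{lem:vec-diverge} with equal variances and $\sigma^2=\tau^2\alpha/(2\rho)$, this gives
\[
\mathrm{D}_\alpha\big(\mathcal{N}(\bar f(S\cup S_u,\tau),\sigma^2)\,\|\,\mathcal{N}(\bar f(S,\tau),\sigma^2)\big)\le\frac{\min\{\tau,|S_u|\}^2}{\tau^2}\rho,
\]
and the same bound holds in the reverse direction. Equivalently, $\int P_{S_u}^\alpha Q^{1-\alpha}\le\exp((\alpha-1)\min\{\tau,k\}^2\rho/\tau^2)$ with $k=|S_u|$.

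\textbf{Step 2 (removing the mixture over $S$).} Apply joint quasi-convexity of R\'enyi divergence. More concretely, $(P,Q)\mapsto\int P^\alpha Q^{1-\alpha}$ is jointly convex for $\alpha>1$, so for the common mixture over $S$,
\[
\int\Big(\textstyle\sum_S w_S P_S\Big)^\alpha\Big(\textstyle\sum_S w_S Q_S\Big)^{1-\alpha}\le\max_S\int P_S^\alpha Q_S^{1-\alpha}.
\]
This reduces the problem to a fixed $S$.

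\textbf{Step 3 (direction $T'$ vs.\ $T$).} The first argument is the mixture $\sum_{S_u}\pi(S_u)P_{S_u}$ and the second is $Q$. By convexity of $x\mapsto x^\alpha$ (Jensen),
\[
\int\Big(\textstyle\sum\pi P_{S_u}\Big)^\alpha Q^{1-\alpha}\le\sum\pi\int P_{S_u}^\alpha Q^{1-\alpha}.
\]
Grouping terms by $k=|S_u|\sim\operatorname{Bin}(\Delta,q)$ and inserting the bound from Step 1 gives exactly the expression inside the logarithm in the lemma: the terms with $k\le\tau$ contribute $p_k e^{(\alpha-1)k^2\rho/\tau^2}$, and the terms with $k>\tau$ are capped at $e^{(\alpha-1)\rho}$.

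\textbf{Step 4 (reverse direction; the main obstacle).} Now the single Gaussian $Q$ is the first argument and the mixture is the second. I would use joint convexity of $\int P^\alpha Q^{1-\alpha}$ with respect to the second argument, that is, convexity of $y\mapsto y^{1-\alpha}$. Writing $Q=\sum\pi Q$ as a trivial mixture, this gives
\[
\int Q^\alpha\Big(\textstyle\sum\pi P_{S_u}\Big)^{1-\alpha}\le\sum\pi\int Q^\alpha P_{S_u}^{1-\alpha}.
\]
For equal-variance Gaussians, the divergence in each term is symmetric in the means, so Step 1 bounds each term by the same quantity as before. Hence the same $\rho'$ bounds both directions, and taking $\frac{1}{\alpha-1}\ln$ completes the proof. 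The delicate point is the legitimacy of this convexity step in the second argument. If it fails, I would fall back on the standard Poisson-subsampling RDP analysis, in which the "remove" direction is dominated by the "add" direction.
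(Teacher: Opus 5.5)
Your proposal is correct and follows the paper's proof essentially step for step. Both use quasi-convexity (which you obtain via joint convexity of $\int P^\alpha Q^{1-\alpha}$) to remove the common mixture over $S$, Jensen on $x\mapsto x^\alpha$ when the mixture over $S_u$ is the first argument, the LP bound $\min\{\tau,|S_u|\}$ with the equal-variance Gaussian divergence, and a second-argument convexity step for the reverse direction. The step you flagged as delicate is sound, since $y\mapsto y^{1-\alpha}$ is convex on $(0,\infty)$ for $\alpha>1$ and $Q^\alpha\ge 0$, so pointwise Jensen integrates directly to the paper's bound; the paper instead uses convexity of $\mathrm{D}_\alpha$ in its second argument and then concavity of $\ln$, which reaches the same expression via a slightly tighter intermediate bound.
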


The proof can be found in \ifthenelse{\boolean{fullver}}{Appendix~\ref{section:rdp-amp}}{Appendix~B.2 of the full version~\cite{full-ver}},
which is similar to that for Lemma~\ref{lemma:pure}. The difference is that we apply properties of R\'enyi divergence instead of analyzing the ratio between probabilities. 
We also have $\rho'\leq \frac{1}{\alpha-1}\ln(\sum_{k=0}^\Delta p_k\cdot \exp((\alpha-1)\rho))=\rho$.\footnote{
While we show that sampling join tuples amplifies user-level DP under both pure and R\'enyi DP, the same proof does not easily extend to $(\varepsilon,\delta)$-DP mechanisms outside the R\'enyi-DP class.
It is left open whether an arbitrary approximate-DP mechanism on such a sample preserves the same level of user-level privacy.
}

Using Lemma~\ref{lemma:sample-truncate} as a building block, we can arrive at a variant of Algorithm~\ref{algo:sample-r2t} that achieves $(\alpha,\rho)$-RDP, which implies $(\varepsilon,\delta)$-DP: we only need to change the noise and noise bound to Gaussian, and allocate $\rho_i$ instead of $\varepsilon_i$ during each iteration.
For the utility analysis, note that the sampling error is independent of the noise, and we also have the same bound for the user contribution in the sample. Thus, its utility is given as follows.

\begin{corollary}[RDP-S4S for scalar SJA]\label{lemma:s4s-s-rdp}
There is an $(\alpha,\rho)$-RDP version of Algorithm~\ref{algo:sample-r2t}.
Let $n$ be the number of records, $m$ be the number of users, $q$ be the sample rate and $\tau^*(T)$ be the maximum contribution of a user.
With probability $1-\beta$, its error is 
\[
|\hat{f}(T)-f(T)|=\tilde{O}\left(\sqrt{\frac{n}{q}}+\frac{1}{q}+\sqrt{\frac{\alpha}{\rho_{i^*}}}\left(\tau^*(T)+\sqrt{\frac{\tau^*(T)}{q}}\right)\right)\,,
\]
where $\rho_{i^*}\geq \rho/(\lfloor\log\Delta\rfloor+1)$ is the amplified privacy budget.
\end{corollary}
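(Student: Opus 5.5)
We construct the RDP analogue of Algorithm~\ref{algo:sample-r2t} explicitly and then reuse the pure-DP proof of Lemma~\ref{lemma:s4s-s}, swapping only the privacy accounting and the noise tail bound. The algorithm is as follows. Set $\rho_0\gets\rho$ and iterate $i=L,\dots,1$ with $\tau_i=2^{i-1}$. In iteration $i$, allocate $\rho_i\gets\rho_0/i$ and release $\hat f(S,\tau_i)=\bar f(S,\tau_i)+\mathcal{N}(0,\sigma_i^2)$ with $\sigma_i=\tau_i\sqrt{\alpha/2\rho_i}$, exactly as in Algorithm~\ref{algo:sample-truncate-renyi}. Shift it down by the Gaussian tail bound $\sigma_i\sqrt{2\ln(3L/\beta)}$ to get $\tilde f(S,\tau_i)$. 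Then charge $\rho_i'=\rho'(\alpha,\rho_i,\tau_i,\Delta,q)$ from Lemma~\ref{lemma:sample-truncate} and update $\rho_0\gets\rho_0-\rho_i'$. Finally, report $\frac1q\max_i\tilde f(S,\tau_i)$.

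\textbf{Privacy.} The first step I would check is that all $L$ releases share the same sample $S$, so Lemma~\ref{lemma:sample-truncate} cannot be applied to each iteration separately and then composed. I would instead rerun the mixture argument of Lemma~\ref{lemma:pure}/Lemma~\ref{lemma:sample-truncate} on the joint output $(\hat f(S,\tau_i))_i$, conditioning on $S_u$ exactly as there. For fixed $S,S_u$, the joint Gaussian vector has R\'enyi divergence
\[
\sum_i \frac{\alpha}{2\sigma_i^2}\bigl(\min\{\tau_i,|S_u|\}\bigr)^2=\sum_i\rho_i\min\{1,k^2/\tau_i^2\}, \qquad k=|S_u|,
\]
by Lemma~\ref{lem:vec-diverge}. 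Joint convexity of $e^{(\alpha-1)\mathrm{D}_\alpha}$ then gives
\[
\mathbf{E}_k\Bigl[\exp\Bigl((\alpha-1)\sum_i\rho_i\min\{1,k^2/\tau_i^2\}\Bigr)\Bigr].
\]
Bounding this by the sum of the per-iteration amplified costs $\sum_i\rho_i'$ is where I expect the main obstacle. If the coupling does not give a clean bound, I would fall back to charging each iteration by adaptive RDP composition (Lemma~\ref{lemma:rdp-composition}) over the single sample, or use a H\"older/log-convexity step. Since the statement only claims existence of an $(\alpha,\rho)$-RDP version, the safe fallback is to charge $\rho_i'=\rho_i$, i.e., to skip amplification.

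\textbf{Invariant and utility.} The key invariant is $\rho_i'\le\rho_i$, which follows from the remark after Lemma~\ref{lemma:sample-truncate}. By the same induction as in the comparison with \textsf{R2T}, every allocated budget satisfies $\rho_i\ge\rho/L$, and the total charged budget is at most $\rho$. For utility, I would follow the proof of Lemma~\ref{lemma:s4s-s} step by step. The Bernstein bound on $\frac1q f(S)-f(T)$ gives the $\sqrt{n/q}+1/q$ terms and is unchanged because it does not involve the noise. A union bound then shows that, with probability $1-\beta/3$, all $\tilde f(S,\tau_i)$ underestimate $\bar f(S,\tau_i)\le f(S)$. Taking $i^*$ with $\tau_{i^*}\ge\tau^*(S)$ minimal, we have $\bar f(S,\tau_{i^*})=f(S)$. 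The error is therefore $O(\sigma_{i^*}\sqrt{\log(L/\beta)})=\tilde O(\tau^*(S)\sqrt{\alpha/\rho_{i^*}})$. Finally, the Chernoff bound $\tau^*(S)\le q\tau^*(T)+O(\sqrt{q\tau^*(T)\log(m/\beta)}+\log(m/\beta))$ from the pure-DP proof, after scaling by $1/q$, yields the stated $\sqrt{\alpha/\rho_{i^*}}\,(\tau^*(T)+\sqrt{\tau^*(T)/q})$ term, with the additive $1/q$ part absorbed.
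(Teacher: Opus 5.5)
Your argument is correct for the corollary as literally stated, but at the privacy step it takes a more conservative route than the paper.

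The paper's proof is essentially one sentence. It runs Algorithm~\ref{algo:sample-r2t} with Gaussian noise and charges iteration $i$ its amplified cost $\rho_i'=\rho'(\alpha,\rho_i,\tau_i,\Delta,q)$ from Lemma~\ref{lemma:sample-truncate}. It sums these costs with Lemma~\ref{lemma:rdp-composition}, and gets $\rho_{i^*}\ge\rho/L$ by the same induction as in the \textsf{R2T} comparison. Utility is inherited from Lemma~\ref{lemma:s4s-s} with a Gaussian tail bound.

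You instead notice that all $L$ releases reuse one sample $S$, redo the mixture argument for the joint output, and fall back to charging $\rho_i'=\rho_i$. That fallback is sound. Every $\rho_i$ then equals $\rho/L$, and your joint bound $\mathbf{E}_k\bigl[\exp\bigl((\alpha-1)\sum_i\rho_i\min\{1,k^2/\tau_i^2\}\bigr)\bigr]$ is at most $\exp((\alpha-1)\rho)$. Since the corollary only asserts existence with $\rho_{i^*}\ge\rho/L$, this proves it; the word ``amplified'' simply becomes vacuous.

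Your worry about the ``main obstacle'' is well founded, and it is the real difference between the two routes. Let $g_i(k)=\exp\bigl((\alpha-1)\rho_i\min\{1,k^2/\tau_i^2\}\bigr)$. All $g_i$ are nondecreasing functions of the same $k\sim\operatorname{Bin}(\Delta,q)$. By Chebyshev's association inequality, $\mathbf{E}_k[\prod_i g_i(k)]\ge\prod_i\mathbf{E}_k[g_i(k)]=\exp\bigl((\alpha-1)\sum_i\rho_i'\bigr)$, strictly once $L\ge2$ and $0<q<1$. The inequality the paper's accounting needs points the other way. So Lemma~\ref{lemma:sample-truncate} plus composition does not cover a shared sample.

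In the pure-DP analogue this is not an artifact of the bound. Suppose the witness alone owns $\Delta$ weight-one tuples. Then $\bar f(S\cup S_u,\tau_i)=\bar f(S,\tau_i)+\min\{\tau_i,|S_u|\}$ for every $S$. At far-tail values of the vector $(\hat f(S,\tau_i))_i$, the density ratio equals $\mathbf{E}_k[\exp(\sum_i\varepsilon_i\min\{1,k/\tau_i\})]$ exactly, which exceeds $\exp(\sum_i\varepsilon_i')$.

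So the paper's route buys a genuinely amplified $\rho_{i^*}$ but tacitly treats the iterations as independently subsampled. Yours is rigorous but gives up amplification. To get both, you have three options:
\begin{itemize}
\item Draw a fresh sample $S_i$ for each $\tau_i$. Then Lemmas~\ref{lemma:sample-truncate} and~\ref{lemma:rdp-composition} apply verbatim, and utility only needs union bounds over the $L$ samples, since $\max_i\tilde f(S_i,\tau_i)\le\max_i f(S_i)$.
\item Charge the joint quantity $\frac{1}{\alpha-1}\ln\sum_k p_k\prod_i g_i(k)$ directly.
\item Use H\"older, which bounds the joint cost by $\frac{1}{L}\sum_i\rho'(\alpha,L\rho_i,\tau_i,\Delta,q)$.
\end{itemize}

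Your utility argument matches the paper's, and shares one looseness with it. After rescaling, the additive part of the Chernoff bound contributes roughly $\sqrt{\alpha/\rho_{i^*}}\,\log(m/\beta)/q$. The displayed terms need not dominate this when $q\tau^*(T)$ is small and $\alpha/\rho_{i^*}$ is large, so ``absorbed'' is not literally right. The paper drops the same term by applying the multiplicative Chernoff bound outside its range $t\le 4$.
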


In particular, this indicates an $(\varepsilon,\delta)$-DP variant by taking $\rho=\varepsilon-\frac{\ln(1/\delta)}{\alpha-1}$. Now, the DP error scales with $\sqrt{\frac{\alpha(\alpha-1)}{\varepsilon(\alpha-1)-\ln(1/\delta)}}$, and the order $\alpha$ can be optimized for given $\varepsilon$ and $\delta$.

%% file: fig/TriCnt.tex
\def \globalscale {0.9}
\begin{tikzpicture}[y=1pt, x=1pt, yscale=\globalscale,xscale=\globalscale, every node/.append style={scale=\globalscale}, inner sep=0pt, outer sep=0pt]
  \begin{scope}[shift={(-104.8, 160.7)}]
    \node[text=black,anchor=south west,line width=0.7pt] (text1) at (156.8, 
  -132.5){D};

    \node[text=black,anchor=south west,line width=0.7pt] (text2) at (128.9, 
  -118.2){B};

    \node[text=black,anchor=south west,line width=0.7pt] (text2-8) at (185.5, 
  -118.2){F};

    \node[text=black,anchor=south west,line width=0.7pt] (text2-8-8) at (194.2, 
  -160.7){E};

    \node[text=black,anchor=south west,line width=0.7pt] (text2-8-8-5) at 
  (208.1, -132.5){G};

    \node[text=black,anchor=south west,line width=0.7pt] (text2-5) at (106.3, 
  -132.5){A};

    \node[text=black,anchor=south west,line width=0.7pt] (text3) at (118.5, 
  -160.6){C};

    \path[draw=black,fill=white,line width=1.0pt] (132.0, -125.0) circle (3.7pt);

    \path[draw=black,fill=white,line width=1.0pt] (160.3, -139.5) circle (3.7pt);

    \path[draw=black,fill=white,line width=1.0pt] (188.6, -125.0) circle (3.7pt);

    \path[draw=black,fill=white,line width=1.0pt] (132.0, -153.4) circle (3.7pt);

    \path[draw=black,fill=white,line width=1.0pt] (108.9, -139.5) circle (3.7pt);

    \path[draw=black,fill=white,line width=1.0pt] (188.6, -153.4) circle (3.7pt);

    \path[draw=black,fill=white,line width=1.0pt] (211.6, -139.5) circle (3.7pt);

    \path[draw=black,fill=white,even odd rule,line cap=butt,line join=miter,line
   width=0.7pt] (129.0, -127.0) -- (112.0, -137.6);

    \path[draw=black,fill=white,even odd rule,line cap=butt,line join=miter,line
   width=0.7pt] (132.0, -128.7) -- (132.0, -149.7);

    \path[draw=black,fill=white,even odd rule,line cap=butt,line join=miter,line
   width=0.7pt] (112.1, -141.4) -- (128.9, -151.5);

    \path[draw=black,fill=white,even odd rule,line cap=butt,line join=miter,line
   width=0.7pt] (157.0, -141.1) -- (135.3, -151.8);

    \path[draw=black,fill=white,even odd rule,line cap=butt,line join=miter,line
   width=0.7pt] (163.5, -141.1) -- (185.3, -151.8);

    \path[draw=black,fill=white,even odd rule,line cap=butt,line join=miter,line
   width=0.7pt] (135.7, -153.4) -- (185.0, -153.4);

    \path[draw=black,fill=white,even odd rule,line cap=butt,line join=miter,line
   width=0.7pt] (188.6, -149.7) -- (188.6, -128.7);

    \path[draw=black,fill=white,even odd rule,line cap=butt,line join=miter,line
   width=0.7pt] (185.4, -126.7) -- (163.5, -137.9);

    \path[draw=black,fill=white,even odd rule,line cap=butt,line join=miter,line
   width=0.7pt] (135.3, -126.7) -- (157.0, -137.9);

    \path[draw=black,fill=white,even odd rule,line cap=butt,line join=miter,line
   width=0.7pt] (191.7, -127.0) -- (208.5, -137.6);

    \path[draw=black,fill=white,even odd rule,line cap=butt,line join=miter,line
   width=0.7pt] (208.5, -141.4) -- (191.7, -151.5);

    \node[text=black,anchor=south west,line width=0.7pt] (text18) at (119.1, 
  -157.9){};

  \end{scope}

\end{tikzpicture}

%% file: sections/sampling-for-multiple-queries.tex
\section{\textsf{DP-S4S} for Vector Queries} \label{sec:vec}

This section focuses on {\sf DP-S4S} for \textit{vector} queries that return $d>1$ scalars (e.g., a group-by count query). Observe that a vector query $F$ can be viewed as $d$ scalar queries $F=\{f_1,\dots,f_d\}$. 

As discussed in Section~\ref{sec:pre-trunc}, a naive idea for answering a vector SJA queries with user-level DP is to process each component scalar query separately, each with an allocated privacy budget according to Lemma~\ref{lemma:composition}.
The problem of this approach, however, is that its error depends on the sensitivity of each scalar, rather than the sensitivity of the whole vector, which can be significantly smaller. In particular, the $L_2$ sensivitity for a $d$-dimensional vector query is smaller than that of its component scalar queries by a factor of $\sqrt{d}$ \cite{DBLP:journals/pacmmod/0007S023}.  {\sf PMSJA}~\cite{DBLP:journals/pacmmod/0007S023} address this issue by considering the following quadratically constrained quadratic program (QCQP):

\begin{align}\label{eq:qcqp}
&\underset{y_u, z_{f,t}}{\text{maximize}} && \sum_{u\in U} y_u \\
&\text{subject to} && \sum_{f\in F}\left(\sum_{t\in T_f:u\leadsto t} w_{f,t}\cdot z_{f,t}\right)^2 \leq \tau^2, &&\forall  u\in U \nonumber \\
& && \sum_{u: u\leadsto t} (y_u-1) \leq z_{f,t}-1, && \forall f\in F, t\in T_f \nonumber \\
& && 0\leq y_u\leq 1, && \forall u\in U \nonumber \\
& && 0\leq z_{f,t}\leq  1, && \forall f\in F,t\in T_f \nonumber
\end{align}


For given $\tau$, let $y_u^*, z_{f,t}^*$ be the optimal solution to Program~\eqref{eq:qcqp} (with ties broken arbitrarily). The truncated query for $f\in F$ is then 
\begin{align} \label{eq:vec-trunc}
\textstyle \bar{f}(T_f):=\sum_{t\in T_f} w_{f,t}\cdot z_{f,t}^{*}.
\end{align}

Unlike in {\sf R2T}, the global sensitivity of the truncated queries $\bar{F}$ is still large.
To address this issue, {\sf PMSJA} considers the local sensitivity of $\bar{F}$ (explained in Section~\ref{sec:dp-definition}).
Analysis shows that the $L_2$ local sensitivity satisfies:
\begin{equation}\label{eq:pmsja_ls_bound}
\LS_{\bar{F}}(\mathbf{I})\leq 2\tau\cdot (E(\mathbf{I}) + 1),
\end{equation}
where $E(\mathbf{I}):=\sum_{u\in U}(1-y_u^*)$ is the number of truncated users.
Intuitively, $E(\mathbf{I})$ has global sensitivity $1$, since the optimal solution on $\mathbf{I}$ can be transferred into a feasible solution on $\mathbf{I}'$ which contains one more user, by additionally truncating the extra user and all his/her tuples.
The formal proof can be found in~\cite{DBLP:journals/pacmmod/0007S023}.
{\sf PMSJA} then invokes an $(\varepsilon, \delta)$-DP algorithm to release $\bar{F}$.
For choosing the threshold $\tau$, it runs the sparse-vector-technique (SVT) mechanism~\cite{DBLP:journals/fttcs/DworkR14} to privately find the smallest power of $2$ such that only $E(\mathbf{I})=\tilde{O}(1)$ users are truncated. 

To achieve scalability, the main idea of \textsf{DP-S4S} is to combine join-tuple sampling described in Section~\ref{sec:sca} with {\sf PMSJA}. Note that {\sf PMSJA} enforces approximate DP; hence, we adopt the approach in Section~\ref{sec:rdp-sample} that processes the query under R\'enyi-DP, and subsequently converts the privacy guarantee to the standard ($\varepsilon$, $\delta$)-DP. 

\vspace{3pt}
\noindent
\textbf{Challenges.}
There are two major challenges in realizing this idea. First, as mentioned above, {\sf PMSJA} involves the use of local sensitivity, which cannot be directly used to calibrate DP noise as described in Section~\ref{sec:dp-definition}. For ($\varepsilon$, $\delta$)-DP, we can inject noise according to the smooth sensitivity (Lemma~\ref{lemma:ss}), which provides a viable middle ground that satisfies differential privacy while reducing the noise scale compared to global sensitivity. To our knowledge, however, there is no existing fundamental mechanism that utilizes smooth sensitivities to achieve R\'enyi DP, which is more strict than $(\varepsilon, \delta)$-DP. Second, even with the necessary tools in place to enforce R\'enyi DP with smooth sensitivity, it is still highly non-trivial to derive a similar sample-and-truncate algorithm with privacy amplification for {\sf PMSJA}, which involves solving a far more complex QCQP than the linear program in Section~\ref{sec:sca-sample}.

In the following, Sections~\ref{sec:rdp-ls} and \ref{sec:s4s-v} address these two challenges, respectively.

\subsection{Smooth Sensitivity Mechanism for R\'enyi DP}\label{sec:rdp-ls}

In this subsection, we build a new fundamental R\'enyi DP mechanism for the scenario that the local sensitivity is far smaller than the global sensitivity, i.e., $\LS_{L_2} \ll \GS_{L_2}$.
This is common in practical applications~\cite{DBLP:journals/fttcs/DworkR14,DBLP:books/sp/17/Vadhan17,DBLP:conf/sigmod/DongF0TM22,DBLP:journals/pacmmod/0007S023,DBLP:journals/pacmmod/FangY24}.
Utilizing Lemma~\ref{lem:vec-diverge}, we have the following result.

\begin{lemma}[RDP Smooth Sensitivity Mechanism]\label{lemma:additive-smooth}
Given $F:\mathcal{I}\to\mathbb{R}^d$ and its smooth-sensitivity function $
\operatorname{SS}_F:\mathcal{I}\times\mathbb{R}\to\mathbb{R}$ as in Lemma~\ref{lemma:ss}, the RDP-SS mechanism defined as
\[
\M_{\operatorname{RDP-SS}}(\mathbf{I}):=F(\mathbf{I})+\sigma(\mathbf{I})\cdot \mathcal{N}(0,\mathbf{1}_d)
\]
satisfies $(\alpha,\rho)$-RDP for \[\sigma(\mathbf{I})=\frac{\operatorname{SS}_F(\mathbf{I},\gamma)}{1-\alpha+\alpha e^{-2\gamma}}\cdot \sqrt{\frac{\alpha}{\rho}}\,,\]
where $\gamma:=\frac{1}{2}\ln\min\{t_1^{-1}, t_2\}$, and $0<t_1<1<t_2$ are solutions to 
\[
h(t):=t^\alpha-\alpha e^\frac{(\alpha-1)\rho}{d} t +(\alpha-1)e^\frac{(\alpha-1)\rho}{d}=0\,.
\]
\end{lemma}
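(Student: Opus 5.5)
We fix neighbors $\mathbf{I}\sim\mathbf{I}'$ and write $\mu_1=F(\mathbf{I})$, $\mu_2=F(\mathbf{I}')$, $\sigma_1=\sigma(\mathbf{I})$, $\sigma_2=\sigma(\mathbf{I}')$. The output distributions are then the two isotropic Gaussians $\mathcal{N}(\mu_1,\sigma_1^2\mathbf{1}_d)$ and $\mathcal{N}(\mu_2,\sigma_2^2\mathbf{1}_d)$. The plan is to evaluate their R\'enyi divergence exactly with Lemma~\ref{lem:vec-diverge} and to bound its two terms separately. The mean term should take at most half the budget, $\rho/2$. The variance term should take the other half; this split matches the factor $\sqrt{\alpha/\rho}$, rather than $\sqrt{\alpha/(2\rho)}$, in $\sigma$.

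Both terms are controlled by the ratio $r:=\sigma_2^2/\sigma_1^2$. Since $\sigma(\cdot)$ is a fixed multiple of $\operatorname{SS}_F(\cdot,\gamma)$, Lemma~\ref{lemma:ss} applied in both directions gives $e^{-2\gamma}\le r\le e^{2\gamma}$. For the variance term, divide numerator and denominator by $\sigma_1^2$. The term becomes
\[
g(r):=\frac{d}{2(\alpha-1)}\ln\frac{r^\alpha}{1-\alpha+\alpha r}.
\]
The condition $g(r)\le \rho/2$ is equivalent to
\[
r^\alpha\le e^{(\alpha-1)\rho/d}\,(1-\alpha+\alpha r),
\]
that is, $h(r)\le 0$ with $h$ as defined in the lemma. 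The function $h$ is convex in $r>0$. We also have $h(1)=1-e^{(\alpha-1)\rho/d}<0$, and $h(r)\to\infty$ both as $r\to\infty$ and as $r$ decreases to the point where $1-\alpha+\alpha r$ vanishes. So $h$ has exactly two roots $t_1<1<t_2$, and $h\le0$ on $[t_1,t_2]$. The choice $\gamma=\frac12\ln\min\{t_1^{-1},t_2\}$ gives $[e^{-2\gamma},e^{2\gamma}]\subseteq[t_1,t_2]$. Hence $g(r)\le\rho/2$ for every admissible $r$. The same containment gives $1-\alpha+\alpha r\ge 1-\alpha+\alpha e^{-2\gamma}>0$, so the divergence is finite.

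For the mean term, we use $\|\mu_1-\mu_2\|_2\le \LS_F(\mathbf{I})\le\operatorname{SS}_F(\mathbf{I},\gamma)$. It suffices to lower-bound the denominator $(1-\alpha)\sigma_1^2+\alpha\sigma_2^2=\sigma_1^2(1-\alpha+\alpha r)$. It is at least $\sigma_1^2(1-\alpha+\alpha e^{-2\gamma})$. Plugging in $\sigma_1$, the mean term is at most
\[
\frac{\alpha}{2}\cdot\frac{\operatorname{SS}_F(\mathbf{I},\gamma)^2\,(1-\alpha+\alpha e^{-2\gamma})}{\operatorname{SS}_F(\mathbf{I},\gamma)^2\,(\alpha/\rho)}\le \frac{\rho}{2}\,(1-\alpha+\alpha e^{-2\gamma})\le\frac{\rho}{2}.
\]
The last step uses $1-\alpha+\alpha e^{-2\gamma}\le 1$. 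In fact the extra division by $1-\alpha+\alpha e^{-2\gamma}$ in $\sigma$ is more than needed: dividing by its square root would already suffice. That slack only helps. Adding the two halves gives divergence at most $\rho$.

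The main obstacle is the asymmetry of the local-sensitivity bound. Local sensitivity is defined at $\mathbf{I}$, but the divergence must be controlled for both orderings $(\mathbf{I},\mathbf{I}')$ and $(\mathbf{I}',\mathbf{I})$. To handle this, I will always bound $\|\mu_1-\mu_2\|_2$ by the local sensitivity at whichever instance plays the role of $\sigma_1$. Since $\LS_F(\mathbf{I})$ is symmetric over the pair, this is valid, and it uses only the bound $\|F(\mathbf{I})-F(\mathbf{I}')\|\le\min\{\LS_F(\mathbf{I}),\LS_F(\mathbf{I}')\}$. A secondary care point is verifying the convexity and root-structure claims for $h$. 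This follows from $h''=\alpha(\alpha-1)t^{\alpha-2}>0$ together with the boundary behaviour described above.
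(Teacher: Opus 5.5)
Your proposal is correct and follows the paper's proof essentially step for step. Both use the same $\rho/2+\rho/2$ split of the Gaussian R\'enyi divergence, the same reduction of the variance term to $h(\sigma_2^2/\sigma_1^2)\le 0$ with $[e^{-2\gamma},e^{2\gamma}]\subseteq[t_1,t_2]$, and the same lower bound $1-\alpha+\alpha e^{-2\gamma}$ on the normalized denominator of the mean term; the only difference is that you locate the roots via convexity of $h$, whereas the paper uses the sign of $h'$ and explicit evaluations. One small slip: $h$ does not blow up as $t\downarrow(\alpha-1)/\alpha$ (your $g$ does); the needed fact is $h((\alpha-1)/\alpha)=((\alpha-1)/\alpha)^\alpha>0$, which is what the paper uses to place $t_1>(\alpha-1)/\alpha$ and so guarantee that the divergence is finite.
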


The proof can be found in \ifthenelse{\boolean{fullver}}{Appendix~\ref{sec:proof3}}{Appendix~B.3 of the full version~\cite{full-ver}}. 
The intuition is that to bound the R\'enyi-divergence between two Gaussian distributions, there are two constraints in Lemma~\ref{lem:vec-diverge}: the noise scales must be large enough to bound the first term, which depends on the difference between their means. In addition, the scales must also be multiplicatively close to ensure the second term is bounded: when $\sigma_1=\sigma_2$, the second term takes the minimum value of $0$; when the variances differ, we must bound their ratio, which can be achieved by properly setting the smoothness factor $\gamma$.
After deciding $\gamma$ (see the proof for details), we calibrate the noise scale according to both $\gamma$ and the $\gamma$-smooth sensitivity $\operatorname{SS}_F$.

\medskip
\noindent
\textbf{Implementation.} 
A major challenge in implementing smooth-sensitivity mechanisms is that the $\operatorname{SS}_F$ function is hard to compute in general~\cite{DBLP:conf/stoc/NissimRS07}. In the following, we make an assumption that there exists an upper bound $B$ on the $L_2$ local sensitivity $\LS_F$ such that $B(\mathbf{I})\geq \LS_F(\mathbf{I})$ and $|B(\mathbf{I})-B(\mathbf{I}')|\leq 1$ for neighboring datasets.
Under this assumption, we can compute $\operatorname{SS}_F$ in constant time.
We first argue that this is a feasible assumption, which was also made by existing work~\cite{DBLP:conf/tcc/KasiviswanathanNRS13,DBLP:journals/tods/KarwaRSY14,DBLP:books/sp/17/Vadhan17}.
A trivial bound that satisfies this constraint is a constant function $B(\cdot)\equiv \GS$.
But it does not utilize the fact that $\LS\ll \GS$.
To reduce noise, the goal is to set $B$ as close to $\LS$ as possible.
Note that due to normalization, requiring $|B(\mathbf{I})-B(\mathbf{I}')|\leq 1$ is equivalent to the more general form of $|B(\mathbf{I})-B(\mathbf{I}')|\leq \tau$.
This is exactly how we will apply this lemma in our mechanism for vector queries in Section~\ref{sec:s4s-v}, following Equation~\eqref{eq:pmsja_ls_bound}.

Algorithm~\ref{algo:rdp-ss} implements the $\M_{\operatorname{RDP-SS}}$ mechanism. Here, the equation $h(t)=0$ is solved using Brent's method~\cite{brent2013algorithms} (Line 1).
Under the assumption for $B$, for any $\mathbf{I}'\in \mathcal{I}$,
\[
\LS(\mathbf{I}')\leq  B(\mathbf{I}')\leq B(\mathbf{I})+\operatorname{dist}(\mathbf{I},\mathbf{I}')\,.
\]
Therefore, $G(\mathbf{I}):=\max_{k\in\mathbb{N}_{\geq 0}} (k+B(\mathbf{I}))\cdot e^{-\gamma k}$ is the $\gamma$-smooth sensitivity of $F$ by Lemma~\ref{lemma:ss}.
In addition, to compute $G(I)$, we only need to evaluate two values of $k$, as follows.
Consider $g(k)=(k+b)\cdot e^{-\gamma k}$, we have its derivative $g'(k)=(1-\gamma k-\gamma b)e^{-\gamma k}$, thus the maximum is taken at the closest integers to $k^*=\frac{1}{\gamma}-b$.
Therefore, each $G(\mathbf{I})$ can be obtained by computing two values at most (Lines 4 and 6) and is bounded by
\[
G(\mathbf{I})\leq \textstyle \frac{1}{\gamma}\cdot e^{\gamma\cdot B(\mathbf{I})-1}\,.
\]
We compare {\sf RDP-SS} with the smooth sensitivity mechanism in \ifthenelse{\boolean{fullver}}{Appendix~\ref{sec:ablation-ss}}{Appendix E of the full version~\cite{full-ver}}.

\begin{algorithm}[t]
\caption{RDP Smooth Sensitivity Mechanism $\M_{\operatorname{RDP-SS}}$}
\label{algo:rdp-ss}
\begin{algorithmic}[1]
\Require Instance $\mathbf{I}$, vector query $F:\mathcal{I}\to\mathbb{R}^d$, upper bound $B$ on $\LS_{L_2}(F)$ with $|B(\mathbf{I})-B(\mathbf{I}')|\leq 1$, privacy parameters $\alpha,\rho$
\Ensure Private estimate $\tilde{F}(\mathbf{I})$
\State $t_1,t_2\gets \operatorname{brentq}(t^\alpha-\alpha e^{(\alpha-1)\rho/d}\cdot t+(\alpha-1)e^{(\alpha-1)\rho/d}=0)$
\State $\gamma\gets \frac{1}{2}\ln\min\{t_1^{-1},t_2\}$
\State $k\gets \lfloor\frac{1}{\gamma}-B(\mathbf{I})\rfloor$, $G(\mathbf{I})\gets (k+B(\mathbf{I}))\cdot e^{-\gamma k}$
\State $k\gets k+1$, $G(\mathbf{I})\gets \max\{G(\mathbf{I}), (k+B(\mathbf{I}))\cdot e^{-\gamma k}\}$
\State $\sigma\gets \frac{G(\mathbf{I})}{1-\alpha+\alpha e^{-2\gamma}}\cdot \sqrt{\frac{\alpha}{\rho}}$
\State \Return $F(\mathbf{I})+\sigma\cdot \mathcal{N}(0,\mathbf{1}_d)$
\end{algorithmic}
\end{algorithm}

\subsection{DP-S4S for Vector Queries}\label{sec:s4s-v}


Equipped with the smooth sensitivity mechanism $\M_{\operatorname{RDP-SS}}$ in Lemma~\ref{lemma:additive-smooth}, we are now ready to construct our \textsf{DP-S4S} mechanism for answering vector SJA queries under user-level DP, presented in Algorithm~\ref{algo:sample-pmsja}. Similar to Algorithm~\ref{algo:sample-r2t} for scalar queries, Algorithm~\ref{algo:sample-pmsja} consists of three stages, for computing the sample set $S$ (Lines 2-7), choosing the truncation threshold $\tau$ (Lines 9-16), and releasing the result of the truncated query (Lines 18-22), respectively. 

Specifically, in the first stage, for each scalar query $f \in F$, \textsf{DP-S4S} samples every join tuple with probability $q$, obtaining the sample set $S_f$ for $f$. The combined sample set $S$ is then the union of the samples $S_f$ for all $f$ (Line 7). The second stage then follows the PMSJA framework, which applies the SVT mechanism to choose $\tau$, solving QCQPs on the sample set $S$ in each iteration with a $\tau$ that is a power of 2, as mentioned at the beginning of Section~\ref{sec:vec}.

It remains to clarify Stage 3, which receives from Stage 2 the chosen truncation threshold $\tau$, as well as the truncated vector query $\bar{F}$ (defined in Eq.~\eqref{eq:vec-trunc}). The goal is to release $\bar{F}$ under R\'enyi DP so that the overall mechanism attains $(\varepsilon, \delta)$-DP via Lemma~\ref{lem:vec-RDP-convert}. Stage 3 first selects the R\'enyi order $\alpha$ (discussed below), and then the divergence parameter $\rho$ such that the resulting $(\alpha,\rho)$-RDP guarantee can be converted to $(\varepsilon,\delta)$-DP. It then defines the $L_2$ local-sensitivity bound $B$ for function $\bar{F}/(2\tau)$ according to Inequality~\eqref{eq:pmsja_ls_bound}, and invokes mechanism $\M_{\operatorname{RDP-SS}}$ (Algorithm~\ref{algo:rdp-ss}) to compute a noisy version of $\hat{F}(S)$ under DP, before reporting the scaled version of $\hat{F}(S)$ as the final result (Line 21).



Finally, we clarify the choice of the order $\alpha$ in R\'enyi divergence (Line 17). While our mechanism is private for any $\alpha>1+\frac{\ln(1/\delta)}{\varepsilon_2}$, the error depends on the choice of $\alpha$. 
To choose an appropriate $\alpha$, we perform a doubling search: from the lower bound of $\alpha$ (which leads to an infinite error), we repeatedly double it until we find some $\alpha$ that the data-independent error term $\sqrt{\alpha/\rho}/(1-\alpha+\alpha e^{-2\gamma})$ is smaller compared to choosing $\alpha/2$ or $2\alpha$. Then we iterate through possible choices within the interval $(\alpha/2,2\alpha)$ to locate an $\alpha$ that minimizes the term, and set $\rho=\varepsilon_2-\frac{\ln(1/\delta)}{\alpha-1}$ accordingly.

The following lemma formalizes the privacy guarantee of Algorithm~\ref{algo:sample-pmsja}, whose proof appears in \ifthenelse{\boolean{fullver}}{Appendix~\ref{sec:proof4}}{Appendix~B.4 of the full version~\cite{full-ver}}.
For this lemma, amplification only accounts for the probability that no record is sampled from the witness user.
Note that since $q'\leq 1$, we have $\varepsilon'\leq \frac{\varepsilon}{5}+\frac{4\varepsilon}{5}=\varepsilon$.

\begin{lemma}\label{lemma:s4s-v}
For any chosen $\alpha$, algorithm~\ref{algo:sample-pmsja} satisfies $(\varepsilon',\delta)$-DP for 
\[
\varepsilon'=\ln(1+q'(e^{\varepsilon/5}-1))+\frac{1}{\alpha-1}\left(\ln\left(1+q'\left(e^{4(\alpha-1)\varepsilon/5}\cdot \delta-1\right)\right)+\ln\frac{1}{\delta}\right)\,,
\]
where $q'=1-(1-q)^\Delta$.
\end{lemma}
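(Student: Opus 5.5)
The plan is to split Algorithm~\ref{algo:sample-pmsja} into its two privacy-consuming parts. Stage 2 is the SVT selection of $\tau$ and runs under pure DP with budget $\varepsilon/5$. Stage 3 is the RDP-SS release of $\bar{F}$, with $(\alpha,\rho)$ chosen so that $\rho+\frac{\ln(1/\delta)}{\alpha-1}=\varepsilon_2=4\varepsilon/5$. Both parts act on the same Poisson sample $S$ of join tuples, and the final rescaling is post-processing. For each part I will prove an amplification bound conditioned on the sampling of the witness's tuples, and then combine the two bounds.

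\textbf{Coupling.} Fix neighbors $T'=T\cup T_u$ with $|T_u|\le\Delta$ and write $\lambda$ for the subsampling. By independence, $\lambda(T')=\lambda(T)\cup\lambda(T_u)$. Let $\mathcal{E}$ be the event that $\lambda(T_u)\neq\varnothing$, so $\Pr[\mathcal{E}]\le q'=1-(1-q)^\Delta$. Conditioned on $\lambda(T_u)=\varnothing$, the output on $T'$ has exactly the distribution of the output on $T$. Conditioned on $\mathcal{E}$, the sample $S\cup S_u$ is a user-level neighbor of $S$. The un-amplified guarantees of the two stages then apply, because SVT on the $E(\cdot)$ counts has sensitivity $1$ (the argument for Inequality~\eqref{eq:pmsja_ls_bound}), and Lemma~\ref{lemma:additive-smooth} applies with $B=E+1$ satisfying $|B(\mathbf{I})-B(\mathbf{I}')|\le 1$ for $\bar F/(2\tau)$. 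Together these give $\varepsilon/5$-DP for Stage 2 and $(\alpha,\rho)$-RDP for Stage 3 on the sample. In both cases the output distribution on $T'$ is a mixture $(1-q')P+q'Q$ with $P$ the output on $T$, in the spirit of the ratio argument in Lemma~\ref{lemma:pure}. For Stage 2 the standard mixture bound then gives $\ln(1+q'(e^{\varepsilon/5}-1))$ in both directions, which is the first term of the claimed $\varepsilon'$.

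\textbf{Stage 3.} This is the main obstacle, since RDP under mixtures lacks a clean closed form. I would bypass it by going to $(\epsilon,\delta)$-DP directly, mirroring the second term of the claimed bound. First, the RDP-SS guarantee is converted by Lemma~\ref{lem:vec-RDP-convert} into a privacy-loss tail statement, namely $\Pr[P(z)>e^{\epsilon_0}Q(z)]$ controlled with $\epsilon_0=\rho+\frac{\ln(1/\delta)}{\alpha-1}$. Second, the mixture is handled by writing, for any set $O$,
\[
\Pr_{T'}[O]=(1-q')P(O)+q'Q(O)\le (1-q')P(O)+q'\bigl(e^{\epsilon_0}P(O)+\delta\bigr),
\]
and the reverse direction is treated analogously. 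The exact form $\frac{1}{\alpha-1}\ln(1+q'(e^{(\alpha-1)\cdot 4\varepsilon/5}\delta-1))+\frac{1}{\alpha-1}\ln\frac1\delta$ suggests a different route. That route bounds the $\alpha$-moment $\mathbf{E}_P[(P'/P)^{\alpha}]$ by $(1-q')+q'\,e^{(\alpha-1)\rho}\le 1-q'+q'e^{(\alpha-1)4\varepsilon/5}\delta$, using $(\alpha-1)\rho=(\alpha-1)\varepsilon_2-\ln(1/\delta)$. This yields amplified RDP $\rho''=\frac{1}{\alpha-1}\ln(1+q'(e^{(\alpha-1)4\varepsilon/5}\delta-1))$, and Lemma~\ref{lem:vec-RDP-convert} then adds $\frac{\ln(1/\delta)}{\alpha-1}$. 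Justifying the moment bound for the mixture is the hard part. I would try joint convexity of $x\mapsto x^\alpha$ applied to the mixture density ratio, with the $P$ component contributing $1$ and the $Q$ component contributing $e^{(\alpha-1)\rho}$. The reverse direction needs more care, but by the same convexity it should be bounded by the same quantity.

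\textbf{Combining.} Basic composition (Lemma~\ref{lemma:composition}) adds the Stage 2 pure-DP bound to the Stage 3 $(\cdot,\delta)$-DP bound, giving exactly $\varepsilon'$ with failure probability $\delta$. Finally, setting $q'=1$ recovers $\varepsilon/5+4\varepsilon/5=\varepsilon$, which serves as a sanity check.
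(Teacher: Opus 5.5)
Your decomposition and both amplification steps match the paper's proof:
\begin{itemize}
\item SVT is treated as an $\varepsilon/5$-DP stage and amplified by the worst-case mixture argument (ratio $1$ when no witness tuple is sampled, $e^{\pm\varepsilon/5}$ otherwise).
\item RDP-SS is amplified to $\frac{1}{\alpha-1}\ln\bigl((1-q')+q'e^{(\alpha-1)\rho}\bigr)$ and then converted to $(\cdot,\delta)$-DP.
\item Basic composition combines the two at the end.
\end{itemize}
The moment bound you leave hedged does hold. The paper gets it from quasi-convexity (Lemma~\ref{lemma:quasi}) to fix the non-witness sample, then Jensen for $x\mapsto x^\alpha$ over $S_u$. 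For the reverse direction it uses convexity in the second argument (Lemma~\ref{lemma:convex}) plus concavity of $\ln$. Your route also works if "joint convexity" means joint convexity of $(x,y)\mapsto x^\alpha y^{1-\alpha}$ for $\alpha>1$. Write both output laws as mixtures over the same pairs $(S,S_u)$; this bounds $\int P'^{\alpha}P^{1-\alpha}$ and $\int P^{\alpha}P'^{1-\alpha}$ by $(1-q')+q'e^{(\alpha-1)\rho}$ at once. The convert-then-mix detour is valid but proves a different guarantee, $\bigl(\ln(1+q'(e^{4\varepsilon/5}-1)),q'\delta\bigr)$, so drop it.

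The genuine gap is in your \textbf{Combining} step, and the paper's proof makes the same move. Lemma~\ref{lemma:composition} composes mechanisms with independent randomness, but Stages~2 and~3 read the \emph{same} sample $S$. The output law is therefore $\sum_S\Pr[\lambda(T)=S]\Pr[\M_1(S)=z_1]\Pr[\M_2(S;z_1)=z_2]$. That is not the adaptive composition of $\M_1\circ\lambda$ with an independently resampled $\M_2\circ\lambda$. Once the SVT output is observed, the posterior probability that a witness tuple was sampled can exceed $q'$, so Stage~3 is no longer amplified at rate $q'$.

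Amplifying separately and then composing fails in general. Take $T_u=\{t\}$ and release two randomized-response bits of $\mathbf{1}[t\in S]$, with fresh coins but one shared Poisson sample. The worst-case loss is $\ln(1+q(e^{2\varepsilon}-1))$. The sum of the separate bounds is only $2\ln(1+q(e^{\varepsilon}-1))$, since $1+q(a^2-1)-(1+q(a-1))^2=q(1-q)(a-1)^2>0$ for $a=e^{\varepsilon}$.

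A sound repair is to compose on the sample first and amplify once:
\begin{itemize}
\item Pure $\varepsilon/5$-DP implies $\mathrm{D}_\alpha\le\varepsilon/5$ for every $\alpha$, so the SVT stage is $(\alpha,\varepsilon/5)$-RDP with respect to sample neighbors.
\item By Lemma~\ref{lemma:rdp-composition}, the two stages together are $(\alpha,\varepsilon/5+\rho)$-RDP on $S$.
\item Your mixture argument plus conversion then yields $(\varepsilon'',\delta)$-DP with $\varepsilon''=\frac{1}{\alpha-1}\bigl(\ln(1+q'(e^{(\alpha-1)\varepsilon}\delta-1))+\ln\frac{1}{\delta}\bigr)\le\varepsilon$.
\end{itemize}
That is a correct amplification bound, but it is not the $\varepsilon'$ of the statement. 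The separate-amplification argument, yours or the paper's, does not establish that $\varepsilon'$.
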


\noindent\textbf{Discussion.} Unlike the case for scalar queries, where we derive a strong result for sampling amplification in Lemma~\ref{lemma:pure}, here the amplification effect is limited for a large $\Delta$, i.e., when each user may contribute to many join tuples. The reason is that it is unclear how the optimal solution to Program~\eqref{eq:qcqp} changes even when only $1$ tuple is sampled from the witness user. 
Note that this is the first mechanism that achieves user-level RDP with sampling of join tuples.
In particular, it remains unclear whether the original {\sf PMSJA} mechanism under $(\varepsilon,\delta)$-DP achieves the same privacy guarantee when run on sample join tuples.
We further evaluate the error of Algorithm~\ref{algo:sample-pmsja} in \ifthenelse{\boolean{fullver}}{Appendix~\ref{section:vector-error}}{Appendix F of the full version~\cite{full-ver}}.


\begin{algorithm}[t]
\caption{\textsf{DP-S4S} for vector SJA and approximate DP}\label{algo:sample-pmsja}
\begin{algorithmic}[1]
\Require Join tuples $T_1,\dots, T_d$, queries $F=\{f_1,\dots,f_d\}$, privacy parameters $\varepsilon,\delta$, sample rate $q$, probability $\beta$
\Ensure Private query results $\hat{F}(\mathbf{I})$
\State // \textit{Stage 1: compute the sample set $S$}
\State $S\gets \varnothing$
\ForEach{$f\in F$}
    \State $S_f\gets\varnothing$
    \ForEach{$t\in T_f$}
        \State Sample $t$ into $S_f$ with probability $q$
    \EndFor
    \State $S\gets S\cup\{S_f\}$
\EndFor
\State // \textit{Stage 2: choose truncation threshold $\tau$ with SVT~\cite{DBLP:journals/pacmmod/0007S023, DBLP:journals/fttcs/DworkR14}}
\State $\varepsilon_1\gets \varepsilon/5$, $\varepsilon_2\gets4\varepsilon/5$, $\theta\gets \frac{6}{\varepsilon_1}\ln\frac{6}{\beta}+\Lap(\frac{2}{\varepsilon_1})$
\ForEach{$\tau\gets 1,2,4,8,\dots$}
    \State $y_u^*, z_{f,t}^*\gets$ Optimal solution to problem~\eqref{eq:qcqp} on $S$, $\tau$ 
    \State $E(S)\gets \sum_u (1-y_u^*)$
    \If{$E(S)+\Lap(\frac{4}{\varepsilon_1})\leq\theta$}
        \State $\bar{F}(S)\gets\{\bar{f}(S):=\sum_{t\in S_f} w_{f,t}\cdot z_{f,t}\mid f\in F\}$
        \State \textbf{break}
    \EndIf
\EndFor
\State // \textit{Stage 3: release query result via Algorithm~\ref{algo:rdp-ss}}
\State $\alpha\gets\operatorname{find\_alpha}(\varepsilon_2,\delta,d)$
\State $\rho\gets \varepsilon_2-\frac{\ln(1/\delta)}{\alpha-1}$
\State Define LS upper bound $B(S):=E(S) + 1$
\State $\tilde{F}(S)\gets 2\tau\cdot \M_{\operatorname{RDP-SS}}(S, \bar{F}, B, \alpha,\rho)$
\State Report $\hat{F}(\mathbf{I}):=\frac{1}{q}\cdot \tilde{F}(S)$
\end{algorithmic}
\end{algorithm}

%% file: sections/experiments.tex
\section{Experiments} \label{sec:exp}

\newlength{\W}
\setlength{\W}{0.25\linewidth}
\newlength{\Hgt}
\setlength{\Hgt}{0.9\W}

This section presents extensive experiment evaluations comparing \textsf{DP-S4S} with the state-of-the-art algorithms on real benchmark datasets, for both relational queries and graphlet counting ones.

\subsection{Setup}

\noindent
\textbf{Mechanisms.} We consider the following mechanisms: \textsf{DP-S4S}, \textsf{R2T}~\cite{DBLP:conf/sigmod/DongF0TM22},
\textsf{PMSJA}~\cite{DBLP:journals/pacmmod/0007S023}, and \textsf{S\&E}~\cite{DBLP:journals/pacmmod/FangY24}. As mentioned in Section~\ref{sec:sca-choice}, we do not consider the variant of \textsf{S\&E} that requires a pre-known good truncation threshold $\tau$ (used in the experiments in ~\cite{DBLP:journals/pacmmod/FangY24}), which might leak private information and, thus, would be unfair to other mechanisms. However, we also observe that if we simply allocate a privacy budget in each of the $k$ independent iterations of \textsf{S\&E} for truncation threshold computation (e.g., by invoking {\sf R2T}), its overall noise becomes overwhelming that \textsf{S\&E}'s error reaches almost $100\%$ error for scalar queries across multiple datasets.
To make \textsf{S\&E} competitive, we improve the mechanism so that it samples multiple users at the same time, which avoids allocating a privacy budget in each iteration, leading to empirical performance compared to the implementation based on their original description. 
\ifthenelse{\boolean{fullver}}{
This is formally described in Corollary~\ref{cor:se} in Appendix~\ref{section:sande}}{
This is formally described in Corollary D.2 in Appendix D~\cite{full-ver}}.
In addition, we adopt an assumption favorable for \textsf{S\&E} that the number of users $m$ is public. 

Since all the mechanisms work on precomputed join results, we separate the time for computing the join using PostgreSQL from the time consumed by the DP mechanisms. 
We implement all algorithms in a single-threaded setting to enable a fair evaluation of CPU time.
Each mechanism is run 100 times independently, and we report the average error and running time, dropping the 20 largest and 20 smallest ones.


\medskip
\noindent
\textbf{Datasets.} We use the same datasets as existing work~\cite{DBLP:conf/sigmod/DongF0TM22,DBLP:journals/pacmmod/0007S023,DBLP:journals/pacmmod/FangY24} for evaluation.
For graph pattern counting queries, we use 4 real datasets:
\textit{Deezer} is a friendship network.
\textit{Amazon} is a co-purchasing graph.
Answer-to-question (\textit{A2Q}) and comment-to-answer (\textit{C2A}) are the interactions between users in the Stack Overflow network\footnote{We perform a similar preprocessing as in~\cite{DBLP:journals/pacmmod/0007S023}.}, with edges labeled by date.
All of them can be obtained from SNAP~\cite{snapnets}.
Their statistics are summarized in \ifthenelse{\boolean{fullver}}{Table~\ref{tab:graph}}{Table 4 in~\cite{full-ver}}.

For relational database SQL queries, we use the TPC-H benchmark dataset.
There are 8 relations: \texttt{Region(\underline{RK})}, \texttt{Nation(\underline{NK},RK)}, \texttt{Customer(\underline{CK},NK)}, \texttt{Orders(\underline{OK},CK)}, \texttt{Supplier(\underline{SK},NK)}, \texttt{Part(\underline{PK})}, \texttt{PartSupp(\underline{PSID},PK,SK)}, \texttt{Lineitem(\underline{LID},OK,PSID)}.
By default, we use data scale factor 4, where the largest table (\texttt{Lineitem}) contains roughly 24 million records.

For the definition of user-level DP, we regard each vertex as a user for the graph datasets.
In TPC-H, depending on the query, we may treat each customer (\texttt{CK}), supplier (\texttt{SK}), part-supp (\texttt{PID}), or order (\texttt{OK}) as a user, where a join result is contributed by a user if it is related to these primary keys through foreign-key references.


\medskip
\noindent
\textbf{Queries.} For graph datasets, we consider common graphlet counting queries, including edges $Q_{1-}$, line-2 paths $Q_{2-}$, line-3 paths $Q_{3-}$, triangles $Q_{\triangle}$ and rectangles $Q_{\square}$.
For directed graphs, we also consider the fanout-2 pattern $Q_{<}$, which differs from $Q_{2-}$.
For all these queries, an aggregation unit is contributed by $l\geq 2$ users.

We evaluate the vector query mechanisms on the TPC-H benchmark using the same queries as in~\cite{DBLP:journals/pacmmod/0007S023}.
All the queries are derived from the original $Q5$ or $Q7$ in TPC-H, with different aggregation functions \{\texttt{count(1)}, \texttt{sum(l\_extendedprice*(1-l\_discount))}\}, different group by attributes \{\texttt{n\_name}, \texttt{l\_shipdate}, \texttt{extract(year from l\_shipdate)}\}, and different definitions of private users \{(\texttt{SK},\texttt{CK}),  (\texttt{PSID},\texttt{CK}), (\texttt{PSID},\texttt{OK})\}.


\medskip
\noindent
\textbf{Parameters and environment.}
Unless otherwise specified, we consider pure DP with $\varepsilon=1$ for a scalar query, and approximate DP with $\varepsilon=4$ and $\delta=10^{-7}$ for vector queries. The default failure probability $\beta$ is set to $0.1$.
For {\sf S\&E}, the number of sampled users is chosen as $k=\lfloor q\cdot m\rfloor$, so that the expected number of records is the same as our mechanism.

Following prior works, we use data-independent degree upper bounds $D$ for graph datasets as shown in \ifthenelse{\boolean{fullver}}{Table~\ref{tab:graph}}{Table 4 in \cite{full-ver}}.
The global sensitivity is computed accordingly.
For example, $\GS_{1-}=D$, $\GS_{2-}=\GS_{\triangle}=D^2$, $\GS_{\square}=D^3$, etc.
The upper bound on the number of collaborators $C$ for {\sf S\&E} can also be computed from $D$.
To be more precise, for edge counting and triangle counting, we have $C=D$ since two collaborators must share an edge. For line-2 paths and rectangles, $C=D+D^2$ since a collaborator is either a neighbor or a distance-2 neighbor. In the experiments, we always use $C=D=1024$ for {\sf S\&E} for simplicity, which is favorable to {\sf S\&E}.

For TPC-H, none of the mechanisms requires the global sensitivity $\GS$ as input. {\sf S\&E} still requires setting $C$. Following their paper~\cite{DBLP:journals/pacmmod/FangY24}, we still set $C=1024$ (again in its favor), though this can be insufficient, especially for vector queries. 

\medskip

All experiments were conducted on a Linux server equipped with an Intel(R) Xeon(R) Gold 6252N 2.30GHz CPU with 96 cores and 1.5TB of memory.
Each experiment was run in a separate Docker container. For scalar queries, we allocated 32GB of memory; for vector queries, we allocated 256GB of memory. The experiment code can be found at \url{https://github.com/QiuTedYuan/DP-S4S/}.

\subsection{Comparison on Scalar Queries}\label{sec:exp-scalar}

\input{fig/deezer}

\input{fig/amazon2}

Figures~\ref{fig:deezer-q} and~\ref{fig:amazon2-q} compare the empirical performance of {\sf R2T}, {\sf S\&E}, the proposed {\sf DP-S4S} over graphlet counting queries, with varying sample rate from $1/64$ to $1/4$ for the latter two.
Horizontal axes correspond to execution times, while vertical axes correspond to error (in percentage) relative to the actual query result, both in log scale.
Each data point in the plot corresponds to the result for a given sample rate; a high sample rate leads to longer execution time.
In the figure captions, {\it base query time} refers to the join computation time measured by PostgreSQL.

Comparing {\sf DP-S4S} with {\sf R2T}, we observe that {\sf DP-S4S} achieves a similar error on almost every dataset at moderate sample rates.
As analyzed in Section~\ref{sec:sca-analysis}, there are two sources of error in {\sf DP-S4S}: sampling error and DP noise. While it is typical for sampling error to increase when lowering the sample rate, the amplification through Lemma~\ref{algo:sample-truncate} also becomes more effective, leading to a larger $\varepsilon_i^*$ and, correspondingly, lower noise to satisfy DP.
The overall error depends on which is dominating.
Recall that a lower sample rate also reduces the maximum contribution of a user to the sample, which offsets the increase in the scaling-back factor.
Therefore, despite having an additional sampling error, it is still possible for {\sf DP-S4S} to be slightly more accurate than {\sf R2T} under certain sample rates thanks to the privacy amplification.
The optimal sample rate, however, depends on both the data size (which affects the sampling error) and the downward sensitivity of the original query (which impacts the DP error).
Under differential privacy, we cannot use the exact best value for $q$ that balances both.
On the other hand, the experimental results show that the accuracy of {\sf DP-S4S} is stable across various datasets and a range of sample rates, suggesting that $q$ can be chosen in a data-inpdenent manner.
Notably, on the largest set of join tuples (Deezer $Q_{2-}$), {\sf DP-S4S} reduces the running time of {\sf R2T} from 1855 seconds (or 30 minutes) to 36.7 seconds at sample rate $q=1/64$, with only a mild error increase from $6.30\%$ to $6.72\%$.

Compared with {\sf S\&E}, {\sf DP-S4S} is consistently more accurate given the same execution time. Their accuracy gap is larger for small $q$, and is more than 10x for $q=1/64$. This is because {\sf DP-S4S} benefits from having a better sample amplification when $q$ is small, whereas {\sf S\&E} has no amplification effect unless $q\leq 1/(C+1)\approx10^{-3}$ as per \ifthenelse{\boolean{fullver}}{Corollary~\ref{cor:se}}{Corollary D.2 in~\cite{full-ver}}.
Also, as shown in Section~\ref{sec:sca-analysis}, the error of {\sf S\&E} grows with the sample rate (and, thus, the number of iterations $k$) under pure DP.
Even when relaxed to approximate DP, the overhead due to DP noise, $C\sqrt{\log(1/\delta)}$, causes its error to be more than 6000x larger than in the non-sampling setting.

We also observe that {\sf S\&E} has a slightly higher running time than our mechanism when given the same sample rate. This is mainly because of the variance in its sample size. While both mechanisms have the same expected sample size, {\sf DS-S4S} uses Poisson sampling, whose size is stable. For {\sf S\&E}, the sampled tuples are correlated since all the records from the sampled users are taken. When high-degree nodes are sampled, more join tuples are produced, leading to a higher running time.

\subsection{Comparison on Vector Queries}

\noindent
\textbf{Relational SJA queries.}
We adopt the same query set as~\cite{DBLP:journals/pacmmod/0007S023}: $Q5$ aggregates the revenue grouped by nation names, where each customer or supplier is a private user.
In the three variants of $Q7$, both $7a$ and $7b$ compute group record counts where each customer or partsupp is a private user: $7a$ groups the join results by nation names, while $7b$ uses the year of the shipdate as an additional group-by attribute.
$7c$ computes the revenue grouped by shipdate, where each user is a partsupp or an order.

Figure~\ref{fig:tpch} presents the results on a TPC-H benchmark. 
Clearly {\sf PMSJA} consumes a lot of memory and CPU time on such datasets. In particular, {\sf PMSJA} took more than $10,000$ seconds (about 2.7 hours) to produce an output for $Q5$. 
We have also experimented with $Q7$ by considering customers and suppliers as private users, and {\sf PMSJA} ran out of memory even after we increased its RAM limit to 512GB.
Note that the CPU time and memory consumption of a QCQP solver are super-linear: a typical solver such as MOSEK~\cite{mosek} requires over $O(n^3)$ time and $O(n^2)$ space. Accordingly, doubling the data size may cause the mechanism to be slower by more than 8x, and may increase the memory usage by 4x.

Following the same logic, sampling provides a significant efficiency boost for large datasets. For $Q5$, even with a relatively high sample rate of $1/4$, we can reduce the running time to $38.11s$, which speeds up {\sf PMSJA} by more than 260x.
The error, on the other hand, increases from $13.9\%$ to $23.5\%$. This is mainly due to the fact that the amplification effect in Lemma~\ref{lemma:s4s-v} is small, meaning that the theoretical result of \textsf{DP-S4S} only guarantees the sampled instance is at least as private with respect to the original instance, but it does not allow using a significantly larger $\rho$ (in R\'enyi DP) or $\varepsilon$ (in ($\varepsilon$, $\delta$)-DP) when running the mechanism on sub-sampled data, and reducing the sample rate will increase the scale-up factor $1/q$. It is an interesting future direction to investigate better amplification bounds for combining sampling with the {\sf PMSJA} mechanism.
Nevertheless, \textsf{DP-S4S} helps reduce the noise level to achieve consistently and significantly improved efficiency-utility trade-offs compared to {\sf S\&E}, whose error exceeds $100\%$ even at a sample rate of $1/4$ on $Q5$.

\input{fig/tpch}

\input{fig/c2a}


\medskip
\noindent
\textbf{Graph queries.}
In addition to the queries in~\cite{DBLP:journals/pacmmod/0007S023}, we include another query that counts the fanout-2 pattern $a\to b \wedge a\to c$, which is different from a length-2 path on a directed graph. The results are presented in Figure 
\ref{fig:c2a}.
Similar to the performance on TPC-H, we observe that {\sf DP-S4S} is consistently more accurate than {\sf S\&E} by an order of magnitude. The actual improvement depends on both the sample rate and the characteristics of the data.

For the fanout-2 query on C2A, we have to allocate {\sf PMSJA} with 512GB memory since under the default setup, the QCQP solver MOSEK runs out of memory. In addition to high memory consumption, {\sf PMSJA} is also slow: it took more than 4 hours to finish a single iteration. Our sampling mechanism reduces the running time to within a minute at a sample rate around $10\%$, without severely affecting the error, whereas {\sf S\&E} offers the same level of efficiency improvement at the cost of rather low result utility.

\subsection{Parameter Study}\label{sec:exp:param_study}

\noindent
\textbf{Varying privacy parameter $\varepsilon$.}
Figure~\ref{fig:amazon2-eps} presents the effect of varying the privacy parameter $\varepsilon$ on all the mechanisms over scalar queries.
Observe that even under a relatively small sample rate of $1\%$, {\sf DP-S4S} still matches the error of {\sf R2T}, whereas {\sf S\&E} has an overhead that can reach 14x. In all the scenarios other than {\sf DP-S4S} on $Q_{1-}$, DP noise dominates sampling error; thus, the accuracy of all mechanisms improves with increasing $\varepsilon$. {\sf DP-S4S} clearly achieves a better privacy-utility trade-off compared to {\sf S\&E} in all settings.
The privacy amplification effect may also help {\sf DP-S4S} to outperform {\sf R2T} when DP noise dominates (as for $Q_{2-}$ and $Q_{\square}$), in addition to the benefit of reducing the query time by over 100x. 

Figure~\ref{fig:c2a-eps} shows the results for scalar queries. Since the amplification from Lemma~\ref{lemma:s4s-v} is limited, we run the sampling mechanisms with $q=10\%$ to avoid large scale-back factors.
We observe a 2x-3x error increase for {\sf DP-S4S}, compared to a more than 15x increase in error for {\sf S\&E}.
Nevertheless, since QCQPs are far more expensive to solve than LPs, even with a sample rate of $10\%$, we achieve 2-3 orders of magnitude improvement in running time, according to Figure~\ref{fig:c2a}.
This leads to a favorable privacy-accuracy trade-off, and {\sf DP-S4S} consistently outperforms {\sf S\&E}.

\input{fig/amazon2-eps}

\input{fig/c2a-eps}

\input{fig/amazon2-d}


\medskip
\noindent
\textbf{Degree upper bound $D$ (equivalently $C$ and $\Delta$)}. All the mechanisms considered above achieve instance-specific error, which is practical for queries with a large global sensitivity. For graph queries, there are two reasons that account for a large $\GS$: on the one hand, the degree upper bound $D$ should be a data-independent worst-case upper bound. Under-estimating $D$ can lead to DP violations. In practice, this can be obtained from domain knowledge, e.g., Facebook limits each user to have up to 5,000 friends. On the other hand, for queries involving joins, we need to adopt worst-case join size bounds~\cite{DBLP:conf/focs/AtseriasGM08}, whereas in practice the join size can be much smaller. The same logic applies to the number of collaborators bound $C$ in {\sf S\&E}, and the number of tuples bound $\Delta$ in our mechanism. This has been studied in existing work~\cite{DBLP:conf/sigmod/DongF0TM22,DBLP:journals/pacmmod/0007S023}. 

Figure~\ref{fig:amazon2-d} shows the results with varying $D$ and correspondingly $C$ and $\Delta$.
For all queries, we vary the degree bound $D$ from 512 to 4096. Note that the maximum degree in the dataset is $549>512$; therefore, setting $D=512$ actually violates the privacy constraint.
As expected,
$\GS$ mechanism scales linearly with $\Delta$, which is $D$ for $Q_{1-}$ and $\Omega(D^2)$ for the other queries.
Instance-specific mechanisms like {\sf R2T} is better when $\GS$ is large, as the dependency is only $O(\log D)$. For sampling mechanisms, besides the overhead introduced by {\sf R2T}, a larger $D$ also leads to a larger $C$ for {\sf S\&E} and a larger $\Delta$ for {\sf DP-S4S}, weakening the privacy amplification effects in both.
Accordingly, the result utility of {\sf DP-S4S} is slightly worse than {\sf R2T}, and consistently better than {\sf S\&E}.

For vector queries, both {\sf PMSJA} and {\sf DP-S4S} show performance independent of the choice of $D$. Note that $D$ (or $\Delta$) is not an input to Algorithm~\ref{algo:sample-pmsja}, as do not apply privacy amplification in this case. 

%% file: fig/deezer.tex
\begin{figure}[t]
\centering

\begin{tikzpicture}
  \node{
    \begin{tabular}{@{}c@{\quad}c@{\qquad}c@{\quad}c@{\qquad}c@{\quad}c@{}}
    \tikz{\draw[thick, draw=r2tCol]     plot[mark=asterisk, mark size=3pt] (0,0);} & R2T &
    \tikz{\draw[thick, draw=spCol]      plot[mark=triangle*, mark size=3pt, mark options={fill=white, line width=1pt}] (0,0);} & S\&E &
    \tikz{\draw[thick, draw=dpsCol]     plot[mark=o, mark size=3pt] (0,0);} & DP-S4S
    \end{tabular}
  };
\end{tikzpicture}

\makebox[\textwidth][c]{%
   \resizebox{\linewidth}{!}{%
\begin{tabular}{@{}c@{}c@{}c@{}c@{}}

\hspace{-2mm}
\subcaptionbox{$\boldsymbol{Q}_{1-}$ (base query time: 0.18s).}[ \W ]{%
\begin{tikzpicture}
\begin{axis}[
  width=\W, height=\Hgt,
  xmode=log, ymode=log,
  log ticks with fixed point, scaled ticks=false,
  minor tick style={draw=none},
  grid=major, major tick length=2pt,
  minor grid style={draw=none},  
  xtick={0.1,1,10,100}, xticklabels={0.1,1,10,100},
  ytick={0.1,1,10,100}, yticklabels={0.1,1,10,100},
  tick align=outside, tick style={black},
  tick label style={font=\scriptsize}, label style={font=\scriptsize},
  xlabel={time (s)}, ylabel={relative error (\%)},
  xmin=0.1, xmax=100, ymin=0.1, ymax=100,
  ylabel style={yshift=-4mm}
]
  \addplot[only marks, mark=asterisk, mark size=3pt, draw=r2tCol, thick] coordinates {
(35.00, 0.5505)}; 
  \addplot[mark=triangle, mark size=3pt, mark options={fill=white, line width=1pt}, draw=spCol, thick] coordinates { 
(12.68,1.615) (6.313, 3.1618) (3.16, 5.73) (1.94, 7.38) (0.807,9.72)
 }; 
  \addplot[mark=o, mark size=3pt, draw=dpsCol, thick] coordinates {
(10.69,0.789) (5.713, 0.447) (2.850, 0.722) (1.537, 0.777) (0.642,0.716)
}; 
\end{axis}
\end{tikzpicture}
} & \hspace{-5mm}
\subcaptionbox{$\boldsymbol{Q}_{2-}$ (base query time: 57.4s).}[ \W ]{%
\begin{tikzpicture}
\begin{axis}[
  width=\W, height=\Hgt,
  xmode=log, ymode=log,
  log ticks with fixed point, scaled ticks=false,
  minor tick style={draw=none},
  grid=major, major tick length=2pt,
  minor grid style={draw=none}, 
  xtick={1,10,100,1000,10000}, xticklabels={1,10,100,1000,10000},
  ytick={0.1,1,10,100}, yticklabels={0.1,1,10,100},
  tick align=outside, tick style={black},
  tick label style={font=\scriptsize}, label style={font=\scriptsize},
  xlabel={time (s)}, ylabel={},
  xmin=10, xmax=10000, ymin=0.1, ymax=100
]
  \addplot[only marks, mark=asterisk, mark size=3pt, thick, draw=r2tCol] coordinates {
  (1855.77, 6.30)
}; 
  \addplot[mark=triangle, mark size=3pt, mark options={fill=white, line width=1pt}, thick, draw=spCol] coordinates {
(548.31, 6.5427) (257.32, 8.714) (128.12, 16.70) (59.57,22.25) (32.73,27.52)
}; 
  \addplot[mark=o, mark size=3pt, thick, draw=dpsCol] coordinates {
(536.65,5.509) (271.69, 7.255) (134.84, 5.76) (77.83, 5.75) (36.70, 6.72)
  }; 
\end{axis}
\end{tikzpicture}
} & \hspace{-6mm}
\subcaptionbox{$\boldsymbol{Q}_{\triangle}$ (base query time: 8.87s).}[ \W ]{%
\begin{tikzpicture}
\begin{axis}[
  width=\W, height=\Hgt,
  xmode=log, ymode=log,
  log ticks with fixed point, scaled ticks=false,
  minor tick style={draw=none},
  grid=major, major tick length=2pt,
  minor grid style={draw=none}, 
  xtick={0.1,1,10,100}, xticklabels={0.1,1,10,100},
  ytick={0.1,1,10,100}, yticklabels={0.1,1,10,100},
  tick align=outside, tick style={black},
  tick label style={font=\scriptsize}, label style={font=\scriptsize},
  xlabel={time (s)}, ylabel={},
  xmin=0.1, xmax=100, ymin=0.1, ymax=100
]
  \addplot[only marks, mark=asterisk, mark size=3pt, thick, draw=r2tCol] coordinates {(50.82, 5.1124)}; 
  \addplot[mark=triangle, mark size=3pt, mark options={fill=white, line width=1pt}, thick, draw=spCol] coordinates {
  (16.99,10.874) (8.568,13.318) (4.401, 20.0604) (3.190, 33.660) (1.456,48.284)
}; 
  \addplot[mark=o, mark size=3pt, thick, draw=dpsCol] coordinates {
  (15.67,4.45) (8.316, 5.950) (3.6115, 5.290) (2.592, 4.031) (0.823, 6.546)
  }; 
\end{axis}
\end{tikzpicture}
} & \hspace{-6mm}
\subcaptionbox{$\boldsymbol{Q}_{\square}$ (base query time: 296s).}[ \W ]{%
\begin{tikzpicture}
\begin{axis}[
  width=\W, height=\Hgt,
  xmode=log, ymode=log,
  log ticks with fixed point, scaled ticks=false,
  minor tick style={draw=none},
  grid=major, major tick length=2pt,
  minor grid style={draw=none}, 
  xtick={0.1,1,10,100,1000,10000}, xticklabels={0.1,1,10,100,1000,10000},
  ytick={0.1,1,10,100,1000}, yticklabels={0.1,1,10,100,1000},
  tick align=outside, tick style={black},
  tick label style={font=\scriptsize}, label style={font=\scriptsize},
  xlabel={time (s)}, ylabel={},
  xmin=10, xmax=10000, ymin=0.1, ymax=100
]
  \addplot[only marks, mark=asterisk, mark size=3pt, thick, draw=r2tCol] coordinates {(944.23,16.2)}; 
  \addplot[mark=triangle, mark size=3pt, mark options={fill=white, line width=1pt}, thick, draw=spCol] coordinates {
  (287.36,22.49) (140.50,32.27) (67.04,50.59) (37.06, 60.36) (20.22, 75.626)
}; 
  \addplot[mark=o, mark size=3pt, thick, draw=dpsCol] coordinates {
(269.43, 12.79) (140.32, 17.59) (63.48, 11.78) (45.57,13.77) (18.89, 13.51)
}; 
\end{axis}
\end{tikzpicture}
} 
\end{tabular}
}}
\vspace{-3mm}
\caption{Query error vs.\ time on Deezer with $\varepsilon = 1$. (Sampling rates $q\in\left\{ \frac{1}{64},\tfrac{1}{32}, \tfrac{1}{16}, \tfrac{1}{8}, \tfrac{1}{4} \right\}$)}
\label{fig:deezer-q}
\end{figure}

%% file: fig/amazon2.tex
\begin{figure}[t]
\centering

\makebox[\textwidth][c]{%
   \resizebox{\linewidth}{!}{%
\begin{tabular}{@{}c@{}c@{}c@{}c@{}}

\hspace{-2mm}
\subcaptionbox{$\boldsymbol{Q}_{1-}$ (base query time: 0.21s).}[ \W ]{%
\begin{tikzpicture}
\begin{axis}[
  width=\W, height=\Hgt,
  xmode=log, ymode=log,
  log ticks with fixed point, scaled ticks=false,
  minor tick style={draw=none},
  grid=major, major tick length=2pt,
  minor grid style={draw=none},  
  xtick={0.1,1,10,100}, xticklabels={0.1,1,10,100},
  ytick={0.1,1,10,100}, yticklabels={0.1,1,10,100},
  tick align=outside, tick style={black},
  tick label style={font=\scriptsize}, label style={font=\scriptsize},
  xlabel={time (s)}, ylabel={relative error (\%)},
  xmin=0.1, xmax=100, ymin=0.1, ymax=100,
  ylabel style={yshift=-4mm}
]
  \addplot[only marks, mark=asterisk, mark size=3pt, thick, draw=r2tCol] coordinates {
(39.67,0.4282)}; 
  \addplot[mark=triangle, mark size=3pt, mark options={fill=white, line width=1pt}, thick, draw=spCol] coordinates { 
(12.815,1.153) (6.603, 1.712) (2.946,2.233) (1.557, 3.506) (0.717,5.696)
 }; 
  \addplot[mark=o, mark size=3pt, thick, draw=dpsCol] coordinates {
(11.46,0.475) (5.924,0.446) (2.932,0.721) (1.144,0.6419) (0.341,0.958)
}; 
\end{axis}
\end{tikzpicture}
} & \hspace{-5mm}
\subcaptionbox{$\boldsymbol{Q}_{2-}$ (base query time: 35.1s).}[ \W ]{%
\begin{tikzpicture}
\begin{axis}[
  width=\W, height=\Hgt,
  xmode=log, ymode=log,
  log ticks with fixed point, scaled ticks=false,
  minor tick style={draw=none},
  grid=major, major tick length=2pt,
  minor grid style={draw=none}, 
  xtick={1,10,100,1000}, xticklabels={1,10,100,1000},
  ytick={0.1,1,10,100}, yticklabels={0.1,1,10,100},
  tick align=outside, tick style={black},
  tick label style={font=\scriptsize}, label style={font=\scriptsize},
  xlabel={time (s)}, ylabel={},
  xmin=1, xmax=1000, ymin=0.1, ymax=100
]
  \addplot[only marks, mark=asterisk, mark size=3pt, thick, draw=r2tCol] coordinates {
(782.00,9.113)}; 
  \addplot[mark=triangle, mark size=3pt, mark options={fill=white, line width=1pt}, thick, draw=spCol] coordinates {
(241.67,9.879) (107.19,8.760)  (53.88,9.36) (24.93, 12.14) (12.20,13.06)
}; 
  \addplot[mark=o, mark size=3pt, thick, draw=dpsCol] coordinates {
(243.69, 8.714)  (113.87, 7.758) (57.58,7.69) (26.964,8.393) (13.702,6.874)
 }; 
\end{axis}
\end{tikzpicture}
} & \hspace{-6mm}
\subcaptionbox{$\boldsymbol{Q}_{\triangle}$ (base query time: 7.27s).}[ \W ]{%
\begin{tikzpicture}
\begin{axis}[
  width=\W, height=\Hgt,
  xmode=log, ymode=log,
  log ticks with fixed point, scaled ticks=false,
  minor tick style={draw=none},
  grid=major, major tick length=2pt,
  minor grid style={draw=none}, 
  xtick={0.1,1,10,100}, xticklabels={0.1,1,10,100},
  ytick={0.1,1,10,100}, yticklabels={0.1,1,10,100},
  tick align=outside, tick style={black},
  tick label style={font=\scriptsize}, label style={font=\scriptsize},
  xlabel={time (s)}, ylabel={},
  xmin=0.1, xmax=100, ymin=0.1, ymax=100
]
  \addplot[only marks, mark=asterisk, mark size=3pt, thick, draw=r2tCol] coordinates {(40.41, 1.796)}; 
  \addplot[mark=triangle, mark size=3pt, mark options={fill=white, line width=1pt}, thick, draw=spCol] coordinates {
(13.35,3.904) (6.419,5.668) (3.073,6.868) (1.582,10.491) (0.640,16.63)
}; 
  \addplot[mark=o, mark size=3pt, thick, draw=dpsCol] coordinates {
(11.015,2.104) (5.467,1.564) (2.579,2.239) (1.141,2.810) (0.295,2.860)
  }; 
\end{axis}
\end{tikzpicture}
} & \hspace{-6mm}
\subcaptionbox{$\boldsymbol{Q}_{\square}$ (base query time: 36.2s).}[ \W ]{%
\begin{tikzpicture}
\begin{axis}[
  width=\W, height=\Hgt,
  xmode=log, ymode=log,
  log ticks with fixed point, scaled ticks=false,
  minor tick style={draw=none},
  grid=major, major tick length=2pt,
  minor grid style={draw=none}, 
  xtick={0.1,1,10,100,1000}, xticklabels={0.1,1,10,100,1000},
  ytick={0.1,1,10,100,1000}, yticklabels={0.1,1,10,100,1000},
  tick align=outside, tick style={black},
  tick label style={font=\scriptsize}, label style={font=\scriptsize},
  xlabel={time (s)}, ylabel={},
  xmin=1, xmax=1000, ymin=0.1, ymax=100
]
  \addplot[only marks, mark=asterisk, mark size=3pt, thick, draw=r2tCol] coordinates {(211.94, 10.213)}; 
  \addplot[mark=triangle, mark size=3pt, mark options={fill=white, line width=1pt}, thick, draw=spCol] coordinates {
(68.574,11.013)(34.93,16.34)(16.96,21.17)(8.415,25.100) (4.422,33.569)
}; 
  \addplot[mark=o, mark size=3pt, thick, draw=dpsCol] coordinates {
(63.15,8.26) (33.04,8.54) (16.08, 8.12) (8.16,9.71) (3.47,7.76)
}; 
\end{axis}
\end{tikzpicture}
} 
\end{tabular}
}}
\vspace{-3mm}
\caption{Query error vs.\ time on Amazon with $\varepsilon = 1$. (Sampling rates $q\in\left\{ \frac{1}{64},\tfrac{1}{32}, \tfrac{1}{16}, \tfrac{1}{8}, \tfrac{1}{4} \right\}$)}
\label{fig:amazon2-q}
\end{figure}

%% file: fig/tpch.tex
\begin{figure}[t]
\centering

\begin{tikzpicture}
  \node{
    \begin{tabular}{@{}c@{\quad}c@{\qquad}c@{\quad}c@{\qquad}c@{\quad}c@{}}
    \tikz{\draw[thick, draw=pmsjaCol]      plot[mark=pentagon, mark size=3pt, mark options={fill=white, line width=1pt}] (0,0);} & PMSJA &
    \tikz{\draw[thick, draw=spCol]      plot[mark=triangle*, mark size=3pt, mark options={fill=white, line width=1pt}] (0,0);} & S\&E &
    \tikz{\draw[thick, draw=dpsCol]     plot[mark=o, mark size=3pt] (0,0);} & DP-S4S
    \end{tabular}
  };
\end{tikzpicture}

\makebox[\textwidth][c]{%
   \resizebox{\linewidth}{!}{%
\begin{tabular}{@{}c@{}c@{}c@{}c@{}}

\hspace{-2mm}
\subcaptionbox{$\boldsymbol{Q}_{5}$ (base query time: 14.2s).}[ \W ]{%
\begin{tikzpicture}
\begin{axis}[
  width=\W, height=\Hgt,
  xmode=log, ymode=log,
  log ticks with fixed point, scaled ticks=false,
  minor tick style={draw=none},
  grid=major, major tick length=2pt,
  minor grid style={draw=none},  
  xtick={0.1,1,10,100,1000,10000,100000}, xticklabels={10$^{-1}$,1,10,10$^2$,10$^3$,10$^4$,10$^5$},
  xticklabel style={font=\scriptsize, text height=1.5ex, text depth=.25ex}, 
  ytick={0.1,1,10,100,1000,10000}, 
  yticklabels={10$^{-1}$,1,10,10$^2$,10$^3$,10$^4$},
  yticklabel style={font=\scriptsize, text width=1.5em, align=right, text height=1.5ex, text depth=.25ex},   
  tick align=outside, tick style={black},
  label style={font=\scriptsize},
  xlabel={time (s)}, ylabel={relative error (\%)},
  xmin=1, xmax=100000, ymin=1, ymax=10000,
  ylabel style={yshift=-4mm}
]
  \addplot[only marks, mark=pentagon, mark size=3pt, thick, draw=pmsjaCol] coordinates {
(10145.24, 13.90)
}; 
  \addplot[mark=triangle, mark size=3pt, mark options={fill=white, line width=1pt}, thick, draw=spCol] coordinates { 
(3.51,1849.81) (6.69,931.53) (15.46,469.70) (59.97,241.40)
 }; 
  \addplot[mark=o, mark size=3pt, thick, draw=dpsCol] coordinates {
(4.42,77.66) (7.73, 41.80) (16.65,34.50) (38.11,23.52)
}; 
\end{axis}
\end{tikzpicture}
} & \hspace{-5mm}
\subcaptionbox{$\boldsymbol{Q}_{7a}$ (base query time: 6.46s).}[ \W ]{%
\begin{tikzpicture}
\begin{axis}[
  width=\W, height=\Hgt,
  xmode=log, ymode=log,
  log ticks with fixed point, scaled ticks=false,
  minor tick style={draw=none},
  grid=major, major tick length=2pt,
  minor grid style={draw=none}, 
  xtick={0.1,1,10,100,1000,10000,100000}, xticklabels={10$^{-1}$,1,10,10$^2$,10$^3$,10$^4$,10$^5$},
  xticklabel style={font=\scriptsize, text height=1.5ex, text depth=.25ex}, 
  ytick={0.1,1,10,100,1000,10000}, 
  yticklabels={10$^{-1}$,1,10,10$^2$,10$^3$,10$^4$},
  yticklabel style={font=\scriptsize, text width=1.5em, align=right, text height=1.5ex, text depth=.25ex},
  tick align=outside, tick style={black},
  label style={font=\scriptsize},
  xlabel={time (s)}, ylabel={},
  xmin=0.1, xmax=100, ymin=1, ymax=1000
]
  \addplot[only marks, mark=pentagon, mark size=3pt, thick, draw=pmsjaCol] coordinates {
  (79.42, 8.276)
}; 
  \addplot[mark=triangle, mark size=3pt, mark options={fill=white, line width=1pt}, thick, draw=spCol] coordinates {
(0.969, 170.77) (2.916,108.12) (8.021, 67.20) (16.69, 57.18)
}; 
  \addplot[mark=o, mark size=3pt, thick, draw=dpsCol] coordinates {
(2.75,47.96) (3.29, 32.87) (5.81, 16.60) (11.21, 16.45)
  }; 
\end{axis}
\end{tikzpicture}
} & \hspace{-6mm}
\subcaptionbox{$\boldsymbol{Q}_{7b}$ (base query time: 14.2s).}[ \W ]{%
\begin{tikzpicture}
\begin{axis}[
  width=\W, height=\Hgt,
  xmode=log, ymode=log,
  log ticks with fixed point, scaled ticks=false,
  minor tick style={draw=none},
  grid=major, major tick length=2pt,
  minor grid style={draw=none}, 
  xtick={0.1,1,10,100,1000,10000,100000}, xticklabels={10$^{-1}$,1,10,10$^2$,10$^3$,10$^4$,10$^5$},
  xticklabel style={font=\scriptsize, text height=1.5ex, text depth=.25ex}, 
  ytick={0.1,1,10,100,1000,10000}, 
  yticklabels={10$^{-1}$,1,10,10$^2$,10$^3$,10$^4$},
  yticklabel style={font=\scriptsize, text width=1.5em, align=right, text height=1.5ex, text depth=.25ex},
  tick align=outside, tick style={black},
  label style={font=\scriptsize},
  xlabel={time (s)}, ylabel={},
  xmin=0.1, xmax=100, ymin=1, ymax=1000
]
  \addplot[only marks, mark=pentagon, mark size=3pt, thick, draw=pmsjaCol] coordinates {
(51.71,12.36)
}; 
  \addplot[mark=triangle, mark size=3pt, mark options={fill=white, line width=1pt}, thick, draw=spCol] coordinates {
(1.04,236.08) (2.70,196.77) (9.54, 104.93) (18.87, 94.12)
}; 
  \addplot[mark=o, mark size=3pt, thick, draw=dpsCol] coordinates {
(2.60,65.92) (3.86, 61.22) (6.55, 28.83) (13.31, 16.30)
  }; 
\end{axis}
\end{tikzpicture}
} & \hspace{-6mm}
\subcaptionbox{$\boldsymbol{Q}_{7c}$ (base query time: 10.3s).}[ \W ]{%
\begin{tikzpicture}
\begin{axis}[
  width=\W, height=\Hgt,
  xmode=log, ymode=log,
  log ticks with fixed point, scaled ticks=false,
  minor tick style={draw=none},
  grid=major,  major tick length=2pt,
  minor grid style={draw=none}, 
  xtick={0.1,1,10,100,1000,10000,100000}, xticklabels={10$^{-1}$,1,10,10$^2$,10$^3$,10$^4$,10$^5$},
  xticklabel style={font=\scriptsize, text height=1.5ex, text depth=.25ex}, 
  ytick={0.1,1,10,100,1000,10000}, 
  yticklabels={10$^{-1}$,1,10,10$^2$,10$^3$,10$^4$},
  yticklabel style={font=\scriptsize, text width=1.5em, align=right, text height=1.5ex, text depth=.25ex},
  tick align=outside, tick style={black},
  label style={font=\scriptsize},
  xlabel={time (s)}, ylabel={},
  xmin=1, xmax=1000, ymin=1, ymax=1000
]
  \addplot[only marks, mark=pentagon, mark size=3pt, thick, draw=pmsjaCol] coordinates {
(143.36, 7.112)
}; 
  \addplot[mark=triangle, mark size=3pt, mark options={fill=white, line width=1pt}, thick, draw=spCol] coordinates {
(3.74,460.35) (8.88,227.49) (21.46, 127.54) (81.95, 88.55)
}; 
  \addplot[mark=o, mark size=3pt, thick, draw=dpsCol] coordinates {
(3.98,136.59) (6.22,67.95) (12.30, 34.03) (28.12, 17.25)
}; 
\end{axis}
\end{tikzpicture}
} 
\end{tabular}
}}
\vspace{-3mm}
\caption{Query error vs.\ time on TPC-H SF4 with $\varepsilon = 4$, $\delta=10^{-7}$. (Sampling rates $q\in\left\{ \frac{1}{32},\tfrac{1}{16}, \tfrac{1}{8}, \tfrac{1}{4}\right\}$)}
\label{fig:tpch}
\vspace{-2mm}
\end{figure}

%% file: fig/c2a.tex
\begin{figure}[t]
\centering

\makebox[\textwidth][c]{%
   \resizebox{\linewidth}{!}{%
\begin{tabular}{@{}c@{}c@{}c@{}c@{}}
\hspace{-2mm}
\subcaptionbox{$\boldsymbol{Q}_{1-}$ (base query time: 1.46s).}[ \W ]{%
\begin{tikzpicture}
\begin{axis}[
  width=\W, height=\Hgt,
  xmode=log, ymode=log,
  log ticks with fixed point, scaled ticks=false,
  minor tick style={draw=none},
  grid=major, major tick length=2pt,
  minor grid style={draw=none},  
  xtick={0.1,1,10,100,1000,10000,100000}, xticklabels={10$^{-1}$,1,10,10$^2$,10$^3$,10$^4$,10$^5$},
  xticklabel style={font=\scriptsize, text height=1.5ex, text depth=.25ex}, 
  ytick={0.1,1,10,100,1000,10000}, 
  yticklabels={10$^{-1}$,1,10,10$^2$,10$^3$,10$^4$},
  yticklabel style={font=\scriptsize, text width=1.5em, align=right, text height=1.5ex, text depth=.25ex},   
  tick align=outside, tick style={black},
  label style={font=\scriptsize},
  xlabel={time (s)}, ylabel={relative error (\%)},
  xmin=0.1, xmax=100, ymin=1, ymax=1000,
  ylabel style={yshift=-4mm}
]
  \addplot[only marks, mark=pentagon, mark size=3pt, thick, draw=pmsjaCol] coordinates {
(13.28,17.07)
}; 
  \addplot[mark=triangle, mark size=3pt, mark options={fill=white, line width=1pt}, thick, draw=spCol] coordinates { 
(0.3151,617.65) (0.6120,355.57) (1.298,267.83) (2.677,136.89)
 }; 
  \addplot[mark=o, mark size=3pt, thick, draw=dpsCol] coordinates {
  (0.483,87.81) (0.6737,46.55) (0.957, 42.23) (1.897,23.00)
}; 
\end{axis}
\end{tikzpicture}
} & \hspace{-5mm}
\subcaptionbox{$\boldsymbol{Q}_{2-}$ (base query time: 1.65s).}[ \W ]{%
\begin{tikzpicture}
\begin{axis}[
  width=\W, height=\Hgt,
  xmode=log, ymode=log,
  log ticks with fixed point, scaled ticks=false,
  minor tick style={draw=none},
  grid=major, major tick length=2pt,
  minor grid style={draw=none}, 
  xtick={0.1,1,10,100,1000,10000,100000}, xticklabels={10$^{-1}$,1,10,10$^2$,10$^3$,10$^4$,10$^5$},
  xticklabel style={font=\scriptsize, text height=1.5ex, text depth=.25ex}, 
  ytick={0.1,1,10,100,1000,10000}, 
  yticklabels={10$^{-1}$,1,10,10$^2$,10$^3$,10$^4$},
  yticklabel style={font=\scriptsize, text width=1.5em, align=right, text height=1.5ex, text depth=.25ex},
  tick align=outside, tick style={black},
  label style={font=\scriptsize},
  xlabel={time (s)}, ylabel={},
  xmin=0.1, xmax=100, ymin=1, ymax=10000
]
  \addplot[only marks, mark=pentagon, mark size=3pt, thick, draw=pmsjaCol] coordinates {
(53.17,28.67)
}; 
  \addplot[mark=triangle, mark size=3pt, mark options={fill=white, line width=1pt}, thick, draw=spCol] coordinates {
  (1.007,1233) (2.549,691.53) (5.334,380.1) (11.24,247.72)
}; 
  \addplot[mark=o, mark size=3pt, thick, draw=dpsCol] coordinates {
(1.070,111.204) (1.658,61.51) (3.314, 71.319) (8.896,59.30)
  }; 
\end{axis}
\end{tikzpicture}
} & \hspace{-6mm}
\subcaptionbox{$\boldsymbol{Q}_{3-}$ (base query time: 1.91s).}[ \W ]{%
\begin{tikzpicture}
\begin{axis}[
  width=\W, height=\Hgt,
  xmode=log, ymode=log,
  log ticks with fixed point, scaled ticks=false,
  minor tick style={draw=none},
  grid=major, major tick length=2pt,
  minor grid style={draw=none}, 
  xtick={0.1,1,10,100,1000,10000,100000}, xticklabels={10$^{-1}$,1,10,10$^2$,10$^3$,10$^4$,10$^5$},
  xticklabel style={font=\scriptsize, text height=1.5ex, text depth=.25ex}, 
  ytick={0.1,1,10,100,1000,10000}, 
  yticklabels={10$^{-1}$,1,10,10$^2$,10$^3$,10$^4$},
  yticklabel style={font=\scriptsize, text width=1.5em, align=right, text height=1.5ex, text depth=.25ex},
  tick align=outside, tick style={black},
  label style={font=\scriptsize},
  xlabel={time (s)}, ylabel={},
  xmin=1, xmax=10000, ymin=1, ymax=10000
]
  \addplot[only marks, mark=pentagon, mark size=3pt, thick, draw=pmsjaCol] coordinates {
  (2955.33,39.62)
}; 
  \addplot[mark=triangle, mark size=3pt, mark options={fill=white, line width=1pt}, thick, draw=spCol] coordinates {
(3.174,1039.12) (7.083,679.99) (16.70,511.46) (55.29,314.29)
}; 
  \addplot[mark=o, mark size=3pt, thick, draw=dpsCol] coordinates {
(4.17,82.48) (6.92, 56.46) (12.95,56.35) (53.88,55.82)
}; 
\end{axis}
\end{tikzpicture}
} & \hspace{-6mm}
\subcaptionbox{$\boldsymbol{Q}_{<}$ (base query time: 1.28s).}[ \W ]{%
\begin{tikzpicture}
\begin{axis}[
  width=\W, height=\Hgt,
  xmode=log, ymode=log,
  log ticks with fixed point, scaled ticks=false,
  minor tick style={draw=none},
  grid=major, major tick length=2pt,
  minor grid style={draw=none}, 
  xtick={0.1,1,10,100,1000,10000,100000}, xticklabels={10$^{-1}$,1,10,10$^2$,10$^3$,10$^4$,10$^5$},
  xticklabel style={font=\scriptsize, text height=1.5ex, text depth=.25ex}, 
  ytick={0.1,1,10,100,1000,10000}, 
  yticklabels={10$^{-1}$,1,10,10$^2$,10$^3$,10$^4$},
  yticklabel style={font=\scriptsize, text width=1.5em, align=right, text height=1.5ex, text depth=.25ex},
  tick align=outside, tick style={black},
  label style={font=\scriptsize},
  xlabel={time (s)}, ylabel={},
  xmin=1, xmax=100000, ymin=1, ymax=1000
]
  \addplot[only marks, mark=pentagon, mark size=3pt, thick, draw=pmsjaCol] coordinates {
  (16719,12.76)
}; 
  \addplot[mark=triangle, mark size=3pt, mark options={fill=white, line width=1pt}, thick, draw=spCol] coordinates {
(3.95,765.24) (9.14,457.25)(21.52,314.27)(43.39,214.56)
}; 
  \addplot[mark=o, mark size=3pt, thick, draw=dpsCol] coordinates {
(3.44,48.43)(6.08,27.36)(15.57,19.44) (43.64,24.69)
}; 
\end{axis}
\end{tikzpicture}
} 

\end{tabular}
}}
\vspace{-3mm}
\caption{Query error vs.\ time on C2A with $\varepsilon = 4$, $\delta=10^{-7}$. (Sampling rates $q\in\left\{ \frac{1}{32},\tfrac{1}{16}, \tfrac{1}{8}, \tfrac{1}{4}\right\}$)}
\label{fig:c2a}

\end{figure}

%% file: fig/amazon2-eps.tex
\begin{figure}[t]
\centering

\begin{tikzpicture}
  \node{
    \begin{tabular}{@{}c@{\quad}c@{\qquad}c@{\quad}c@{\qquad}c@{\quad}c@{\qquad}c@{\quad}c@{\qquad}c@{\quad}c@{}}\    \tikz{\draw[thick, draw=r2tCol]     plot[mark=asterisk, mark size=3pt] (0,0);} & R2T &
    \tikz{\draw[thick, draw=pmsjaCol]      plot[mark=pentagon, mark size=3pt, mark options={fill=white, line width=1pt}] (0,0);} & PMSJA &
    \tikz{\draw[thick, draw=red]     plot[mark=10-pointed star, mark size=3pt] (0,0);} & GS &
    \tikz{\draw[thick, draw=spCol]      plot[mark=triangle*, mark size=3pt, mark options={fill=white, line width=1pt}] (0,0);} & S\&E &
    \tikz{\draw[thick, draw=dpsCol]     plot[mark=o, mark size=3pt] (0,0);} & DP-S4S
    \end{tabular}
  };
\end{tikzpicture}

\makebox[\textwidth][c]{%
   \resizebox{\linewidth}{!}{%

\begin{tabular}{@{}c@{}c@{}c@{}c@{}}

\hspace{-2mm}
\subcaptionbox{$\boldsymbol{Q}_{1-}$}[ \W ]{%
\begin{tikzpicture}
\begin{axis}[
  width=\W, height=\Hgt,
  xmode=log, ymode=log,
  log ticks with fixed point, scaled ticks=false,
  minor tick style={draw=none},
  grid=major, major tick length=2pt,
  minor grid style={draw=none},  
  xtick={0.1,0.5,1,5,10}, xticklabels={0.1,,1,,10},
  ytick={0.1,1,10,100}, yticklabels={0.1,1,10,100},
  tick align=outside, tick style={black},
  tick label style={font=\scriptsize}, label style={font=\scriptsize},
  xlabel={$\varepsilon$}, ylabel={relative error (\%)},
  xmin=0.1, xmax=10, ymin=0.1, ymax=100,
  ylabel style={yshift=-4mm}
]
  \addplot[mark=asterisk, mark size=3pt, thick, draw=r2tCol] coordinates {
(0.25,1.3656) (0.5,0.7407) (1,0.4282) (2,0.2720) (4,0.1892)}; 
  \addplot[mark=triangle, mark size=3pt, mark options={fill=white, line width=1pt}, thick, draw=spCol] coordinates {
    (0.25,21.3352) (0.5,13.6337) (1,7.4417) (2,4.1208) (4,2.5884)
 }; 
  \addplot[mark=o, mark size=3pt, thick, draw=dpsCol] coordinates {
  (0.25,2.2924) (0.5,1.2434) (1,0.8714) (2,0.7602) (4,0.7189)
}; 
\end{axis}
\end{tikzpicture}
} & \hspace{-5mm}
\subcaptionbox{$\boldsymbol{Q}_{2-}$}[ \W ]{%
\begin{tikzpicture}
\begin{axis}[
  width=\W, height=\Hgt,
  xmode=log, ymode=log,
  log ticks with fixed point, scaled ticks=false,
  minor tick style={draw=none},
  grid=major, major tick length=2pt,
  minor grid style={draw=none},  
  xtick={0.1,0.5,1,5,10}, xticklabels={0.1,,1,,10},
  ytick={0.1,1,10,100}, yticklabels={0.1,1,10,100},
  tick align=outside, tick style={black},
  tick label style={font=\scriptsize}, label style={font=\scriptsize},
  xlabel={$\varepsilon$}, ylabel={},
  xmin=0.1, xmax=10, ymin=1, ymax=100,
  ylabel style={yshift=-4mm}
]
  \addplot[mark=asterisk, mark size=3pt, thick, draw=r2tCol] coordinates {
  (0.25,15.36) (0.5,11.1936)
  (1,9.1137)
  (2,7.1574)
  (4,4.4236)
}; 
  \addplot[mark=triangle, mark size=3pt, mark options={fill=white, line width=1pt}, thick, draw=spCol] coordinates {
      (0.25,36.8274) (0.5,22.0766) (1,14.72) (2,10.7443) (4,7.6804)
 }; 
  \addplot[mark=o, mark size=3pt, thick, draw=dpsCol] coordinates {
    (0.25,14.0751) (0.5,10.8735) (1,7.0815) (2,4.8275) (4,3.7027)
}; 
\end{axis}
\end{tikzpicture}
} & \hspace{-6mm}
\subcaptionbox{$\boldsymbol{Q}_{\triangle}$}[ \W ]{%
\begin{tikzpicture}
\begin{axis}[
  width=\W, height=\Hgt,
  xmode=log, ymode=log,
  log ticks with fixed point, scaled ticks=false,
  minor tick style={draw=none},
  grid=major, major tick length=2pt,
  minor grid style={draw=none},  
  xtick={0.1,0.5,1,5,10}, xticklabels={0.1,,1,,10},
  ytick={0.1,1,10,100}, yticklabels={0.1,1,10,100},
  tick align=outside, tick style={black},
  tick label style={font=\scriptsize}, label style={font=\scriptsize},
  xlabel={$\varepsilon$}, ylabel={},
  xmin=0.1, xmax=10, ymin=0.1, ymax=100,
  ylabel style={yshift=-4mm}
]
  \addplot[mark=asterisk, mark size=3pt, thick, draw=r2tCol] coordinates {
    (0.25,5.0036) (0.5,2.8686)  (1,1.7967) (2,1.2624) (4, 0.9980)
  }; 
  \addplot[mark=triangle, mark size=3pt, mark options={fill=white, line width=1pt}, thick, draw=spCol] coordinates {
    (0.25,48.2453) (0.5,39.6622)  (1,22.7302) (2,14.1549) (4, 9.6151)
}; 
  \addplot[mark=o, mark size=3pt, thick, draw=dpsCol] coordinates {
    (0.25,12.7301) (0.5,6.6038)  (1,3.6093) (2,2.1041) (4, 1.5598)
  }; 
\end{axis}
\end{tikzpicture}
} & \hspace{-6mm}
\subcaptionbox{$\boldsymbol{Q}_{\square}$}[ \W ]{%
\begin{tikzpicture}
\begin{axis}[
  width=\W, height=\Hgt,
  xmode=log, ymode=log,
  log ticks with fixed point, scaled ticks=false,
  minor tick style={draw=none},
  grid=major, major tick length=2pt,
  minor grid style={draw=none},  
  xtick={0.1,0.5,1,5,10}, xticklabels={0.1,,1,,10},
  ytick={0.1,1,10,100}, yticklabels={0.1,1,10,100},
  tick align=outside, tick style={black},
  tick label style={font=\scriptsize}, label style={font=\scriptsize},
  xlabel={$\varepsilon$}, ylabel={},
  xmin=0.1, xmax=10, ymin=1, ymax=100,
  ylabel style={yshift=-4mm}
]
  \addplot[mark=asterisk, mark size=3pt, thick, draw=r2tCol] coordinates {
    (0.25,16.3366) (0.5, 12.2547) (1, 10.2137)
    (2, 6.69259)
    (4, 3.94380)
  }; 
  \addplot[mark=triangle, mark size=3pt, mark options={fill=white, line width=1pt}, thick, draw=spCol] coordinates {
    (0.25,70.7358) (0.5,58.5833)  (1,45.1040) (2,27.9551) (4, 19.6332)
}; 
  \addplot[mark=o, mark size=3pt, thick, draw=dpsCol] coordinates {
    (0.25,18.8890) (0.5,12.9277)  (1,7.8862) (2,5.3939) (4, 4.0788)
}; 
\end{axis}
\end{tikzpicture}
} 
\end{tabular}
}}
\vspace{-3mm}
\caption{Query error vs.\ $\varepsilon$ on Amazon with $q = 1\%$ for sampling mechanisms.}
\label{fig:amazon2-eps}
\vspace{-2mm}
\end{figure}
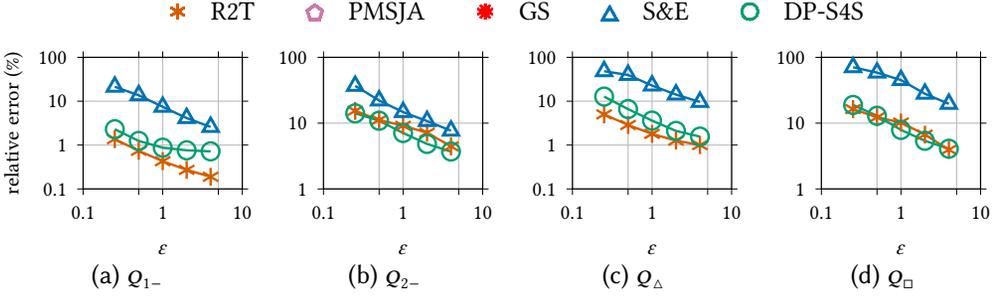

%% file: fig/c2a-eps.tex
\begin{figure}[t]
\centering

\makebox[\textwidth][c]{%
   \resizebox{\linewidth}{!}{%
\begin{tabular}{@{}c@{}c@{}c@{}c@{}}

\hspace{-2mm}
\subcaptionbox{$\boldsymbol{Q}_{1-}$}[ \W ]{%
\begin{tikzpicture}
\begin{axis}[
  width=\W, height=\Hgt,
  xmode=log, ymode=log,
  log ticks with fixed point, scaled ticks=false,
  minor tick style={draw=none},
  grid=major, major tick length=2pt,
  minor grid style={draw=none},  
  xtick={1,2,4,8}, xticklabels={1,2,4,8},
  xticklabel style={font=\scriptsize, text height=1.5ex, text depth=.25ex}, 
  ytick={0.1,1,10,100,1000,10000}, 
  yticklabels={10$^{-1}$,1,10,10$^2$,10$^3$,10$^4$},
  yticklabel style={font=\scriptsize, text width=1.5em, align=right, text height=1.5ex, text depth=.25ex},   
  tick align=outside, tick style={black},
  label style={font=\scriptsize},
  xlabel={$\varepsilon$}, ylabel={relative error (\%)},
  xmin=0.8, xmax=10, ymin=1, ymax=10000,
  ylabel style={yshift=-4mm}
]
  \addplot[mark=pentagon, mark size=3pt, thick, draw=pmsjaCol] coordinates {
(1, 83.4040) (2, 32.0811) (4,9.6077) (8,2.7825)
}; 
  \addplot[mark=triangle, mark size=3pt, mark options={fill=white, line width=1pt}, thick, draw=spCol] coordinates { 
(1, 1672.786) (2, 569.2037) (4, 190.1409) (8, 75.4504)
 }; 
  \addplot[mark=o, mark size=3pt, thick, draw=dpsCol] coordinates {
(1, 230.3819) (2, 88.6889) (4,42.9156) (8,13.2193)
}; 
\end{axis}
\end{tikzpicture}
} & \hspace{-5mm}
\hspace{-2mm}
\subcaptionbox{$\boldsymbol{Q}_{2-}$}[ \W ]{%
\begin{tikzpicture}
\begin{axis}[
  width=\W, height=\Hgt,
  xmode=log, ymode=log,
  log ticks with fixed point, scaled ticks=false,
  minor tick style={draw=none},
  grid=major, major tick length=2pt,
  minor grid style={draw=none},  
  xtick={1,2,4,8}, xticklabels={1,2,4,8},
  xticklabel style={font=\scriptsize, text height=1.5ex, text depth=.25ex}, 
  ytick={0.1,1,10,100,1000,10000}, 
  yticklabels={10$^{-1}$,1,10,10$^2$,10$^3$,10$^4$},
  yticklabel style={font=\scriptsize, text width=1.5em, align=right, text height=1.5ex, text depth=.25ex},   
  tick align=outside, tick style={black},
  label style={font=\scriptsize},
  xlabel={$\varepsilon$}, ylabel={},
  xmin=0.8, xmax=10, ymin=1, ymax=10000,
  ylabel style={yshift=-4mm}
]
  \addplot[mark=pentagon, mark size=3pt, thick, draw=pmsjaCol] coordinates {
(1,217.3099) (2,55.74) (4, 20.3542) (8, 8.5423)
}; 
  \addplot[mark=triangle, mark size=3pt, mark options={fill=white, line width=1pt}, thick, draw=spCol] coordinates {
(1,3316.1705) (2,974.9048) (4,347.11053) (8,132.0732)
}; 
  \addplot[mark=o, mark size=3pt, thick, draw=dpsCol] coordinates {
(1,353.7621) (2,129.9661) (4,44.8931) (8,13.8727)
  }; 
\end{axis}
\end{tikzpicture}
} & \hspace{-6mm}
\subcaptionbox{$\boldsymbol{Q}_{3-}$}[ \W ]{%
\begin{tikzpicture}
\begin{axis}[
  width=\W, height=\Hgt,
  xmode=log, ymode=log,
  log ticks with fixed point, scaled ticks=false,
  minor tick style={draw=none},
  grid=major, major tick length=2pt,
  minor grid style={draw=none},  
  xtick={1,2,4,8}, xticklabels={1,2,4,8},
  xticklabel style={font=\scriptsize, text height=1.5ex, text depth=.25ex}, 
  ytick={0.1,1,10,100,1000,10000}, 
  yticklabels={10$^{-1}$,1,10,10$^2$,10$^3$,10$^4$},
  yticklabel style={font=\scriptsize, text width=1.5em, align=right, text height=1.5ex, text depth=.25ex},   
  tick align=outside, tick style={black},
  label style={font=\scriptsize},
  xlabel={$\varepsilon$}, ylabel={},
  xmin=0.8, xmax=10, ymin=1, ymax=10000,
  ylabel style={yshift=-4mm}
]
  \addplot[mark=pentagon, mark size=3pt, thick, draw=pmsjaCol] coordinates {
  (1,247.9240) (2,90.1030) (4, 23.1737) (8,8.6722)
}; 
  \addplot[mark=triangle, mark size=3pt, mark options={fill=white, line width=1pt}, thick, draw=spCol] coordinates {
  (1, 3456.5575)  (2, 1224.2231) (4, 424.4094) (8, 147.4067)
}; 
  \addplot[mark=o, mark size=3pt, thick, draw=dpsCol] coordinates {
  (1, 490.088)  (2, 163.6385)  (4, 65.7314)  (8, 18.4947)  
}; 
\end{axis}
\end{tikzpicture}
} & \hspace{-6mm}
\subcaptionbox{$\boldsymbol{Q}_{<}$}[ \W ]{%
\begin{tikzpicture}
\begin{axis}[
  width=\W, height=\Hgt,
  xmode=log, ymode=log,
  log ticks with fixed point, scaled ticks=false,
  minor tick style={draw=none},
  grid=major, major tick length=2pt,
  minor grid style={draw=none},  
  xtick={1,2,4,8}, xticklabels={1,2,4,8},
  xticklabel style={font=\scriptsize, text height=1.5ex, text depth=.25ex}, 
  ytick={0.1,1,10,100,1000,10000}, 
  yticklabels={10$^{-1}$,1,10,10$^2$,10$^3$,10$^4$},
  yticklabel style={font=\scriptsize, text width=1.5em, align=right, text height=1.5ex, text depth=.25ex},   
  tick align=outside, tick style={black},
  label style={font=\scriptsize},
  xlabel={$\varepsilon$}, ylabel={},
  xmin=0.8, xmax=10, ymin=1, ymax=10000,
  ylabel style={yshift=-4mm}
]
  \addplot[mark=pentagon, mark size=3pt, thick, draw=pmsjaCol] coordinates {
  (1, 112.8944) (2, 35.9237) (4,12.7411) (8,5.3398)
}; 
  \addplot[mark=triangle, mark size=3pt, mark options={fill=white, line width=1pt}, thick, draw=spCol] coordinates {
(1, 2820.0685) (2, 666.9916) (4, 215.3008) (8, 91.5674)
}; 
  \addplot[mark=o, mark size=3pt, thick, draw=dpsCol] coordinates {
(1, 176.4841) (2, 67.8083) (4, 28.8187) (8, 4.9877)
}; 
\end{axis}
\end{tikzpicture}
} 

\end{tabular}
}}
\vspace{-3mm}
\caption{Query error vs. $\varepsilon$ on C2A with $\delta=10^{-5}$ and $q=10\%$ for sampling mechanisms.}
\label{fig:c2a-eps}
\vspace{-2mm}
\end{figure}
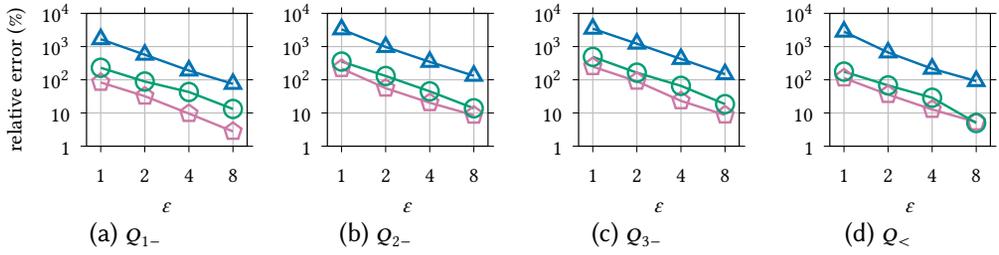

%% file: fig/amazon2-d.tex
\begin{figure}[t]
\centering

\makebox[\textwidth][c]{%
   \resizebox{\linewidth}{!}{%
\begin{tabular}{@{}c@{}c@{}c@{}c@{}}

\hspace{-2mm}
\subcaptionbox{$\boldsymbol{Q}_{1-}$}[ \W ]{%
\begin{tikzpicture}
\begin{axis}[
  width=\W, height=\Hgt,
  xmode=log, ymode=log,
  log ticks with fixed point, scaled ticks=false,
  minor tick style={draw=none},
  grid=major, major tick length=2pt,
  minor grid style={draw=none},  
  xtick={512,1024,2048,4096}, xticklabels={512,1024,2048,4096},
  ytick={0.01,0.1,1,10}, yticklabels={0.01,0.1,1,10},
  tick align=outside, tick style={black},
  tick label style={font=\scriptsize}, label style={font=\scriptsize},
  xlabel={$D$}, ylabel={relative error (\%)},
  xmin=450, xmax=4600, ymin=0.01, ymax=10,
  ylabel style={yshift=-4mm}
]
  \addplot[mark=10-pointed star, mark size=3pt, thick, draw=red] coordinates {
(512,0.0546) (1024,0.1092) (2048,0.2184) (4096, 0.4368)}; 
  \addplot[mark=asterisk, mark size=3pt, thick, draw=r2tCol] coordinates {
(512,0.3755) (1024,0.4282) (2048,0.4282) (4096,0.4282)}; 
  \addplot[mark=triangle, mark size=3pt, mark options={fill=white, line width=1pt}, thick, draw=spCol] coordinates {
(512,6.1785) (1024,7.4417) (2048,7.4417) (4096,7.4417)
 }; 
  \addplot[mark=o, mark size=3pt, thick, draw=dpsCol] coordinates {
(512,0.8499) (1024,0.8714) (2048,0.8714) (4096,0.8714)
}; 
\end{axis}
\end{tikzpicture}
} & \hspace{-5mm}
\subcaptionbox{$\boldsymbol{Q}_{2-}$}[ \W ]{%
\begin{tikzpicture}
\begin{axis}[
  width=\W, height=\Hgt,
  xmode=log, ymode=log,
  log ticks with fixed point, scaled ticks=false,
  minor tick style={draw=none},
  grid=major, major tick length=2pt,
  minor grid style={draw=none},  
  xtick={512,1024,2048,4096}, xticklabels={512,1024,2048,4096},
  ytick={0.1,1,10,100,1000}, yticklabels={0.1,1,10,100,1000},
  tick align=outside, tick style={black},
  tick label style={font=\scriptsize}, label style={font=\scriptsize},
  xlabel={$D$}, ylabel={},
  xmin=450, xmax=4600, ymin=1, ymax=1000,
  ylabel style={yshift=-4mm}
]
  \addplot[mark=10-pointed star, mark size=3pt, thick, draw=red] coordinates {
(512,2.6542) (1024,10.6169) (2048,42.4677) (4096, 169.8711)}; 
  \addplot[mark=asterisk, mark size=3pt, thick, draw=r2tCol] coordinates {
(512,8.7018) (1024,9.1137) (2048,9.1137) (4096,9.5695)
}; 
  \addplot[mark=triangle, mark size=3pt, mark options={fill=white, line width=1pt}, thick, draw=spCol] coordinates {
(512,13.3234) (1024,14.7296) (2048,14.7296) (4096,14.8525)
 }; 
  \addplot[mark=o, mark size=3pt, thick, draw=dpsCol] coordinates {
(512,6.4269) (1024,7.0815) (2048,7.0815) (4096,9.3050)
}; 
\end{axis}
\end{tikzpicture}
} & \hspace{-6mm}
\subcaptionbox{$\boldsymbol{Q}_{\triangle}$}[ \W ]{%
\begin{tikzpicture}
\begin{axis}[
  width=\W, height=\Hgt,
  xmode=log, ymode=log,
  log ticks with fixed point, scaled ticks=false,
  minor tick style={draw=none},
  grid=major, major tick length=2pt,
  minor grid style={draw=none},  
  xtick={512,1024,2048,4096}, xticklabels={512,1024,2048,4096},
  ytick={0.1,1,10,100,1000}, yticklabels={0.1,1,10,100,1000},
  tick align=outside, tick style={black},
  tick label style={font=\scriptsize}, label style={font=\scriptsize},
  xlabel={$D$}, ylabel={},
  xmin=450, xmax=4600, ymin=1, ymax=1000,
  ylabel style={yshift=-4mm}
]
  \addplot[mark=10-pointed star, mark size=3pt, thick, draw=red] coordinates {
(512,38.7999) (1024,155.1999) (2048,620.7999) (4096, 2483.1998)}; 
  \addplot[mark=asterisk, mark size=3pt, thick, draw=r2tCol] coordinates {
  (512,1.6694) (1024,1.7967) (2048,1.7967) (4096, 1.9357)
}; 
  \addplot[mark=triangle, mark size=3pt, mark options={fill=white, line width=1pt}, thick, draw=spCol] coordinates {
(512,19.5289) (1024,22.7302) (2048,22.7302) (4096,22.6242)
 }; 
  \addplot[mark=o, mark size=3pt, thick, draw=dpsCol] coordinates {
(512,2.2261) (1024,3.6093) (2048,3.6093) (4096,3.4815)
}; 
\end{axis}
\end{tikzpicture}
} & \hspace{-6mm}
\subcaptionbox{$\boldsymbol{Q}_{\square}$}[ \W ]{%
\begin{tikzpicture}
\begin{axis}[
  width=\W, height=\Hgt,
  xmode=log, ymode=log,
  log ticks with fixed point, scaled ticks=false,
  minor tick style={draw=none},
  grid=major, major tick length=2pt,
  minor grid style={draw=none},  
  xtick={512,1024,2048,4096}, xticklabels={512,1024,2048,4096},
  ytick={0.1,1,10,100,1000,10000}, yticklabels={0.1,1,10,100,1000,10000},
  tick align=outside, tick style={black},
  tick label style={font=\scriptsize}, label style={font=\scriptsize},
  xlabel={$D$}, ylabel={},
  xmin=450, xmax=4600, ymin=1, ymax=10000,
  ylabel style={yshift=-4mm}
]
  \addplot[mark=10-pointed star, mark size=3pt, thick, draw=red] coordinates {
(512,4240.4951) (1024,33923.96) (2048,271391.6923) (4096, 2171133.53855)}; 
  \addplot[mark=asterisk, mark size=3pt, thick, draw=r2tCol] coordinates {
(512,9.3316) (1024,10.2137)  (2048,10.2393) (4096,10.4149)
}; 
  \addplot[mark=triangle, mark size=3pt, mark options={fill=white, line width=1pt}, thick, draw=spCol] coordinates {
(512,39.2683) (1024,45.1040) (2048,46.5591) (4096,53.5821)
 }; 
  \addplot[mark=o, mark size=3pt, thick, draw=dpsCol] coordinates {
(512,6.0798) (1024,7.8862) (2048,8.3935) (4096,8.9581)
}; 
\end{axis}
\end{tikzpicture}
} 
\end{tabular}
}}
\vspace{-3mm}
\caption{Query error vs.\ $D$ on Amazon with $\varepsilon=1$ and $q = 1\%$ for sampling mechanisms.}
\label{fig:amazon2-d}
\end{figure}
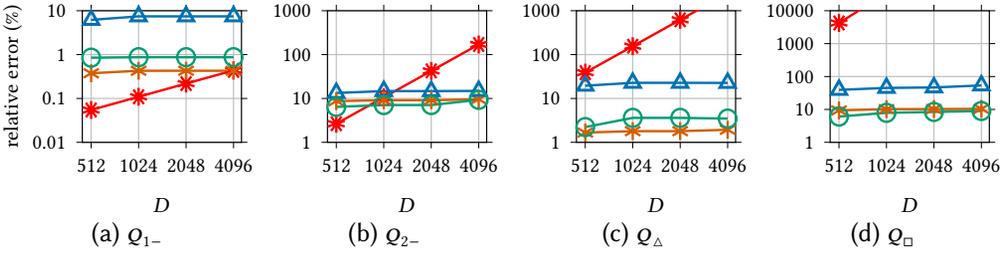

%% file: sections/conclusion.tex
\section{Conclusion}\label{sec:conclusion}

We have designed and implemented a novel sampling-based mechanism {\sf DP-S4S} for efficiently and accurately answering both scalar and vector SJA queries under user-level differential privacy. The main ideas are to (i) sample join tuples and (ii) employ R\'enyi DP as an intermediate step to reach the required ($\varepsilon$, $\delta$)-DP. 
For scalar queries, we have derived a strong sampling amplification result, leading to {\sf DP-S4S} matching (and sometimes even exceeding) the accuracy of the non-sampling solution \textsf{R2T}, at a fraction of the latter's cost. For vector queries, we have additionally built the necessary infrastructure: the first RDP smooth sensitivity mechanism, which can be of independent interest in tasks such as releasing graph statistics or computing gradient vectors in machine learning. Extensive experiments confirm the consistent and significant (often 10x) accuracy improvement over the previous state of the art \textsf{S\&E}.

For future work, we plan to investigate whether the amplification from sampling aggregation units extends to arbitrary $(\varepsilon,\delta)$-DP mechanisms beyond R\'enyi DP. 
In addition, enforcing DP on streaming data~\cite{DBLP:conf/stoc/DworkNPR10,DBLP:conf/pet/ChanLSX12,DBLP:conf/dbsec/CaoCPNY23,DBLP:journals/pacmmod/0007CLS024} via dynamic sampling is another exciting future topic.

%% file: sections/useful-lemmas.tex
\section{Useful Lemmas}

\begin{lemma}[Quasi-convexity of R\'enyi Divergence~\cite{DBLP:journals/tit/ErvenH14}]\label{lemma:quasi}
For any pair of distributions $(P, Q)$ and $(P', Q')$ and any $0<\beta<1$, we have 
\[D_\alpha(\beta P+(1-\beta) P'\| \beta Q+(1-\beta) Q')\leq \max\{D_\alpha(P\|Q), D_\alpha(P'\|Q')\}\,.
\]
\end{lemma}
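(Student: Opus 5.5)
The statement to prove is the quasi-convexity of R\'enyi divergence (Lemma~\ref{lemma:quasi}). I would derive it from the joint convexity of the function $(p,q)\mapsto p^\alpha q^{1-\alpha}$ on $\mathbb{R}_{>0}^2$ for $\alpha>1$. This function is the perspective $q\cdot\phi(p/q)$ of the convex map $\phi(x)=x^\alpha$, and a perspective of a convex function is jointly convex. Write
\[
\mathrm{D}_\alpha(P\|Q)=\frac{1}{\alpha-1}\ln H_\alpha(P,Q),\qquad H_\alpha(P,Q):=\int P(z)^\alpha Q(z)^{1-\alpha}\,\mathrm{d}z .
\]
Since $\frac{1}{\alpha-1}\ln(\cdot)$ is increasing for $\alpha>1$, it suffices to bound $H_\alpha$ of the mixture by $\max\{H_\alpha(P,Q),H_\alpha(P',Q')\}$.

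\textbf{Step 1.} Apply joint convexity pointwise in $z$:
\[
(\beta P+(1-\beta)P')^\alpha(\beta Q+(1-\beta)Q')^{1-\alpha}\le \beta P^\alpha Q^{1-\alpha}+(1-\beta)P'^\alpha Q'^{1-\alpha}.
\]

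\textbf{Step 2.} Integrate over $z$ to obtain
\[
H_\alpha(\text{mix})\le \beta H_\alpha(P,Q)+(1-\beta)H_\alpha(P',Q')\le \max\{H_\alpha(P,Q),H_\alpha(P',Q')\}.
\]

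\textbf{Step 3.} Take $\frac{1}{\alpha-1}\ln$ of both sides and use monotonicity to get the claim.

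The only delicate point is supports and zero densities. At a point where the mixture denominator vanishes, both $Q$ and $Q'$ vanish there. If $P$ or $P'$ is positive at such a point, the corresponding divergence on the right-hand side is already infinite, so the bound holds trivially. I would therefore handle this boundary case separately, extending $p^\alpha q^{1-\alpha}$ by its lower-semicontinuous convex extension ($0$ at $(0,0)$ and $+\infty$ at $(p,0)$ with $p>0$), and then carry out the argument on the support of $Q$.

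I expect the main obstacle to be justifying joint convexity cleanly. The Hessian of $f(p,q)=p^\alpha q^{1-\alpha}$ has positive diagonal entries and determinant zero (by 1-homogeneity), so it is positive semidefinite, which gives convexity on the open quadrant. The perspective argument gives the same conclusion directly.
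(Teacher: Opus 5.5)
Your argument is correct and is the standard one. The paper gives no proof of this lemma and only cites van Erven and Harremo\"{e}s, whose argument for $\alpha>1$ is exactly joint convexity of $\int P^\alpha Q^{1-\alpha}$ (the perspective of $x\mapsto x^\alpha$) followed by monotonicity of $\frac{1}{\alpha-1}\ln(\cdot)$. Your boundary handling relies on the standard convention $\mathrm{D}_\alpha(P\|Q)=+\infty$ whenever $P$ is not absolutely continuous with respect to $Q$; state that convention explicitly, because under the paper's displayed formula, which integrates only over $\operatorname{supp}(Q)$, the inequality can actually fail (e.g., when $P'$ puts mass where $Q'$ vanishes).
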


\begin{lemma}[Convexity of R\'enyi Divergence in its Second Term~\cite{DBLP:journals/tit/ErvenH14}]\label{lemma:convex}
For any distributions $P$, $Q$, $Q'$ and any $0<\beta<1$, we have 
\[D_\alpha(P\|\beta Q+(1-\beta) Q')\leq \beta D_\alpha(P\|Q)+(1-\beta) D_\alpha(P'\|Q')\,.
\]
\end{lemma}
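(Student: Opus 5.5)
I will read the $P'$ on the right-hand side as $P$, since only one first argument appears on the left. The claim to prove is then
\[
D_\alpha(P\|\beta Q+(1-\beta)Q')\le \beta D_\alpha(P\|Q)+(1-\beta)D_\alpha(P\|Q').
\]
I work with $\alpha>1$, as in Definition~\ref{def:rdp}. Because $\frac{1}{\alpha-1}>0$, it suffices to show that
\[
\ln\int P^\alpha\,(\beta Q+(1-\beta)Q')^{1-\alpha}
\le \beta\ln\int P^\alpha Q^{1-\alpha}+(1-\beta)\ln\int P^\alpha Q'^{1-\alpha}.
\]
In other words, I need log-convexity of $Q\mapsto\int P^\alpha Q^{1-\alpha}$ along mixtures. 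The plan is a pointwise inequality followed by H\"older's inequality.

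\textbf{Step 1 (pointwise).} At each point $z$, the weighted AM--GM inequality gives $\beta Q(z)+(1-\beta)Q'(z)\ge Q(z)^\beta Q'(z)^{1-\beta}$. The exponent $1-\alpha$ is negative, so raising both sides to it reverses the inequality:
\[
(\beta Q+(1-\beta)Q')^{1-\alpha}\le Q^{\beta(1-\alpha)}\,Q'^{(1-\beta)(1-\alpha)}.
\]
Multiplying by $P^\alpha$ and using $P^\alpha=(P^\alpha)^\beta(P^\alpha)^{1-\beta}$, the integrand is bounded by $(P^\alpha Q^{1-\alpha})^\beta\,(P^\alpha Q'^{1-\alpha})^{1-\beta}$.

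\textbf{Step 2 (H\"older).} I apply H\"older's inequality with conjugate exponents $1/\beta$ and $1/(1-\beta)$ to the two nonnegative factors $(P^\alpha Q^{1-\alpha})^\beta$ and $(P^\alpha Q'^{1-\alpha})^{1-\beta}$. This gives
\[
\int (P^\alpha Q^{1-\alpha})^\beta (P^\alpha Q'^{1-\alpha})^{1-\beta}\le \Big(\int P^\alpha Q^{1-\alpha}\Big)^\beta\Big(\int P^\alpha Q'^{1-\alpha}\Big)^{1-\beta}.
\]
Taking logarithms and dividing by $\alpha-1$ yields the claim.

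\textbf{Main obstacle: supports and infinities.} The integral in Definition~\ref{def:rdp} is over $\operatorname{supp}$ of the second argument, and $\operatorname{supp}(\beta Q+(1-\beta)Q')=\operatorname{supp}(Q)\cup\operatorname{supp}(Q')$.
\begin{itemize}
\item If $P$ puts mass outside $\operatorname{supp}(Q)$, then $D_\alpha(P\|Q)=\infty$ for $\alpha>1$. The right-hand side is then infinite and the inequality is trivial. The same holds with $Q'$ in place of $Q$.
\item Otherwise $P\ll Q$ and $P\ll Q'$. On the set where $P>0$ both $Q$ and $Q'$ are positive, so Step 1 holds with all quantities finite. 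Points where $P=0$ contribute $0$ to every integral, so they can be discarded.
\end{itemize}
With this convention, Steps 1 and 2 go through verbatim.
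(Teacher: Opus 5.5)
Your proof is correct. The paper gives no argument of its own and only cites van Erven and Harremo\"es~\cite{DBLP:journals/tit/ErvenH14}, and your route is the standard proof of that result for $\alpha>1$: pointwise weighted AM--GM on the mixture, then H\"older with exponents $1/\beta$ and $1/(1-\beta)$, i.e.\ log-convexity of $Q\mapsto\int P^\alpha Q^{1-\alpha}$. Your reading of $P'$ as $P$ is also the form the paper actually needs in the proof of Lemma~\ref{lemma:sample-truncate}.

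One precision on supports. The fact that $D_\alpha(P\|Q)=\infty$ when $P\not\ll Q$ comes from the standard convention $p^\alpha q^{1-\alpha}=+\infty$ where $q=0<p$, not from Definition~\ref{def:rdp} read literally. That definition integrates only over $\operatorname{supp}(Q)$, and under it the lemma fails, e.g.\ when $P=Q'$ and $Q$ are point masses at two different points (the right-hand side becomes $-\infty$ while the left-hand side is $-\ln(1-\beta)>0$). So you should state that convention as the definition you are using, rather than derive it from the paper's formula.
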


\begin{lemma}[Bernstein inequality]\label{lemma:bernstein}
Let $X_1,\dots,X_k$ be independent zero-mean random variables such that $|X_i|\leq 1$ for all $i$, then
\[
\Pr\left[\left|\sum_{i=1}^k X_i\right|\geq t\right]\leq 2\exp\left(-\frac{t^2/2}{\sum_{i=1}^k \mathbf{E}[X_i^2]+t/3}\right)\,.
\]
\end{lemma}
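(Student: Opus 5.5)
This is the classical Bernstein inequality, and I would prove it by the standard Chernoff (moment generating function) method. The plan is to bound the upper tail $\Pr[\sum_i X_i\geq t]$, then apply the same argument to $-X_i$, which also satisfy the hypotheses. A union bound over the two tails then produces the factor $2$. Throughout, write $S=\sum_{i=1}^k X_i$ and $V=\sum_{i=1}^k \mathbf{E}[X_i^2]$.

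\textbf{Step 1: MGF bound for each $X_i$.} Fix $\lambda\in(0,3)$. Expand $e^{\lambda X_i}=1+\lambda X_i+\sum_{j\geq 2}\lambda^j X_i^j/j!$ and take expectations. The linear term vanishes because $\mathbf{E}[X_i]=0$. Since $|X_i|\leq 1$, we have $|X_i|^j\leq X_i^2$ for every $j\geq 2$. We also have $j!\geq 2\cdot 3^{j-2}$ for $j\geq 2$, which is a quick induction. Combining these gives
\[
\mathbf{E}[e^{\lambda X_i}]\leq 1+\frac{\lambda^2\mathbf{E}[X_i^2]}{2}\sum_{j\geq 2}\left(\frac{\lambda}{3}\right)^{j-2}=1+\frac{\lambda^2\mathbf{E}[X_i^2]}{2(1-\lambda/3)}\leq \exp\left(\frac{\lambda^2\mathbf{E}[X_i^2]}{2(1-\lambda/3)}\right).
\]
This is the step that needs the most care: it is the only place where the boundedness $|X_i|\leq 1$ is used, and the constant $1/3$ in the final bound comes precisely from the factorial estimate.

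\textbf{Step 2: independence and Markov.} By independence, $\mathbf{E}[e^{\lambda S}]=\prod_i \mathbf{E}[e^{\lambda X_i}]$. Applying Markov's inequality to $e^{\lambda S}$ gives
\[
\Pr[S\geq t]\leq \exp\left(-\lambda t+\frac{\lambda^2 V}{2(1-\lambda/3)}\right).
\]

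\textbf{Step 3: choice of $\lambda$.} Choose $\lambda=t/(V+t/3)$, which lies in $(0,3)$. For this choice, $1-\lambda/3=V/(V+t/3)$. The quadratic term then equals $\frac{t^2}{2(V+t/3)}$, and the linear term equals $-\frac{t^2}{V+t/3}$. The exponent is therefore $-\frac{t^2/2}{V+t/3}$, which is the claimed bound for one tail.

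Repeating the argument for $-S$ and summing the two tail bounds yields the stated inequality with the factor $2$.
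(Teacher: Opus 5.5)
Your proof is correct, and it is the standard Chernoff/MGF derivation. The paper itself gives no proof of this lemma; it quotes it in the appendix as a known tool, used only for the sampling-error bound in the proof of Lemma~\ref{lemma:s4s-s}, so there is no alternative argument to compare against.

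Two small points need tidying up:
\begin{itemize}
\item The inequality implicitly requires $t>0$. For $t\le 0$ the left-hand side equals $1$, but when $t<0$ and $|t|<3V$ the right-hand side can drop below $1$, so you should state $t>0$ explicitly.
\item Your choice $\lambda=t/(V+t/3)$ lies in $(0,3)$ only when $V>0$. If $V=0$, then $\lambda=3$ and $1-\lambda/3=0$. In that case every $X_i=0$ almost surely, so the bound is trivial; treat it separately.
\end{itemize}
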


\begin{lemma}[Chernoff bound]\label{lemma:chernoff}
Let $X_1,\dots,X_k$ be independent Bernoulli random variables, where $\mathbf{E}[\sum_{i=1}^k X_i]=\mu$. For $0\leq t\leq 4$,
\[
\Pr\left[\sum_{i=1}^k X_i\geq(1+t) \mu\right]\leq \exp\left(-\frac{t^2\mu}{4}\right)\,.
\]
\end{lemma}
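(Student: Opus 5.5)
The plan is the standard moment-generating-function (Chernoff) argument, followed by an elementary one-variable inequality that holds exactly on the range $0\le t\le 4$.

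\emph{Step 1 (MGF bound).} Write $X=\sum_{i=1}^k X_i$ with $X_i\sim\operatorname{Bern}(p_i)$ and $\mu=\sum_i p_i$. For any $\lambda>0$, Markov's inequality and independence give
\[
\Pr[X\ge(1+t)\mu]\le e^{-\lambda(1+t)\mu}\prod_{i=1}^k \mathbf{E}[e^{\lambda X_i}].
\]
Each factor satisfies $\mathbf{E}[e^{\lambda X_i}]=1+p_i(e^\lambda-1)\le \exp(p_i(e^\lambda-1))$, so the product is at most $\exp(\mu(e^\lambda-1))$. Choosing $\lambda=\ln(1+t)$, which is positive for $t>0$, gives
\[
\Pr[X\ge(1+t)\mu]\le \exp\bigl(-\mu\,[(1+t)\ln(1+t)-t]\bigr).
\]
The case $t=0$ is trivial, since the claimed bound is then $1$.

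\emph{Step 2 (reduce to an inequality).} It now suffices to show that
\[
g(t):=(1+t)\ln(1+t)-t-\tfrac{t^2}{4}\ge 0 \quad\text{for } t\in[0,4].
\]
Its derivatives are $g'(t)=\ln(1+t)-t/2$ and $g''(t)=\frac{1}{1+t}-\frac12$, with $g(0)=g'(0)=0$.

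\emph{Step 3 (sign analysis; the main obstacle).} This step needs care, because the inequality fails for large $t$, so the endpoint $t=4$ must be checked directly. On $[0,1]$ we have $g''\ge0$, so $g'$ increases from $0$ and is positive. On $[1,4]$ we have $g''<0$, so $g'$ is strictly decreasing there. Since $g'(4)=\ln5-2<0$, $g'$ changes sign exactly once on $[0,4]$. Hence $g$ increases and then decreases on $[0,4]$, and its minimum over the interval is attained at an endpoint. We have $g(0)=0$ and $g(4)=5\ln5-8\approx0.047>0$. Therefore $g\ge0$ on $[0,4]$, and combining with Step 1 gives the bound $\exp(-t^2\mu/4)$.
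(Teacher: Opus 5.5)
Your proof is correct. The MGF step with $\lambda=\ln(1+t)$ gives the exponent $\mu[(1+t)\ln(1+t)-t]$. Your sign analysis of $g$ then correctly shows $g\ge 0$ on all of $[0,4]$: $g$ increases and then decreases there, with $g(0)=0$ and $g(4)=5\ln 5-8>0$. Two small points of precision: $g''(1)=0$, so strict negativity of $g''$ holds on $(1,4]$ rather than $[1,4]$; and the single sign change of $g'$ also uses $g'(1)=\ln 2-\tfrac12>0$, which your first part supplies. Neither affects the argument.

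The paper does not prove this lemma. It only lists it in the appendix of useful lemmas as a standard fact, so there is no competing route to compare against, and your argument fills in the missing derivation.

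It is worth noting why your direct check at the endpoint is genuinely needed. The usual textbook forms, $\exp(-t^2\mu/3)$ for $t\le 1$ and $\exp(-t^2\mu/(2+t))$ for all $t>0$, give the exponent $t^2\mu/4$ only when $t\le 2$. For $t\in(2,4]$ you must compare against $(1+t)\ln(1+t)-t$ directly, as you do. The margin is thin: $g$ changes sign near $t\approx 4.11$, which explains the cutoff $t\le 4$ in the statement.
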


%% file: sections/proofs.tex
\section{Proofs}

\subsection{Proof for Lemma~\ref{lemma:s4s-s}} \label{sec:proof2}

\begin{proof}
Denote by $f(S):=\sum_{t\in S} w_t$. Note that $f(S)=\sum_{t\in T}\mathbf{1}[t\text{ is sampled}]\cdot w_t$. Therefore
\[
\mathbf{E}[f(S)]=q\cdot f(T),\text{ and }\mathbf{Var}[f(S)]=q(1-q)\sum_{t\in T} w_t^2\leq qn\,.
\]
We have $\Pr\left[\left|\Lap(\frac{\tau_i}{\varepsilon_i})\right|\leq \frac{\tau_i}{\varepsilon_i}\ln\frac{3L}{\beta}\right]>1-\frac{\beta}{3L}$.
By a union bound, this holds for all iterations with probability at least $1-\beta/3$.
Conditioned on this happening, $\tilde{f}(S,\tau_i)$ is always an underestimate since $\tilde{f}(S,\tau_i)\leq \bar{f}(S,\tau_i)\leq f(S)$ and $\tilde{f}(S,\tau_{i^*})\geq f(S) - 2(\tau_{i^*} / \varepsilon_{i^*})\ln(3L/\beta)$.
Let $\tau^*(S):=\max_{u\in U} \sum_{t\in S} \mathbf{1}[u\leadsto t
]$ be the maximum number of records contributed by a user in the sample.
For any $\tau\geq  \tau^*(S)$, the optimal solution to problem \eqref{eq:r2t} assigns $x_t=w_t$ so that $\bar{f}(S,\tau)=f(S)$.
In particular, this holds for a power of 2 where $\tau_{i^*} \geq \tau^*(S) > \tau_{{i^*}-1}$. Thus, we have both $\tilde{f}(S,\tau_{i^*})\leq\max_i\tilde{f}(S,\tau_i)\leq  f(S)$, and
\[
\tilde{f}(S,\tau_{i^*})\geq f(S)-\frac{2\tau_{i^*}}{\varepsilon_{i^*}}\ln\frac{3L}{\beta}>f(S)-\frac{4\tau^*(S)}{\varepsilon_{i^*}}\ln\frac{3L}{\beta}\,.
\]

By Bernstein inequality (Lemma~\ref{lemma:bernstein}), we have
\[
\Pr\left[\left|f(S)-q\cdot f(T)\right|>\sqrt{2nq\ln\frac{6}{\beta}+\frac{1}{9}\left(\ln\frac{6}{\beta}\right)^2}+\frac{1}{3}\ln\frac{6}{\beta}\right]<
\frac{\beta}{3}\,.
\]
Therefore with total probability at least $1-2\beta/3$, we have
\begin{align*}
|\hat{f}(T)-f(T)|&\leq \frac{1}{q}\left(|\max_i\tilde{f}(S,\tau_i)-f(S)|+|f(S)-q\cdot f(T)|\right) \\
&=\tilde{O}\left( \frac{\tau^*(S)}{q\varepsilon_{i^*}} + \sqrt{\frac{n}{q}}+\frac{1}{q}\right)\,,
\end{align*}
where $\tilde{O}(\cdot)$ omits constants and polylogarithmic factors.

Our final step is upper bound $\tau^*(S)$, the contribution of each user in the sample.
Define $\tau^*(T):\max_{u\in U}\sum_{t\in T}\mathbf{1}[u\leadsto t]$, so that each user has at most $\tau^*(T)$ records in the dataset.
By Chernoff bound (Lemma~\ref{lemma:chernoff}), we have for any user $u\in U$, 
\[
\Pr\left[|S_u|>q\cdot \tau^*(T)+\sqrt{4q\cdot \tau^*(T)\ln\frac{3m}{\beta}}\right] \leq \frac{\beta}{3m}\,,
\]
where $m$ is the number of users.
With a union bound, we have $\tau^*(S)=\tilde{O}(q\cdot \tau^*(T))$ with probability at least $1-\beta/3$.

In summary, with probability $1-\beta$, the error of {\sf DP-S4S} is
\[
\sqrt{\frac{2n}{q}\ln\frac{6}{\beta}
}+\frac{1}{3q}\ln\frac{6}{\beta}+\frac{1}{\varepsilon_{i^*}}\left(\tau^*(T)+\sqrt{\frac{4\tau^*(T)}{q}\ln\frac{3m}{\beta}}\right)\,.
\]
\end{proof}

\input{sections/rdp-amplification}

\subsection{Proof for Lemma~\ref{lemma:additive-smooth}}\label{sec:proof3}

\begin{proof}
Consider $\mathbf{I}\sim \mathbf{I}'$, let $\sigma(\mathbf{I})=\sigma_1$ and $\sigma(\mathbf{I}')=\sigma_2$.
By the smoothness of $\operatorname{SS}_F$, we have
\[
e^{-2\gamma}\leq \frac{\sigma_2^2}{\sigma_1^2}=\frac{\operatorname{SS}_F(\mathbf{I}',\gamma)^2}{\operatorname{SS}_F(\mathbf{I},\gamma)^2}\leq e^{2\gamma}
\]
According to Lemma~\ref{lem:vec-diverge}, the divergence between the outputs is
\[
D_\alpha(\M(\mathbf{I})\|\M(\mathbf{I}'))=\frac{\alpha}{2} \cdot \frac{\|F(\mathbf{I})-F(\mathbf{I}')\|_2^2}{(1-\alpha)\sigma_1^2+\alpha\sigma_2^2}+\frac{d}{2(\alpha-1)}\ln\frac{(\sigma_2/\sigma_1)^{2\alpha}}{\alpha(\sigma_2/\sigma_1)^2+1-\alpha}
\]
when $\alpha(\sigma_2/\sigma_1)^2+1-\alpha>0$, which can be satisfied when
\[
\frac{\sigma_2^2}{\sigma_1^2}\geq e^{-2\gamma}>\frac{\alpha-1}{\alpha}\,,
\]
or $\gamma<\frac{1}{2}\ln\frac{\alpha}{\alpha-1}$.
For the first term, we have
\begin{align*}
\frac{\alpha}{2} \cdot \frac{\|F(\mathbf{I})-F(\mathbf{I}')\|_2^2}{(1-\alpha)\sigma_1^2+\alpha\sigma_2^2}&\leq \frac{\alpha}{2}\cdot\frac{\LS(\mathbf{I})^2}{\sigma_1^2\cdot (\alpha(\sigma_2/\sigma_1)^2+1-\alpha)}\\
&\leq \frac{\rho}{2}\cdot \frac{1-\alpha+\alpha e^{-2\gamma}}{1-\alpha+\alpha (\sigma_2^2/\sigma_1^2)}\\ 
&\leq \frac{\rho}{2}\,.
\end{align*}
For the second term, we require that 
\[
\frac{d}{2(\alpha-1)}\ln\frac{(\sigma_2/\sigma_1)^{2\alpha}}{\alpha(\sigma_2/\sigma_1)^2+1-\alpha}\leq \frac{\rho}{2}\,,
\]
which is equivalent to
\[
h(t):=t^{\alpha}-  \alpha At+(\alpha-1)A\leq 0\,,
\]
for $t=\sigma_2^2/\sigma_1^2$ and $A= e^{\frac{\rho(\alpha-1)}{d}}$.
Note that $h'(t)=\alpha t^{\alpha-1}-\alpha A$, therefore $h(t)$ decreases for $t<e^\frac{\rho}{d}$, and increases for $t>e^\frac{\rho}{d}$. Given that $h(\frac{\alpha-1}{\alpha})=t^\alpha>0$, $h((\alpha A)^\frac{1}{\alpha-1})=(\alpha-1)A>0$ and $h(1)=1-\alpha A+\alpha A-A<0$,  the equation $h(t)=0$ has two roots $\frac{\alpha-1}{\alpha}<t_1<1<t_2<(\alpha A)^\frac{1}{\alpha-1}$. The second condition is satisfied when
\[
t_1\leq e^{-2\gamma} \leq \frac{\sigma_2^2}{\sigma_1^2}\leq e^{2\gamma}\leq t_2\,.
\]
This gives $\gamma\leq\frac{1}{2} \min\{-\ln t_1, \ln t_2\}$.
Note that $t_1>\frac{\alpha-1}{\alpha}$, and, thus, the condition that $\gamma<\frac{1}{2}\ln\frac{\alpha}{\alpha-1}$ can be omitted.
The analysis for $D_\alpha(\M(\mathbf{I}')\|\M(\mathbf{I}))$ is symmetric.
\end{proof}

\subsection{Proof for Lemma~\ref{lemma:s4s-v}}\label{sec:proof4}

\begin{proof}
Similar to PMSJA~\cite{DBLP:journals/pacmmod/0007S023}, the output of Algorithm~\ref{algo:sample-pmsja} consists of two parts: an SVT mechanism $\M_1$ that searches for the truncation threshold $\tau$, and a smooth sensitivity mechanism $\M_2$ to release the truncated query for the given $\tau$. By the analysis in~\cite{DBLP:journals/pacmmod/0007S023},  $\M_1(S)$ is $(\varepsilon_1,0)$-DP with respect to the sample $S$.
To apply our Lemma~\ref{lemma:additive-smooth}, note that the local sensitivity of function $\bar{F}/2\tau$ is $\LS_{\bar{F}/2\tau}(S)=\LS_{\bar{F}}(S)/2\tau\leq E(S)+1$ by Inequality~\eqref{eq:pmsja_ls_bound}, and $B(S)=E(S)+1$ has global sensitivity $1$, thus $\M_2(S)$ is $(\alpha,\rho)$-RDP with repect to $S$ following Lemma~\ref{lemma:additive-smooth}. We will first argue about their privacy with respect to the original instance $\mathbf{I}$, and finally convert the guarantee back to $(\varepsilon,\delta)$-DP.

Recall the proof for Lemma~\ref{lemma:pure}. Note that $\M$ is not limited to any specific mechanism until we plug in the exact distribution for $\hat{f}$.
In particular, let $\mathbf{I}\sim \mathbf{I}'$ be neighboring datasets, where $\mathbf{I}'\supseteq \mathbf{I}$
 contains extra records from user $u$. Let $\mathcal{S}$ be the set of all possible samples from $\mathbf{I}$, and $\mathcal{S}_u$ be all possible samples contributed by user $u$  from $\mathbf{I}'$. We have
 \[
\Pr[(\M_1\circ\lambda)(\mathbf{I})=z]=\sum_{S\in\mathcal{S}}\Pr[\lambda(\mathbf{I})=S]\cdot \Pr[\mathcal{M}_1(S)=z]\,.
 \]

With similar steps, we can derive that
\[
\frac{\Pr[(\M_1\circ\lambda)(\mathbf{I})=z]}{\Pr[(\M_1\circ\lambda)(\mathbf{I}')=z]}
\leq \max_{S\in \mathcal{S}}\frac{1}{\sum_{S_u\in \mathcal{S}_u}\Pr[\lambda(T_u)=S_u]\frac{\Pr[\M_1(S\cup S_u)=z]}{\Pr[\M_1(S)=z]}} \,,
\]
where $T_u$ are all the join results  contributed by $u$. 
While it is hard to give a tighter bound for $\frac{\Pr[\M_1(S\cup S_u)=z]}{\Pr[\M_1(S)=z]}$ when $\M_1$ is the SVT mechanism, we can adopt a worst-case argument: when $S_u=\varnothing$, which happens with probability $(1-q)^\Delta$, the ratio is $1$; otherwise, we have $\frac{\Pr[\M_1(S\cup S_u)=z]}{\Pr[\M_1(S)=z]}\geq e^{-\varepsilon_1}$ due to the DP guarantee. The final privacy guarantee is then 
\[
\frac{\Pr[(\M_1\circ\lambda)(\mathbf{I})=z]}{\Pr[(\M_1\circ\lambda)(\mathbf{I}')=z]}\leq \frac{1}{(1-q)^\Delta+(1-(1-q)^\Delta)e^{-\varepsilon_1}}\leq e^{\varepsilon_1}\,.
\]
A similar argument can be made on the inverse, proving that the mechanism $\mathcal{M}_1$ is $\ln(1+q'(e^{\varepsilon_1}-1))$-DP on the original dataset for $q'=1-(1-q)^\Delta$. 

For the RDP mechanism $\M_2$, we follow the proof in Section~\ref{section:rdp-amp}. In particular, we can upper bound
\begin{align*}
& \exp((\alpha-1)\mathrm{D}_\alpha((\M_2\circ\lambda)(\mathbf{I})\|(M_2\circ\lambda)(\mathbf{I}'))) \\
\leq &\max_{S\in\mathcal{S}} \sum_{S_u\subseteq T_u} \Pr[\lambda(T_u)=S_u]\exp\left( (\alpha-1) \mathrm{D}_\alpha(\M_2(S)\|\M_2(S\cup S_u))\right)\\
\leq & (1-q)^\Delta+(1-(1-q)^\Delta)\exp((\alpha-1)\rho)
\end{align*}
Thus $\M_2$ is also $(\alpha,\rho)$-RDP on $\mathbf{I}$. Note that we set $\rho=\varepsilon_2-\frac{\ln(1/\delta)}{\alpha-1}$, by the conversion, $\M_2$ is also $(\varepsilon_2',\delta)$-DP for 
\[
\varepsilon_2'=\frac{1}{\alpha-1}\left(\ln\left(1+q'\left(e^{(\alpha-1)\varepsilon_2}\cdot \delta-1\right)\right)+\ln\frac{1}{\delta}\right)
\]
for the same $q'=1-(1-q)^\Delta$.
Finally, applying basic composition, the whole mechanism is $(\varepsilon_1'+\varepsilon_2',\delta)$-DP.
\end{proof}

%% file: sections/rdp-amplification.tex
\subsection{Proof of Lemma~\ref{lemma:sample-truncate}}\label{section:rdp-amp}

\begin{proof}
Following the proof for pure-DP, now we need to bound $\mathrm{D}_\alpha(\M(T)\|\M(T'))$ and $\mathrm{D}_\alpha(\M(T')\|\M(T))$ for R\'enyi-DP.
By the joint quasi-convexity of R\'enyi divergence (Lemma~\ref{lemma:quasi}), we have
\[
\mathrm{D}_\alpha(\M(T')\|\M(T)) \leq \max_{S\subseteq T} \mathrm{D}_\alpha\left(\sum_{S_u\subseteq T_u}\hat{f}(S\cup S_u,\tau)\middle\|\hat{f}(S,\tau)\right)\,.
\]

Denote $S^*=\arg\max_{S\subseteq T} \mathrm{D}_\alpha\left(\sum_{S_u\subseteq T_u}\hat{f}(S\cup S_u,\tau)\middle\|\hat{f}(S,\tau)\right)$, then similar to~\cite{DBLP:journals/pvldb/JiangLYX24}, we have
\begin{align}\label{eq:div1}
&\exp((\alpha-1)\mathrm{D}_\alpha(\M(T')\|\M(T))) \\
\leq &\sum_{S_u\subseteq T_u} \Pr[\lambda(T_u)=S_u]\cdot\exp((\alpha-1)\mathrm{D}_\alpha(\hat{f}(S^*\cup S_u,\tau)\| \hat{f}(S^*,\tau)))\,. \nonumber
\end{align}

By Lemma~\ref{lem:ec-gauss}, we have
\[
\mathrm{D}_\alpha(\hat{f}(S^*\cup S_u)\|\hat{f}(S^*)) = \frac{\alpha(\bar{f}(S^*\cup S_u,\tau)-\bar{f}(S^*))^2}{2\sigma^2}\leq \left(\frac{\min\{\tau, |S_u|\}}{\tau}\right)^2\cdot \rho\,.
\]
Plugging this into \eqref{eq:div1}, we have
\begin{align*}
&\exp((\alpha-1)\mathrm{D}_\alpha(M(T')\| M(T))) \\
\leq &\sum_{k=0}^\Delta \Pr[|\lambda(T_u)|=k]\cdot \exp\left((\alpha-1) \left(\frac{\min\{\tau,k\}}{\tau}\right)^2\rho\right) \\
=&\sum_{k=0}^{\lfloor\tau\rfloor} p_k \exp\left((\alpha-1)\frac{k^2}{\tau^2}\rho\right)  + \Pr[\operatorname{Bin}(\Delta,q)>\tau] \cdot \exp((\alpha-1)\rho)\,.
\end{align*}

Similarly, we can also upper bound 
\[
\mathrm{D}_\alpha(\M(T)\|\M(T')) \leq \max_{S\subseteq T} \mathrm{D}_\alpha\left(\hat{f}(S,\tau)\middle\|\sum_{S_u\subseteq T_u}\hat{f}(S\cup S_u,\tau)\right)\,.
\]
Let $S^\circ=\arg\max_{S\subseteq T} \mathrm{D}_\alpha\left(\hat{f}(S,\tau)\middle\|\sum_{S_u\subseteq T_u}\hat{f}(S\cup S_u,\tau)\right)$ be where the max is taken. By the convexity of R\'enyi-divergence in its second term (Lemma~\ref{lemma:convex}), we further have
\begin{align*}
&(\alpha-1)\mathrm{D}_\alpha(\M(T)\|\M(T'))\\
\leq & \sum_{S_u\subseteq T_u} \Pr[\lambda(T_u)=S_u]\cdot (\alpha-1)\mathrm{D}_\alpha(\hat{f}(S^\circ,\tau)\|\hat{f}(S^\circ\cup S_u,\tau))\\
=&\sum_{S_u\subseteq T_u} \Pr[\lambda(T_u)=S_u]\ln\left(\exp\left( (\alpha-1) \mathrm{D}_\alpha(\hat{f}(S^\circ,\tau)\|\hat{f}(S^\circ\cup S_u,\tau))\right)\right)\\
\leq& \ln\left(\sum_{S_u\subseteq T_u} \Pr[\lambda(T_u)=S_u]\exp\left( (\alpha-1) \mathrm{D}_\alpha(\hat{f}(S^\circ,\tau)\|\hat{f}(S^\circ\cup S_u,\tau))\right)\right)
\end{align*}
by the concavity of the $\ln(\cdot)$.
Following a similar analysis to equation~\eqref{eq:div1}, we have
\begin{align*}
&\exp((\alpha-1)\mathrm{D}_\alpha(\M(T)\|\M(T')))\\
\leq &\sum_{k=0}^{\lfloor\tau\rfloor} p_k \exp\left((\alpha-1)\frac{k^2}{\tau^2}\rho\right)  + \Pr[\operatorname{Bin}(\Delta,q)>\tau] \cdot \exp((\alpha-1)\rho)\,,
\end{align*}
which concludes the proof.
\end{proof}

%% file: sections/ablation.tex
\section{Evaluations on DP Noise vs. Sampling Error for Scalar Queries}\label{sec:ablation}
Recall that there are two sources of error in Algorithm~\ref{algo:sample-r2t}, sampling error and noise injected to satisfy DP. In this section, we investigate their respective contribution to the overall query error. We start by evaluating the magnitude of amplification in Lemma~\ref{lemma:pure}. 

\begin{table}[htbp]
    \centering
    \caption{DP-S4S Amplification ($\varepsilon=1$, $\Delta=1024$).}     \label{tab:amp}
    
    \begin{tabular}{ccccccc}
    \toprule
       $q\backslash \tau$& 1 & 4 & 16 & 64 & 256 & 1024 \\
       \midrule
       0.1\% & 0.7426 & 0.2878 & 0.0660 & 0.0161 & 0.0040 & 0.0010 \\ 
       1\% & 0.9999 & 0.9976 & 0.6538 & 0.1612 & 0.0400 & 0.0100\\
       10\% & 1.0000 & 1.0000 & 1.0000 & 0.9999 & 0.4007 & 0.1000\\
    \bottomrule
    \end{tabular}
\end{table}

Table~\ref{tab:amp} presents the effective $\varepsilon'(\varepsilon,\tau,\Delta,q)$ in Lemma~\ref{lemma:pure} with default $\varepsilon=1$ and $\Delta=1024$. Clearly, the amplification is significant if $q$ is small or $\tau$ is large, intuitively when the noise scale is over-killing.
This results in smaller noises in Algorithm~\ref{algo:sample-r2t} compared to {\sf R2T}, as we plot the standard deviations of Laplace noises $(\frac{\sqrt{2}\tau_i}{\varepsilon_i})$ from all mechanisms in Figure~\ref{fig:ablation}, with total $\varepsilon=1$ and $C=\Delta=1024$.
Note that {\sf S\&E} does not show clear benefits of privacy amplification even when $q=0.1\%$, whereas {\sf DP-S4S} achieves up to 5x improvement in accuracy. 

\setlength{\W}{0.33\linewidth}
\setlength{\Hgt}{0.9\W}

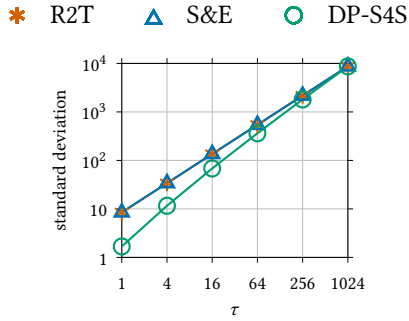
\begin{figure}[htbp]
\centering


\begin{tikzpicture}
  \node{
    \begin{tabular}{@{}c@{\quad}c@{\qquad}c@{\quad}c@{\qquad}c@{\quad}c@{}}
    \tikz{\draw[thick, draw=r2tCol]      plot[mark=asterisk, mark size=3pt, mark options={fill=white, line width=1pt}] (0,0);} & R2T &
    \tikz{\draw[thick, draw=spCol]      plot[mark=triangle*, mark size=3pt, mark options={fill=white, line width=1pt}] (0,0);} & S\&E &
    \tikz{\draw[thick, draw=dpsCol]     plot[mark=o, mark size=3pt] (0,0);} & DP-S4S
    \end{tabular}
  };
\end{tikzpicture}

\begin{tikzpicture}
\begin{axis}[
  width=\W, height=\Hgt,
  xmode=log, ymode=log,
  log ticks with fixed point, scaled ticks=false,
  minor tick style={draw=none},
  grid=major, major tick length=2pt,
  minor grid style={draw=none},  
  xtick={1,4,16,64,256,1024}, xticklabels={1,4,16,64,256,1024},
  xticklabel style={font=\scriptsize, text height=1.5ex, text depth=.25ex}, 
  ytick={0.1,1,10,100,1000,10000}, 
  yticklabels={10$^{-1}$,1,10,10$^2$,10$^3$,10$^4$},
  yticklabel style={font=\scriptsize, text width=1.5em, align=right, text height=1.5ex, text depth=.25ex},   
  tick align=outside, tick style={black},
  label style={font=\scriptsize},
  xlabel={$\tau$}, ylabel={standard deviation},
  xmin=1, xmax=1024, ymin=1, ymax=10000,
  ylabel style={yshift=-4mm}
]
  \addplot[mark=asterisk, mark size=3pt, thick, draw=r2tCol] coordinates {
( 1 ,  8.485281374238571 )
( 4 ,  33.941125496954285 )
( 16 ,  135.76450198781714 )
( 64 ,  543.0580079512686 )
( 256 ,  2172.2320318050743 )
( 1024 ,  8688.928127220297 )
}; 
  \addplot[mark=triangle, mark size=3pt, mark options={fill=white, line width=1pt}, thick, draw=spCol] coordinates { 
( 1 ,  8.485281374238571 )
( 4 ,  33.941125496954285 )
( 16 ,  135.76450198781714 )
( 64 ,  543.0580079512686 )
( 256 ,  2172.2320318050743 )
( 1024 ,  8688.928127220297 )
 }; 
  \addplot[mark=o, mark size=3pt, thick, draw=dpsCol] coordinates {
( 1 ,  1.6796356815613196 )
( 4 ,  11.620721012601106 )
( 16 ,  68.22136516414885 )
( 64 ,  362.3890510966337 )
( 256 ,  1810.495133559503 )
( 1024 ,  8688.928127220297 )
}; 
\end{axis}
\end{tikzpicture}

\vspace{-3mm}
\caption{Noise scale with $q=0.1\%$ for sampling mechanisms.}
\label{fig:ablation}
\end{figure}

While {\sf DP-S4S} requires smaller noise than {\sf R2T} in each iteration, the results must be scaled back by $1/q$ for a sampling mechanism. This is also why {\sf S\&E} has larger error despite adding the same amount of noises.
Fortunately, for our new mechanism, this is offset by the fact that a smaller $\tau^*(S)\approx q\cdot \tau^*(T)$ can be used in the sample compared to the optimal threshold for {\sf R2T}. As a concrete example, for edge counting on the Amazon dataset, the average $\tau$ that optimizes {\sf R2T} is 166, which shrinks to 30 for {\sf S\&E} and 5 for {\sf DP-S4S} with $q=1\%$.
Ignoring the amplification, the DP noise scales linearly with $\tau$ as in Figure~\ref{fig:ablation}, so the smaller DP noise in {\sf DP-S4S} after scaling back by $1/q$ matches the error of {\sf R2T}.

So far we have shown how the DP error matches and may possibly improve {\sf R2T} due to amplification. 
In addition to the DP error, our mechanism also has a sampling error, which by standard analysis is $\tilde{O}(\sqrt{n})$ for constant sampling rate. Informally, the DP error is dominating when $\tau^*(T)\gg \sqrt{n}$, i.e., the user with maximum contribution contributes to a $1/\sqrt{n}$ fraction of the records. Sampling error dominates in the opposite case.
When reducing the sample rate $q$, the mechanism becomes more efficient, where sampling error increases while DP error reduces due to better amplification. The overall error depends on the sum of both.

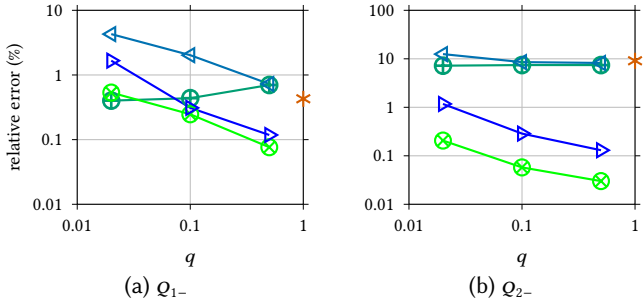
\begin{figure}[htbp]
\centering

\begin{tikzpicture}
  \node{
    \begin{tabular}{@{}c@{ }c@{\quad}c@{ }c@{\quad}c@{ }c@{\quad}c@{ }c@{\quad}c@{ }c@{}}
    \tikz{\draw[thick, draw=r2tCol]     plot[mark=asterisk, mark size=3pt] (0,0);} & R2T &
    \tikz{\draw[thick, draw=dpsCol]     plot[mark=oplus, mark size=3pt] (0,0);} & DP-S4S (DP Err.) &
    \tikz{\draw[thick, draw=green]     plot[mark=otimes, mark size=3pt] (0,0);} & DP-S4S (Sampling Err.) &
    \tikz{\draw[thick, draw=spCol]      plot[mark=triangle*, mark size=3pt, mark options={fill=white, line width=1pt, rotate=90}] (0,0);} & S\&E (DP Err.) &
    \tikz{\draw[thick, draw=blue]      plot[mark=triangle*, mark size=3pt, mark options={fill=white, line width=1pt, rotate=270}] (0,0);} & S\&E (Sampling Err.) 
    \end{tabular}
  };
\end{tikzpicture}

\begin{tabular}{@{}c@{}c@{}}

\hspace{-2mm}
\subcaptionbox{$\boldsymbol{Q}_{1-}$}[ \W ]{%
\begin{tikzpicture}
\begin{axis}[
  width=\W, height=\Hgt,
  xmode=log, ymode=log,
  log ticks with fixed point, scaled ticks=false,
  minor tick style={draw=none},
  grid=major, major tick length=2pt,
  minor grid style={draw=none},  
  xtick={0.01,0.1,1,10}, xticklabels={0.01,0.1,1,10},
  ytick={0.01,0.1,1,10}, yticklabels={0.01,0.1,1,10},
  tick align=outside, tick style={black},
  tick label style={font=\scriptsize}, label style={font=\scriptsize},
  xlabel={$q$}, ylabel={relative error (\%)},
  xmin=0.01, xmax=1, ymin=0.01, ymax=10,
  ylabel style={yshift=-4mm}
]
  \addplot[only marks, mark=asterisk, mark size=3pt, thick, draw=r2tCol] coordinates {
(1,0.4282)}; 
  \addplot[mark=oplus, mark size=3pt, mark options={fill=white, line width=1pt}, thick, draw=dpsCol] coordinates {
(0.5, 0.7001) (0.1, 0.4376) (0.02,0.4011)
 };
  \addplot[mark=otimes, mark size=3pt, thick, draw=green] coordinates {
(0.5,0.07632) (0.1, 0.2467) (0.02,0.5344)
}; 
  \addplot[mark=triangle, mark size=3pt, thick, draw=spCol, mark options={rotate=90}] coordinates {
(0.5,0.7258) (0.1, 2.0240) (0.02,4.2871)
}; 
  \addplot[mark=triangle, mark size=3pt, thick, draw=blue, mark options={rotate=270}] coordinates {
(0.5,0.1180) (0.1, 0.3103) (0.02,1.6647)
}; 
\end{axis}
\end{tikzpicture}
} & 
\subcaptionbox{$\boldsymbol{Q}_{2-}$}[ \W ]{%
\begin{tikzpicture}
\begin{axis}[
  width=\W, height=\Hgt,
  xmode=log, ymode=log,
  log ticks with fixed point, scaled ticks=false,
  minor tick style={draw=none},
  grid=major, major tick length=2pt,
  minor grid style={draw=none},  
  xtick={0.01,0.1,1}, xticklabels={0.01,0.1,1},
  ytick={0.01,0.1,1,10,100}, yticklabels={0.01,0.1,1,10,100},
  tick align=outside, tick style={black},
  tick label style={font=\scriptsize}, label style={font=\scriptsize},
  xlabel={$q$}, ylabel={},
  xmin=0.01, xmax=1, ymin=0.01, ymax=100,
  ylabel style={yshift=-4mm}
]
  \addplot[only marks, mark=asterisk, mark size=3pt, thick, draw=r2tCol] coordinates {
(1,9.113)}; 
  \addplot[mark=oplus, mark size=3pt, mark options={fill=white, line width=1pt}, thick, draw=dpsCol] coordinates {
(0.02, 7.214) (0.1,7.4348) (0.5,7.4250)
 };
  \addplot[mark=otimes, mark size=3pt, thick, draw=green] coordinates {
(0.02, 0.2061) (0.1,0.0576)(0.5, 0.03008)
}; 
  \addplot[mark=triangle, mark size=3pt, thick, draw=spCol, mark options={rotate=90}] coordinates {
  (0.02, 12.5001) (0.1, 8.5497) (0.5, 8.2112)
}; 
  \addplot[mark=triangle, mark size=3pt, thick, draw=blue, mark options={rotate=270}] coordinates {
(0.02, 1.1719) (0.1, 0.2871) (0.5,0.1302)
}; 
\end{axis}
\end{tikzpicture}
} 
\end{tabular}
\vspace{-3mm}
\caption{DP vs. Sampling Error on Amazon ($\varepsilon=1$).}
\label{fig:ablation2}
\end{figure}

We illustrate this with two contrasting examples in Figure~\ref{fig:ablation2}.
For sample $S$, sampling error refers to the difference between $f(T)$ and $f(S)/q$, whereas DP error refers to the difference between $f(S)/q$ and the output $\hat{f}(T)$. 
On the Amazon dataset, for the edge counting query $Q_{1-}$, the query size is $n=925872$ while $\tau^*=549$, so the dominating term depends on the choice of $q$. But for 2-path counting query $Q_{2-}$, we have $\tau^*=151870\gg \sqrt{n}=3122.85$, thus DP error dominates sampling error for $Q_{2-}$ for all $q$'s we consider.

There are several observations. By reducing $q$, the sampling error increases for both mechanisms. For DP error, it increases in {\sf S\&E} due to a larger scale-back factor $1/q$, but decreases in {\sf DP-S4S} following our discussion.
The amplification is better on $Q_{1-}$ due to a smaller $\Delta$.
It is also clear that both our sampling and DP error improve over {\sf S\&E}, where the major improvement is on DP error for small sampling rates.
Note that a major contribution of this work is to show sampling join tuples preserves user-level DP.

Regarding the sample rate $q$, its best value depends upon the data size (which determines sampling error) and the downward sensitivity (which impacts DP error). Under DP, it is challenging to find the best $q$ that balances these two sources of error. Further, $q$ also affects the efficiency of the linear program solver, which is also an important consideration in practice. A heuristic way is to choose a $q$ that balances  efficiency and the (non-DP) sampling error. As shown above, the DP error reduces with smaller $q$.

%% file: sections/sande.tex
\section{More on \textsf{S\&E}}\label{section:sande}

\textbf{Implementation.} The main result in \textsf{S\&E} is as follows.

\begin{lemma}[Sample-and-Explore Amplification~\cite{DBLP:journals/pacmmod/FangY24}]
Assuming each user can have at most $C$ collaborators, let $\lambda:\mathcal{T}\to\mathcal{T}$ be the sample-and-explore process, and let $\M:\mathcal{T}\to \mathcal{O}$ be an $(\varepsilon,\delta)$-user-level DP mechanism. Then, $\M\circ\lambda$ satisfies $(\varepsilon',\delta')$-user-level DP for $\varepsilon'=\ln(1+\frac{C}{m}(e^{2\varepsilon}-1))$ and $\delta'=\frac{C}{m}(e^\varepsilon+1)\delta$.
When $\delta=0$, it suffices to take $\varepsilon'=\ln\left(1+\frac{C}{m}(e^{\varepsilon}-1)\right)$.
\end{lemma}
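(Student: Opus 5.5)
We would prove this as a privacy-amplification-by-subsampling statement, using a coupling between the sampled instances under $\mathbf{I}$ and a neighbor $\mathbf{I}'$. Fix $\mathbf{I}\sim\mathbf{I}'$ with witness $u$. Let $\lambda$ be the sample-and-explore process: pick a uniformly random user $v$ out of $m$, and return the explored instance of aggregation units in which $v$ appears first.

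The first step is to partition the randomness of $\lambda$ by the choice of $v$.

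\emph{Good event.} If $v$ is neither $u$ nor one of $u$'s (at most $C$) collaborators, the explored instances $\lambda(\mathbf{I})$ and $\lambda(\mathbf{I}')$ coincide. No aggregation unit containing $v$ involves $u$, so the order of first appearance is unaffected.

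\emph{Bad event.} This happens with probability at most $C/m$. We absorb $u$ itself into the count, or treat it as being among the $C$; this is the bookkeeping step to check. On the bad event, the two explored instances differ in records contributed by $u$. The subtle point is that they may differ by up to two users' worth of change: removing $u$'s records can also shift which units "appear first" for $v$. So we argue that $\lambda(\mathbf{I})$ and $\lambda(\mathbf{I}')$ are within distance $2$ under the user-level neighboring relation. Group privacy then makes $\M$ $(2\varepsilon,(e^\varepsilon+1)\delta)$-indistinguishable across them. In the pure case we expect the explored instances to be genuine neighbors, which gives $\varepsilon$ alone.

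With this decomposition, write
\[
P=\Pr[\M(\lambda(\mathbf{I}))\in O]=(1-\eta)A+\eta B,\qquad
P'=\Pr[\M(\lambda(\mathbf{I}'))\in O]=(1-\eta)A+\eta B',
\]
where $\eta\le C/m$ is the bad-event probability, $A$ is the common good-event term, and $B,B'$ are the bad-event terms conditioned on the same $v$. We then apply the standard amplification argument (as in Balle et al.). Use $B\le e^{2\varepsilon}B'+\delta_2$, and also compare $B$ against $A$ through a mixture or coupling with the good case, to obtain
\[
P\le (1+\eta(e^{2\varepsilon}-1))P'+\eta(e^\varepsilon+1)\delta .
\]
Monotonicity in $\eta$ then yields the stated $\varepsilon'$ and $\delta'$.

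The main obstacle is the step bounding $B$ in terms of the mixture $(1-\eta)A+\eta B'$, not just $B'$. This is what produces the multiplicative $1+\eta(e^{\varepsilon}-1)$ form rather than a cruder additive bound. We would handle it by averaging over the choice of the bad $v$, using that $\lambda(\mathbf{I})$ for a bad $v$ is itself a neighbor of some good-event explored instance. This is exactly where the collaborator bound $C$ and the explored structure are needed.
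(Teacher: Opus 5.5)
Your good/bad split is the right frame, and it is the only justification the paper gives for this cited lemma: the explored instances agree unless the sampled user is the witness or a collaborator. But the two distance facts you use are reversed, and the step you rely on for the ``main obstacle'' is false.

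For a fixed $v$, removing $u$ deletes exactly the units involving $u$ and does not change the first contributor of any surviving unit. So $\lambda_v(\mathbf{I})$ and $\lambda_v(\mathbf{I}')$ differ only in units contributed by $u$: they are neighbors, not at distance $2$. Conversely, explored instances of two \emph{different} users are in general not neighbors. Every unit of $\lambda_b(\mathbf{I})$ is contributed by $b$ and every unit of $\lambda_g(\mathbf{I})$ by $g$. If, say, each user has a unit nobody else contributes to, no single witness accounts for their difference. Being a neighbor of one good instance would not control $B$ against the average $A$ anyway.

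What holds uniformly is that each $\lambda_v(\cdot)$ is a neighbor of $\emptyset$. Hence every bad explored instance is within distance $2$ of every good one, and this cross-user comparison is where $2\varepsilon$ and $e^\varepsilon+1$ come from. Write $\mu=(1-\eta)\mu_0+\eta\mu_1$ and $\mu'=(1-\eta)\mu_0+\eta\mu_1'$ for the output laws on $\mathbf{I}$ and $\mathbf{I}'$, and $H_\gamma(P\|Q)=\sup_O(P(O)-\gamma Q(O))$. Group privacy plus joint convexity gives $H_{e^{2\varepsilon}}(\mu_1\|\mu_0)\le(1+e^\varepsilon)\delta$. The same-$v$ neighbor relation gives $H_{e^{2\varepsilon}}(\mu_1\|\mu_1')\le\delta$. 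Balle et al.'s advanced joint convexity with $\beta=e^{\varepsilon'-2\varepsilon}$ then yields
\[
H_{e^{\varepsilon'}}(\mu\|\mu')=\eta\, H_{e^{2\varepsilon}}\big(\mu_1\,\|\,(1-\beta)\mu_0+\beta\mu_1'\big)\le \eta\big((1-\beta)(1+e^\varepsilon)\delta+\beta\delta\big)\le \eta(e^\varepsilon+1)\delta,
\]
which is exactly the stated $(\varepsilon',\delta')$. The reverse direction is symmetric, since $\lambda_b(\mathbf{I}')$ is also a neighbor of $\emptyset$. This needs the same user universe of size $m$ on both sides, so that the mixture weights agree, plus the $C$ versus $C+1$ bookkeeping you flagged.

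For $\delta=0$ your proposal has no argument. Whether two explored instances are neighbors does not depend on $\delta$, and the distance-$2$ comparison of $\mu_1$ with $\mu_0$ persists at $\delta=0$, so your route gives only $\ln(1+\eta(e^{2\varepsilon}-1))$. In fact, for the explored instance as defined here, the refinement fails. Construct the instances as follows:
\begin{itemize}
\item give each user $w$ a unit $t_w$ that only $w$ contributes to;
\item put $u$ last in the user order;
\item let $\mathbf{I}'$ add one unit $\{b,u\}$ for each of $C$ users $b$.
\end{itemize}
Then good users explore $\{t_g\}$, while a bad $b$ explores $\{t_b\}$ on $\mathbf{I}$ and $\{t_b,\{b,u\}\}$ on $\mathbf{I}'$. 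The score equal to $-1$, $+1$, $0$ on these three kinds is $1$-Lipschitz for the user-level distance, because $\{t_b\}$ and $\{t_g\}$ are at distance $2$. It therefore extends to a $1$-Lipschitz $s$ with values in $[-1,1]$ on all instances, via $s(X)=\min_Y\{s(Y)+d(X,Y)\}$ followed by clipping. Outputting $1$ with probability $ae^{\varepsilon s(X)}$ is then $\varepsilon$-DP for small $a>0$. With $\eta=C/m$,
\[
\frac{\Pr[\M(\lambda(\mathbf{I}))=1]}{\Pr[\M(\lambda(\mathbf{I}'))=1]}=\frac{1+\eta(e^{2\varepsilon}-1)}{1+\eta(e^{\varepsilon}-1)}>1+\eta(e^{\varepsilon}-1)\quad\text{for all }0<\eta<1.
\]
So at $\delta=0$ you can claim only what the decomposition delivers, $\ln(1+\eta(e^{2\varepsilon}-1))$, unless the original analysis supplies extra structure not present in this model.
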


The analysis in~\cite{DBLP:journals/pacmmod/FangY24} depends only on the event that the sampled user is not a collaborator of the witness user, which happens with probability $1-C/m$. In our implementation of \textsf{S\&E}, we extend it to a more general setting when we jointly sample $k$ out of $m$ users.
In this case, the sampled users have $kC$ collaborators in the worst case, and the probability that the witness user is not one of them is $\binom{m-kC}{k}/\binom{m}{k}$, which is $0$ for $k>m/(C+1)$.
Correspondingly, the explored instance contains tuples where the first contributing user is one of the $k$ users, thus each tuple is sampled with probability $k/m$.
We can run an $(\varepsilon,\delta)$-DP mechanism on the explored instance, and scale its results by $m/k$ for an unbiased estimator. This leads to the following result, which is the theoretical basis our our implementation of \textsf{S\&E} used in Section~\ref{sec:exp}.

\begin{corollary}\label{cor:se}
Assuming each user can have at most $C$ collaborators, let $\lambda_k:\mathcal{T}\to\mathcal{T}$ be the Sample-and-Explore process for $k$ sampled users, and let $\M:\mathcal{T}\to \mathcal{O}$ be an $(\varepsilon,\delta)$-user-level DP mechanism. Then, $\M\circ\lambda_k$ satisfies $(\varepsilon',\delta')$-user-level DP for $\varepsilon'=\ln(1+(1-P)\cdot(e^{2\varepsilon}-1))$ and $\delta'=(1-P)(e^\varepsilon+1)\delta$, where $P=0$ if $k>m/(C+1)$ and $P=\binom{m-kC}{k}/\binom{m}{k}$ otherwise. When $\delta=0$, it suffices to take $\varepsilon'=\ln(1+(1-P)\cdot(e^\varepsilon-1))$.
\end{corollary}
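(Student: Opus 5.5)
The plan is to reuse the argument behind the Sample-and-Explore Amplification lemma of~\cite{DBLP:journals/pacmmod/FangY24}. The single-user sampling step is replaced by joint sampling of $k$ users, and the only quantity that changes is the probability of the ``good'' event under which the explored instance ignores the witness. Fix neighbors $\mathbf{I}\sim\mathbf{I}'$ with witness $u$, and let $\mathcal{K}$ be the random set of $k$ sampled users. Since $m$ is assumed public, I will couple the sampling on $\mathbf{I}$ and $\mathbf{I}'$ by using the same $\mathcal{K}$.

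Define the good event $G$ as follows: $\mathcal{K}$ contains neither $u$ nor any collaborator of $u$. On $G$, no explored tuple has $u$ as a contributor. The explored tuples are those whose first contributing user lies in $\mathcal{K}$, and a tuple involving $u$ would force some sampled user to be $u$ or a collaborator of $u$. Hence $\lambda_k(\mathbf{I})=\lambda_k(\mathbf{I}')$ on $G$.

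To lower bound $\Pr[G]$, I use the worst-case accounting behind the corollary. The sampled users have at most $kC$ collaborators in total, so at most $kC$ users are ``dangerous''. By the collaboration relation (which is symmetric), $u$ is excluded exactly when it lies outside the sampled users' collaborator sets. I would then count $k$-subsets avoiding a blocked set of size $kC$, giving $\Pr[G]\ge \binom{m-kC}{k}/\binom{m}{k}=P$, with $P=0$ when $k>m/(C+1)$ since then no such subset is guaranteed. I expect this counting to be the main obstacle, because the blocked set depends on $\mathcal{K}$ itself. I would follow the paper's stated reasoning here and treat it as a worst-case bound rather than an exact probability.

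Next, for any output set $O$, decompose
\[
\Pr[\M(\lambda_k(\mathbf{I}))\in O]=\Pr[G]A+(1-\Pr[G])B,\qquad \Pr[\M(\lambda_k(\mathbf{I}'))\in O]=\Pr[G]A+(1-\Pr[G])B'.
\]
Here $A$ is the common conditional probability on $G$, and $B,B'$ are the conditional probabilities on $\bar G$. As in~\cite{DBLP:journals/pacmmod/FangY24}, I would show that $B'\le e^{2\varepsilon}A+(e^{\varepsilon}+1)\delta$ and symmetrically for $B$. To do this, couple a bad sample with a good one: swap the offending sampled user for a non-offending one. The resulting explored instances differ in the records of at most two users, and group privacy for $\M$ then yields the $2\varepsilon$ and the $(e^\varepsilon+1)\delta$ terms.

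Substituting these bounds gives
\[
\Pr[\M(\lambda_k(\mathbf{I}'))\in O]\le\bigl(1+(1-\Pr[G])(e^{2\varepsilon}-1)\bigr)\Pr[\M(\lambda_k(\mathbf{I}))\in O]+(1-\Pr[G])(e^\varepsilon+1)\delta.
\]
Both $\varepsilon'$ and $\delta'$ are monotone in $1-\Pr[G]$, so we may replace $\Pr[G]$ by $P$. For $\delta=0$, the two-user swap can be replaced by comparison with the neighbor differing only in $u$, which gives the single factor $e^{\varepsilon}$ and hence $\varepsilon'=\ln(1+(1-P)(e^\varepsilon-1))$.
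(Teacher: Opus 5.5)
\paragraph{Overall.} You follow the route the paper sketches: rerun the good/bad decomposition behind the Sample-and-Explore lemma of \cite{DBLP:journals/pacmmod/FangY24}, changing only the probability of the good event. The probability part is easier than you fear, and the swap step, not the counting, is where the argument breaks. In fact the stated bound fails in general.

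\paragraph{The good-event probability.} Collaboration is symmetric, so $G$ is simply the event that $\mathcal{K}$ avoids the fixed set $D=\{u\}\cup N(u)$, where $N(u)$ is the set of collaborators of the witness. Hence $\Pr[G]=\binom{m-|D|}{k}/\binom{m}{k}\ge\binom{m-C-1}{k}/\binom{m}{k}$. This is at least $P$ once $kC\ge C+1$; for $k=1$ you need $u$ outside the sampling frame. Your final substitution also needs the same-$\mathcal{K}$ comparison $B'\le e^{\varepsilon}B+\delta$, not only the comparison with $A$.

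\paragraph{The swap step fails for $k\ge2$.} For $k\ge 2$, $\bar G$ contains samples with $j\ge 2$ members of $D$. Swapping one of them leaves the sample bad. Swapping all $j$ changes the explored instance in the records of $2j$ users, so group privacy gives only $e^{2j\varepsilon}$, not $e^{2\varepsilon}$. The paper's one-line extension has the same hole, because the Fang--Yi argument relies on exactly one offending user.

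\paragraph{Counterexample for $k=2$.} Give every user $x$ a tuple $s_x$ contributed by $x$ alone, present in both instances. Let $\mathbf{I}'$ add tuples $(v_i,u)$ for $i\le C$, each with first contributor $v_i$, and let $J=|\mathcal{K}\cap\{v_1,\dots,v_C\}|$. Take $\M=g+\Lap(1/\varepsilon)$, where $g$ is defined on explored instances as follows:
\begin{itemize}
\item $g=0$ when $J=0$;
\item for samples avoiding $u$ with $J\ge1$, $g=2J-1$ on $\lambda_k(\mathbf{I})$ and $g=2J$ on $\lambda_k(\mathbf{I}')$;
\item for $k=2$, $g=2$ on both instances for the samples $\{u,v_i\}$.
\end{itemize}
The solo tuples make these values $1$-Lipschitz for the user-level distance. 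So $g$ extends to all of $\mathcal{T}$, and $\M$ is $\varepsilon$-DP. On $O=[t,\infty)$ with $t\ge 2k$, the likelihood ratio equals $\mathbf{E}[e^{\varepsilon g(\lambda_k(\mathbf{I}'))}]/\mathbf{E}[e^{\varepsilon g(\lambda_k(\mathbf{I}))}]$. For $k=2$, $m=10^5$, $C=100$ and $e^{\varepsilon}=100$, this ratio is about $55$, whereas $1+(1-P)(e^{2\varepsilon}-1)\approx 41$. A valid bound must stratify $\bar G$ by $J$ and pay group privacy of size about $2J$ in each stratum, as the paper's proof of Lemma~\ref{lemma:pure} does with its $\operatorname{Bin}(\Delta,q)$ sum.

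\paragraph{The $\delta=0$ step.} This step fails for a more basic reason. Comparing with the neighbor that differs only in $u$ relates $\lambda_{\mathcal{K}}(\mathbf{I}')$ to $\lambda_{\mathcal{K}}(\mathbf{I})$, i.e., it gives $B'\le e^{\varepsilon}B$. With no relation between $B'$ and $A$, this yields no amplification at all: $\frac{PA+(1-P)B'}{PA+(1-P)B}\to e^{\varepsilon}$ as $A/B\to 0$. Relating a bad sample to a good one costs two users, and the $e^{\varepsilon}$ refinement is false already for $k=1$.

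\paragraph{Counterexample for $k=1$.} In the instance above take $C=1$, so $P=(m-1)/m$. Set $g(\{s_w\})=0$ for $w\neq v_1$, $g(\{s_{v_1}\})=1$ and $g(\{s_{v_1},(v_1,u)\})=2$. This is $1$-Lipschitz, since the last set is two steps from every $\{s_w\}$. For $t\ge 2$ the ratio is $\frac{P+(1-P)e^{2\varepsilon}}{P+(1-P)e^{\varepsilon}}$. It exceeds $1+(1-P)(e^{\varepsilon}-1)$ by $\frac{P(1-P)(e^{\varepsilon}-1)^2}{P+(1-P)e^{\varepsilon}}>0$ for every $\varepsilon>0$.

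\paragraph{What survives.} Your decomposition, with the two fixes above, does prove the $k=1$ case with $e^{2\varepsilon}$ and $\delta'=(1-P)(e^{\varepsilon}+1)\delta$. There, exactly one offending user has to be swapped out.
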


\vspace{3pt}
\noindent
\textbf{Sampling error analysis.} \textsf{S\&E} can also incur a large sampling error.
Note that in {\sf S\&E}, a tuple is sampled only if its {\it first} contributing user is sampled. Consider a typical case where only $m/l$ users can be the first contributing user (e.g.~a join query where the first user is always the supplier but not the customer), and each of them collaborate with the other users to contribute $l\cdot n/m$ records. Then the sample variance is
\[
\frac{m^2}{k^2}\cdot k\cdot \left(\frac{m}{l} \cdot  \frac{l^2n^2}{m^2}\cdot \frac{1}{m}\left(1-\frac{1}{m}\right)\right)\approx \frac{l\cdot n^2}{k}\,.
\]
Under the same sample rate $k=qn$, this is larger than the sampling variance of \textsf{DP-S4S} by an additional $l$ factor.

Since \textsf{DP-S4S} performs sampling on aggregation units, it automatically takes into account the correlation between users.
Moreover, our tuple-level sampling intuitively reduces the contribution of each user, which brings down the noise required for DP on the sample, and cancels with the effect of scaling back by the inverse sample rate.
This cannot be achieved by {\sf S\&E}, since the algorithm takes \textit{all} tuples associated with the sampled user, and the contribution is not reduced.


%% file: sections/ablation-ss.tex
\section{Experimental Evaluation on RDP-SS}\label{sec:ablation-ss}

In this section, we perform an empirical study on the proposed RDP-SS mechanism.
The original smooth sensitivity mechanism proposed by \citet{DBLP:conf/stoc/NissimRS07} is as follows.

\begin{lemma}[Smooth-sensitivity Mechanism~\cite{DBLP:conf/stoc/NissimRS07}]\label{lemma:ss-mechnism}
Given $F:\mathcal{I}\to\mathbb{R}^d$ and its smooth-sensitivity function $
\operatorname{SS}_F:\mathcal{I}\times\mathbb{R}\to\mathbb{R}$ as in Lemma~\ref{lemma:ss}, the smooth-sensitivity mechanism is defined as
\[
\M_{\operatorname{SS}}(\mathbf{I}):=F(\mathbf{I})+ \eta\cdot \operatorname{SS}_F(\mathbf{I},\gamma)\cdot Z\,.
\]
Its privacy guarantee is as follows.
\begin{enumerate}
\item If $\gamma=\frac{\varepsilon}{6d}$, $\eta=\frac{6}{\varepsilon}$ and $Z\sim\operatorname{Cauchy}^d(0,1)$, then $\M_{\operatorname{SS-C}}$ satisfies $(\varepsilon,0)$-DP.
\item If $\gamma =\frac{\varepsilon}{4(d+\ln(2/\delta))}$, $\eta=\frac{2}{\varepsilon}$ and $Z\sim\Lap^d(0,1)$, then $\M_{\operatorname{SS-L}}$ satisfies $(\varepsilon,\delta)$-DP.
\item If $\gamma = \frac{\varepsilon}{4(d+\ln(2/\delta))}$, $\eta=\frac{5\sqrt{2\ln(2/\delta)}}{\varepsilon}$ and $Z\sim \mathcal{N}(0, \mathbf{1}_d)$, then $\M_{\operatorname{SS-G}}$ satisfies $(\varepsilon,\delta)$-DP.
\end{enumerate}
\end{lemma}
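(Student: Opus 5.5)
The plan is to follow the admissible-noise framework of \citet{DBLP:conf/stoc/NissimRS07}. All three items are then instances of one generic argument, plus a distribution-specific check of two properties of $Z$.

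\textbf{Generic reduction.} Fix neighbors $\mathbf{I}\sim\mathbf{I}'$ and write $s=\operatorname{SS}_F(\mathbf{I},\gamma)$ and $s'=\operatorname{SS}_F(\mathbf{I}',\gamma)$. Lemma~\ref{lemma:ss} gives two facts:
\begin{itemize}
\item $\|F(\mathbf{I})-F(\mathbf{I}')\|\leq \LS_F(\mathbf{I})\leq s$;
\item $s'/s\in[e^{-\gamma},e^{\gamma}]$.
\end{itemize}
Hence the output $F(\mathbf{I}')+\eta s' Z$ equals $F(\mathbf{I})+\eta s\,(e^{\lambda}Z+\boldsymbol{\Delta})$, where $|\lambda|\leq\gamma$ and $\|\boldsymbol{\Delta}\|\leq 1/\eta$ in the relevant norm. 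So it suffices to prove two properties of the noise $Z$:
\begin{itemize}
\item \emph{sliding}: for every shift with $\|\boldsymbol{\Delta}\|\leq 1/\eta$, $\Pr[Z\in O]\leq e^{\varepsilon/2}\Pr[Z+\boldsymbol{\Delta}\in O]+\delta/2$;
\item \emph{dilation}: for every $|\lambda|\leq\gamma$, $\Pr[Z\in O]\leq e^{\varepsilon/2}\Pr[e^{\lambda}Z\in O]+\delta/2$.
\end{itemize}
Chaining the two, and applying each once in each direction, gives $(\varepsilon,\delta)$-DP after absorbing the $e^{\varepsilon/2}$ factor on the $\delta$ term, which the constants allow. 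In the pure case both properties hold pointwise on densities with $\delta=0$.

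\textbf{Per-distribution checks.}
\begin{itemize}
\item \emph{Cauchy.} The density is $\prod_i\frac{1}{\pi(1+z_i^2)}$. The shifted log-ratio $\ln\frac{1+(z_i+\Delta_i)^2}{1+z_i^2}$ is at most about $2|\Delta_i|$ uniformly in $z_i$, so the total is at most about $2\|\boldsymbol{\Delta}\|_1\le 2/\eta$. Dilation contributes a Jacobian $d\lambda$ plus $\sum_i\ln\frac{1+e^{2\lambda}z_i^2}{1+z_i^2}\leq 2d|\lambda|$, hence about $3d\gamma$ in total. Plugging in $\eta=6/\varepsilon$ and $\gamma=\varepsilon/(6d)$ keeps both pieces within budget.
\item \emph{Laplace.} Sliding is exact: the log-ratio is at most $\|\boldsymbol{\Delta}\|_1\le 1/\eta=\varepsilon/2$. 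Dilation gives log-ratio $d\lambda+(e^{\lambda}-1)\|Z\|_1$, which is not uniformly bounded. I would bound it except on the event $\|Z\|_1>d+\ln(2/\delta)$, whose probability is at most $\delta/2$ by a Gamma tail (sum of exponentials). On the complement, $\gamma=\varepsilon/(4(d+\ln(2/\delta)))$ keeps the ratio below $\varepsilon/2$.
\item \emph{Gaussian.} Sliding gives log-ratio $\langle Z,\boldsymbol{\Delta}\rangle+\|\boldsymbol{\Delta}\|_2^2/2$. The inner product is a one-dimensional Gaussian with standard deviation $\|\boldsymbol{\Delta}\|_2\le 1/\eta$, so it exceeds $\sqrt{2\ln(2/\delta)}/\eta$ with probability at most $\delta/2$. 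The factor $5\sqrt{2\ln(2/\delta)}/\varepsilon$ in $\eta$ makes this at most $\varepsilon/2$. Dilation gives $d\lambda+\frac{1-e^{-2\lambda}}{2}\|Z\|_2^2$, which I would control by a $\chi^2_d$ tail at level about $d+2\ln(2/\delta)$, again using the stated $\gamma$.
\end{itemize}

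\textbf{Main obstacle.} I expect the dilation step for the Laplace and Gaussian cases to be the crux. Those log-ratios grow with $\|Z\|$, so they must be cut off by concentration, and the failure probability has to be split so that the sliding and dilation events together, in both directions, cost at most $\delta$. Matching the specific constants ($4$, $5$, $2/\delta$) will take some bookkeeping, whereas the structural argument is straightforward.
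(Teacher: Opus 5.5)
\textbf{Verdict: the framework is right, but the Laplace and Gaussian dilation steps fail as written.}

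Your architecture is the one the paper relies on. It is the admissible-noise argument of \cite{DBLP:conf/stoc/NissimRS07}, which the paper imports without proof. Your generic reduction, the Cauchy case (where both log-ratios are uniformly bounded) and the two sliding checks all go through. The gap is in the dilation step for Laplace and Gaussian noise, which you correctly flagged as the crux.

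\textbf{The sign error.} You wrote the log-ratios as $d\lambda+(e^{\lambda}-1)\|Z\|_1$ and $d\lambda+\frac{1-e^{-2\lambda}}{2}\|Z\|_2^2$, so the Jacobian and the radial term add. Writing $h$ for the density of $Z$, we have $\Pr[e^{\lambda}Z\in O]=\int_O e^{-d\lambda}h(e^{-\lambda}w)\,\mathrm{d}w$. The log density ratio is therefore $d\lambda-(1-e^{-\lambda})\|w\|_1$ for Laplace and $d\lambda-\frac{1-e^{-2\lambda}}{2}\|w\|_2^2$ for Gaussian. The two terms have opposite signs. To first order the ratio is $\lambda(d-\|w\|_1)$, resp.\ $\lambda(d-\|w\|_2^2)$, and only $\lambda<0$ is dangerous.

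\textbf{Why your tail bounds fail.} If you bound the two terms separately, the largest cutoff compatible with the stated $\gamma$ is about $d+2\ln(2/\delta)$. The tail bounds you then need are false. $\|Z\|_1\sim\mathrm{Gamma}(d,1)$ and $\|Z\|_2^2\sim\chi^2_d$ both fluctuate by order $\sqrt d$ around their mean $d$. So once $d$ is large compared with $\ln(1/\delta)$, such tails are far above $\delta/2$, and they tend to $1/2$ when $d\gg\ln^2(1/\delta)$. Already at the paper's own setting $d=10$, $\delta=10^{-7}$:
\begin{itemize}
\item $\Pr[\|Z\|_1>d+\ln(2/\delta)]\approx 7\cdot10^{-5}$;
\item $\Pr[\|Z\|_2^2>d+2\ln(2/\delta)]\approx 4\cdot 10^{-6}$;
\end{itemize}
both far above $\delta/2=5\cdot 10^{-8}$.

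\textbf{The fix: keep the cancellation.} At $\lambda=-\gamma$ the bad event is $\|w\|_1>\frac{\varepsilon/2+d\gamma}{e^{\gamma}-1}\approx 3d+2\ln(2/\delta)$, resp.\ $\|w\|_2^2>\frac{\varepsilon+2d\gamma}{e^{2\gamma}-1}\approx 3d+2\ln(2/\delta)$. This is a deviation of roughly $2d+2\ln(2/\delta)$ above the mean, and now a Chernoff bound suffices. For Laplace, $\mathbf{E}[e^{\|Z\|_1/2}]=2^d$ gives a tail of about $2^d e^{-(3d+2\ln(2/\delta))/2}=e^{-(3/2-\ln 2)d}\,\delta/2$. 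A $\chi^2_d$ tail bound at the same level handles the Gaussian. This is exactly the regime the form $\gamma=\varepsilon/(4(d+\ln(2/\delta)))$ is calibrated for.

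\textbf{A smaller point on $\delta$.} Chaining sliding and dilation in general only yields $\frac{\delta}{2}(1+e^{\varepsilon/2})$, not $\delta$. To land on exactly $\delta$ you must exhibit slack in the failure probabilities rather than assert that the constants allow it. For example, Laplace sliding is exact, and the dilation tail above carries the extra factor $e^{-(3/2-\ln 2)d}$.
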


Comparing {\sf SS-G} with our {\sf RDP-SS} in Lemma~\ref{lemma:additive-smooth}, given the same $\operatorname{SS}_F$ function, the difference is in both the smoothness requirement $\gamma$ and the noise scale factor $\eta$. In general, smaller $\gamma$ leads to a more strict smoothness requirement, and correspondingly a larger value of $\operatorname{SS}_F(\mathbf{I}, \gamma)$.

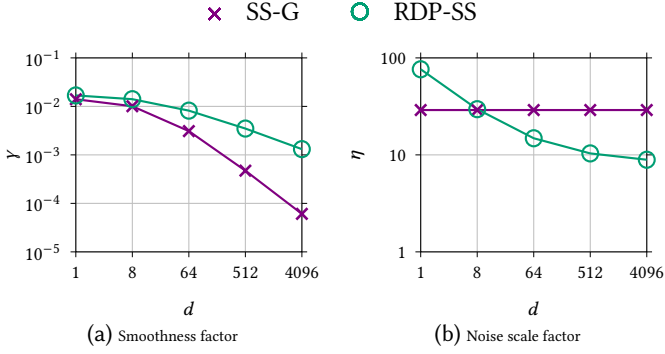
\begin{figure}[htbp]
\centering

\begin{tikzpicture}
  \node{
    \begin{tabular}{@{}c@{\quad}c@{\qquad}c@{\quad}c@{}}
    \tikz{\draw[thick, draw=violet]     plot[mark=x, mark size=3pt] (0,0);} & SS-G &
    \tikz{\draw[thick, draw=dpsCol]     plot[mark=o, mark size=3pt] (0,0);} & RDP-SS
    \end{tabular}
  };
\end{tikzpicture}

\begin{tabular}{@{}c@{}c@{}}

\hspace{-2mm}
\subcaptionbox{Smoothness factor}[ \W ]{%
\begin{tikzpicture}
\begin{axis}[
  width=\W, height=\Hgt,
  xmode=log, ymode=log,
  log ticks with fixed point, scaled ticks=false,
  minor tick style={draw=none},
  grid=major,  major tick length=2pt,
  minor grid style={draw=none},  
  xtick={1,8,64,512,4096}, xticklabels={1,8,64,512,4096},
  ytick={0.00001,0.0001,0.001,0.01,0.1}, yticklabels={$10^{-5}$,$10^{-4}$,$10^{-3}$,$10^{-2}$,$10^{-1}$},
  tick align=outside, tick style={black},
  tick label style={font=\scriptsize}, label style={font=\scriptsize},
  xlabel={$d$}, ylabel={$\gamma$},
  xmin=1, xmax=4096, ymin=0.00001, ymax=0.1,
  ylabel style={yshift=-4mm}
]
  \addplot[mark=x, mark size=3pt, mark options={line width=1pt}, thick, draw=violet] coordinates {
(1, 0.014036078355947581) (8, 0.010076077272615282) (64, 0.003093628946175472) (512, 0.0004727584811952479) (4096, 0.00006078567316594969)
 };
  \addplot[mark=o, mark size=3pt, thick, draw=dpsCol] coordinates {
(1,0.01682655077703489) (8, 0.014079456351998252) (64,0.008172202396820403) (512,0.0034905052865101837) (4096,0.0013134104845140434)
}; 
\end{axis}
\end{tikzpicture}
} & 
\subcaptionbox{Noise scale factor}[ \W ]{%
\begin{tikzpicture}
\begin{axis}[
  width=\W, height=\Hgt,
  xmode=log, ymode=log,
  log ticks with fixed point, scaled ticks=false,
  minor tick style={draw=none},
  grid=major, major tick length=2pt,
  minor grid style={draw=none},  
  xtick={1,8,64,512,4096}, xticklabels={1,8,64,512,4096},
  ytick={1,10,100}, yticklabels={1,10,100},
  tick align=outside, tick style={black},
  tick label style={font=\scriptsize}, label style={font=\scriptsize},
  xlabel={$d$}, ylabel={$\eta$},
  xmin=1, xmax=4096, ymin=1, ymax=100,
  ylabel style={yshift=-4mm}
]
  \addplot[mark=x, mark size=3pt, mark options={line width=1pt}, thick, draw=violet] coordinates {
(1, 28.992449733955098) (8, 28.992449733955098) (64, 28.992449733955098) (512, 28.992449733955098) (4096, 28.992449733955098)
 };
  \addplot[mark=o, mark size=3pt, thick, draw=dpsCol] coordinates {
(1,76.31063109723108) (8, 29.59474671755659) (64,14.83684515268077) (512,10.35459369348862) (4096, 8.909230878384308)
}; 
\end{axis}
\end{tikzpicture}
} 
\end{tabular}
\vspace{-3mm}
\caption{Comparison between {\sf SS-G} and {\sf RDP-SS}.}
\label{fig:ss}
\end{figure}

Figure~\ref{fig:ss} compares the proposed {\sf RDP-SS} against the smooth-sensitivity mechanism {\sf SS-G} implemented using Gaussian noise, under the same privacy parameters $\varepsilon=1$ and $\delta=10^{-7}$.
For high-dimensional vector queries ($d>8$), our {\sf RDP-SS} mechanism has a smaller noise scale factor $\eta$ in addition to enabling a less strict smoothness requirement (i.e., larger $\gamma$), both leading to smaller noises when compared to {\sf SS-G}.
The overall improvement in accuracy is influenced by how $\operatorname{SS}_F$ depends on $\gamma$.
For the function $G$ that we use in Section~\ref{sec:rdp-ls}, we plot its dependency on $\gamma$ and the standard deviation of the final noises in Figure~\ref{fig:ss-error}. Since $G(\mathbf{I})$ also depends on $B(\mathbf{I})$, we show two cases for $B(\mathbf{I})=1$ and $100$ respectively. 

\begin{figure}[htbp]
\centering

\begin{tikzpicture}
  \node{
    \begin{tabular}{@{}c@{ }c@{\qquad}c@{ }c@{\qquad}c@{ }c@{}}
    \tikz{\draw[thick, draw=blue]     plot[mark=|, mark size=3pt] (0,0);} & $B(\mathbf{I})=1$ &
    \tikz{\draw[thick, draw=pink]     plot[mark=x, mark size=3pt] (0,0);} & SS-G, $B(\mathbf{I})=1$ &
    \tikz{\draw[thick, draw=green]     plot[mark=o, mark size=3pt] (0,0);} & RDP-SS, $B(\mathbf{I})=1$\\
    \tikz{\draw[thick, draw=cyan]     plot[mark=+, mark size=3pt] (0,0);} & $B(\mathbf{I})=100$ &
    \tikz{\draw[thick, draw=violet]     plot[mark=x, mark size=3pt] (0,0);} & SS-G, $B(\mathbf{I})=100$ &
    \tikz{\draw[thick, draw=dpsCol]     plot[mark=o, mark size=3pt] (0,0);} & RDP-SS , $B(\mathbf{I})=100$
    \end{tabular}
  };
\end{tikzpicture}

\begin{tabular}{@{}c@{}c@{}}

\hspace{-2mm}
\subcaptionbox{Value of Smooth Sensitivity $G$}[ \W ]{%
\begin{tikzpicture}
\begin{axis}[
  width=\W, height=\Hgt,
  xmode=log, ymode=log,
  log ticks with fixed point, scaled ticks=false,
  minor tick style={draw=none},
  grid=major, major tick length=2pt,
  minor grid style={draw=none},  
  xtick={0.00001,0.0001,0.001,0.01,0.1}, xticklabels={$10^{-5}$,$10^{-4}$,$10^{-3}$,$10^{-2}$,$10^{-1}$},
  ytick={1,10,100,1000,10000}, yticklabels={1,10,$10^2$,$10^3$,$10^4$},
  tick align=outside, tick style={black},
  tick label style={font=\scriptsize}, label style={font=\scriptsize},
  xlabel={$\gamma$}, ylabel={$G$},
  xmin=0.00001, xmax=0.1, ymin=1, ymax=10000,
  ylabel style={yshift=-4mm}
]
  \addplot[mark=|, mark size=4pt, mark options={line width=1pt}, thick, draw=blue] coordinates {
(0.05, 7.734820469090024) (0.01, 37.15766910220457) (0.002, 184.30796815170942) (0.0004, 920.0665559554766) (0.00008, 4598.86090879977)
 };
  \addplot[mark=+, mark size=4pt, mark options={line width=1pt}, thick, draw=cyan] coordinates {
(0.05, 100) (0.01, 100) (0.002, 224.6644820586108) (0.0004, 957.23221493778) (0.00008, 4635.428503727446)
 }; 
\end{axis}
\end{tikzpicture}
} &
\subcaptionbox{Final Noise}[ \W ]{%
\begin{tikzpicture}
\begin{axis}[
  width=\W, height=\Hgt,
  xmode=log, ymode=log,
  log ticks with fixed point, scaled ticks=false,
  minor tick style={draw=none},
  grid=major, major tick length=2pt,
  minor grid style={draw=none},  
  xtick={1,8,64,512,4096}, xticklabels={1,8,64,512,4096},
  ytick={100,1000,10000,100000,1000000}, yticklabels={$10^2$,$10^3$,$10^4$,$10^5$,$10^6$},
  tick align=outside, tick style={black},
  tick label style={font=\scriptsize}, label style={font=\scriptsize},
  xlabel={$d$}, ylabel={standard deviation},
  xmin=1, xmax=4096, ymin=100, ymax=1000000,
  ylabel style={yshift=-4mm}
]
  \addplot[mark=x, mark size=3.5pt, thick, draw=pink] coordinates {
(1,770.6157219568884) (8, 1069.2360703275247) (64,3458.323609240973) (512,22571.291819835704) (4096, 175475.1406481584)
}; 
  \addplot[mark=x, mark size=3.5pt, thick, draw=violet] coordinates {
(1, 2899.24497339551) (8, 2899.24497339551) (64, 4697.607005687947) (512, 23652.809787816088) (4096, 176534.29541315642)
 };
  \addplot[mark=o, mark size=3.5pt, thick, draw=green] coordinates {
(1,1696.6479006291231) (8, 784.2397696444912) (64,673.3721705287678) (512,1095.1296756935892) (4096, 2498.708572078539)
}; 
  \addplot[mark=o, mark size=3.5pt, thick, draw=dpsCol] coordinates {
(1,7631.063109723109) (8, 2959.4746717556586) (64,1512.238400126843) (512,1547.1782762194896) (4096, 2845.6788912317165)
};
\end{axis}
\end{tikzpicture}
} 
\end{tabular}
\vspace{-3mm}
\caption{Error of {\sf SS-G} and {\sf RDP-SS} for function $G$.}
\label{fig:ss-g}
\end{figure}
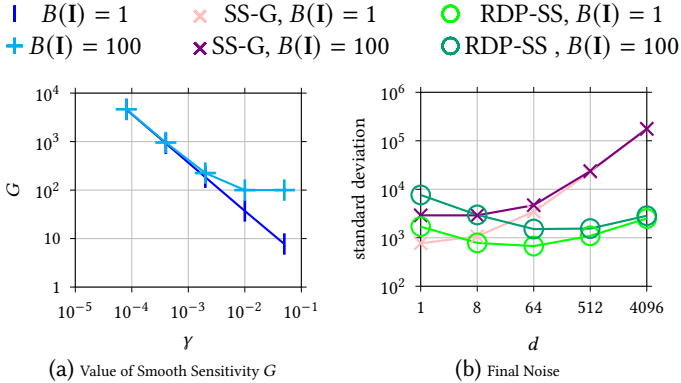

Recall that $G(\mathbf{I})$ is larger for smaller $\gamma$, so the noise of {\sf SS-G} increases with $d$ since $\eta$ is a constant. But for {\sf RDP-SS}, $\eta=\frac{\sqrt{\alpha/\rho}}{1-\alpha+\alpha e^{-2\gamma}}$ decreases\footnote{If we treat $\alpha$ and $\rho$ as constant, this is obvious. In the above settings, the optimal $\alpha$ ranges between $17$ and $32$, whose change is dominated by the decrease of $\gamma$.} when reducing $\gamma$, therefore the error first decreases due to a smaller $\eta$, and then increases due to a larger smooth sensitivity $G(\mathbf{I})$ for larger $d$. Again, for high dimensions ($d>8$), the {\sf RDP-SS} mechanism improves over {\sf SS}. For our experiments in Section~\ref{sec:exp}, we test for $d=10$ queries, where the improvement is around 1.5x in accuracy. For larger $d$, e.g. $d=4096$, the accuracy improvement can be as large as 70x. 

\begin{figure}[htbp]
\centering

\begin{tikzpicture}
  \node{
    \begin{tabular}{@{}c@{ }c@{\quad}c@{ }c@{\quad}c@{ }c@{\quad}c@{ }c@{\quad}c@{ }c@{}}
    \tikz{\draw[thick, draw=pmsjaCol]     plot[mark=pentagon, mark size=4pt] (0,0);} & PMSJA &
    \tikz{\draw[thick, draw=dpsCol]     plot[mark=oplus, mark size=3.5pt] (0,0);} & DP-S4S (DP Err.) &
    \tikz{\draw[thick, draw=green]     plot[mark=otimes, mark size=3.5pt] (0,0);} & DP-S4S (Sampling Err.) &   
    \tikz{\draw[thick, draw=spCol]      plot[mark=triangle*, mark size=4pt, mark options={fill=white, line width=1pt, rotate=90}] (0,0);} & S\&E (DP Err.) &
    \tikz{\draw[thick, draw=blue]      plot[mark=triangle*, mark size=4pt, mark options={fill=white, line width=1pt, rotate=270}] (0,0);} & S\&E (Sampling Err.) 
    \end{tabular}
  };
\end{tikzpicture}

\begin{tabular}{@{}c@{}c@{}}

\hspace{-2mm}
\subcaptionbox{$\boldsymbol{Q}_{1-}$}[ \W ]{%
\begin{tikzpicture}
\begin{axis}[
  width=\W, height=\Hgt,
  xmode=log, ymode=log,
  log ticks with fixed point, scaled ticks=false,
  minor tick style={draw=none},
  grid=major,
  minor grid style={draw=none},  
  xtick={0.01,0.1,1,10}, xticklabels={0.01,0.1,1,10,1000},
  ytick={0.01,0.1,1,10,100,1000}, yticklabels={0.01,0.1,1,10,100,1000},
  tick align=outside, tick style={black},
  tick label style={font=\scriptsize}, label style={font=\scriptsize},
  xlabel={$q$}, ylabel={relative error (\%)},
  xmin=0.01, xmax=1, ymin=0.1, ymax=1000,
  ylabel style={yshift=-4mm}
]
  \addplot[only marks, mark=pentagon, mark size=3pt, thick, draw=pmsjaCol] coordinates {
(1,17.07)}; 
  \addplot[mark=oplus, mark size=3pt, mark options={fill=white, line width=1pt}, thick, draw=dpsCol] coordinates {
(0.5, 18.24105517032763) (0.1, 67.55095641231468) (0.02,103.79818863945134)
 };
  \addplot[mark=otimes, mark size=3pt, thick, draw=green] coordinates {
(0.5,0.8969204891084145) (0.1, 2.7270061765186813) (0.02,7.0672274347102)
}; 
  \addplot[mark=triangle, mark size=3pt, thick, draw=spCol, mark options={rotate=90}] coordinates {
(0.5, 89.20622130280674) (0.1, 260.85139969438995) (0.02,930.6289361076966)
}; 
  \addplot[mark=triangle, mark size=3pt, thick, draw=blue, mark options={rotate=270}] coordinates {
(0.5,1.1081502182481646) (0.1, 3.1343125815493678) (0.02,8.333563210763254)
}; 
\end{axis}
\end{tikzpicture}
} & 
\subcaptionbox{$\boldsymbol{Q}_{2-}$}[ \W ]{%
\begin{tikzpicture}
\begin{axis}[
  width=\W, height=\Hgt,
  xmode=log, ymode=log,
  log ticks with fixed point, scaled ticks=false,
  minor tick style={draw=none},
  grid=major,
  minor grid style={draw=none},  
  xtick={0.01,0.1,1}, xticklabels={0.01,0.1,1},
  ytick={0.01,0.1,1,10,100,1000,10000}, yticklabels={0.01,0.1,1,10,100,1000,10000},
  tick align=outside, tick style={black},
  tick label style={font=\scriptsize}, label style={font=\scriptsize},
  xlabel={$q$}, ylabel={},
  xmin=0.01, xmax=1, ymin=0.1, ymax=10000,
  ylabel style={yshift=-4mm}
]
  \addplot[only marks, mark=pentagon, mark size=3pt, thick, draw=pmsjaCol] coordinates {
(1,28.67)}; 
  \addplot[mark=oplus, mark size=3pt, mark options={fill=white, line width=1pt}, thick, draw=dpsCol] coordinates {
(0.02, 88.59257562087096) (0.1,69.16302550209338) (0.5,45.79306440181432)
 };
  \addplot[mark=otimes, mark size=3pt, thick, draw=green] coordinates {
(0.02,5.73408696367044) (0.1,2.2284667978796238)(0.5,0.7883535100097815)
}; 
  \addplot[mark=triangle, mark size=3pt, thick, draw=spCol, mark options={rotate=90}] coordinates {
  (0.02, 1616.9207801567434) (0.1,  504.45683027109567) (0.5, 170.51013823447747)
}; 
  \addplot[mark=triangle, mark size=3pt, thick, draw=blue, mark options={rotate=270}] coordinates {
(0.02, 12.179365587314997) (0.1, 5.178300077777543) (0.5,1.8034427964841202)
}; 
\end{axis}
\end{tikzpicture}
} 
\end{tabular}
\vspace{-3mm}
\caption{DP vs. Sampling Error on C2A $(\varepsilon=1,\delta=10^{-7}$).}
\label{fig:ss-error}
\end{figure}
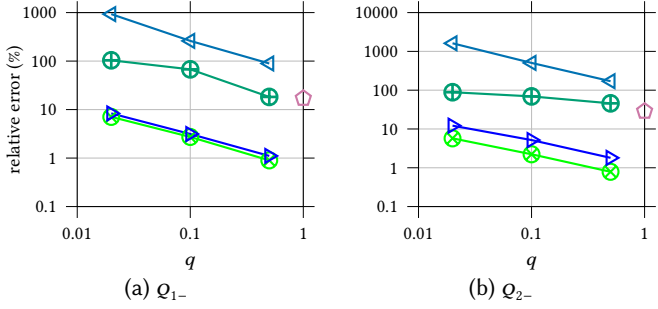

Similar to Appendix~\ref{sec:ablation}, we also study the relative contribution of sampling error vs error due to DP noise in Algorithm~\ref{algo:sample-pmsja}, using the C2A dataset as an example. For vector queries, the error is usually dominated by the injected noise required by DP, which increases with the dimensionality $d$. Therefore, although {\sf DP-S4S} achieve a smaller sampling error owing to independence between samples, this is less significant in our parameter ranges, when DP is enforced. Regarding DP noise, unlike the case in Figure~\ref{fig:ablation2}, here, both mechanisms have increasing DP noise when reducing sample rate $q$. This is because our privacy guarantee in Lemma~\ref{lemma:s4s-v} only ensures the mechanism on the sample is at least as private, but not strong enough to allow using larger privacy parameters as we did in Algorithm~\ref{algo:sample-r2t}.
Nevertheless, the DP error from {\sf DP-S4S} is still lower than that from {\sf S\&E}, which is mainly because our sampling strategy is more effective for reducing the maximum contribution of a user to the query results, which affects the DP error as we analyzed in Appendix~\ref{sec:ablation}.
A minor improvement is also from adopting {\sf RDP-SS}, which reduces the noise by around 1.5x in our setting with $d=10$. 
Overall, {\sf DP-S4S} consistently outperforms {\sf S\&E}.

Given that both error terms increase with respect to decreasing $q$, in practice, one may adopt the largest $q$ that can be handled by the available computation resources.

%% file: sections/vector-error.tex
\section{Error Analysis of DP-S4S for Vector Queries}\label{section:vector-error}

\setlength{\W}{0.25\linewidth}
\setlength{\Hgt}{0.9\W}

As discussed in Section~\ref{sec:s4s-v}, the amplification effect in Lemma~\ref{lemma:s4s-v} for vector queries is not a strong result: it only guarantees that the sampled mechanism is as least as private as the mechanism without sampling, but does not provide a better privacy guarantee for performance improvement. We first recall the error of {\sf PMSJA}.

\begin{lemma}
Let $\operatorname{Err}_{\mathcal{M}_{\LS}}(B(\mathbf{I});\varepsilon,\delta,\beta)$ be the error of a local-sensitivity mechanism $\mathcal{M}_{\LS}$ for function $F$ under $(\varepsilon,\delta)$-DP, given bound $B(\mathbf{I})\geq \LS_F(\mathbf{I})$ and $|B(\mathbf{I})-B(\mathbf{I}')|$ for neighboring datasets. 
The noise that {\sf PMSJA} adds to each query is
\[
\sigma_{\sf PMSJA}=\tilde{O}\left(\tau^*(\mathbf{I})\cdot \operatorname{Err}\left(\frac{1}{\varepsilon};\varepsilon,\delta,\beta\right)\right)
\]
for $\tau^*(\mathbf{I}):=\max_u \|\boldsymbol{x}_u\|_2$, where $\boldsymbol{x}_u=(x_{1,u},\dots,x_{|F|,u})$ is the contribution of user $u$ to all queries. 
Specially, when $\mathcal{M}_{\LS}$ is the privately bounding local sensitivity mechanism~\cite{DBLP:books/sp/17/Vadhan17}, its error  $\operatorname{Err}_{\mathcal{M}_{\LS}}(B(\mathbf{I}),\varepsilon,\delta,\beta)=\tilde{O}(\frac{B(\mathbf{I})}{\varepsilon})$, and the noise of {\sf PMSJA} is $\tilde{O}(\frac{\tau^*(\mathbf{I})}{\varepsilon^2})$.
\end{lemma}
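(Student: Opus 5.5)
The plan is to split the error of {\sf PMSJA} into two parts. The first is the truncation threshold $\tau$ that the SVT stage outputs. The second is the local-sensitivity bound $B(\mathbf{I})=E(\mathbf{I})+1$ at that threshold, which feeds into $\M_{\LS}$. The aim is to show that, with probability $1-\beta$, both $\tau=O(\tau^*(\mathbf{I}))$ and $B(\mathbf{I})=\tilde{O}(1/\varepsilon)$ hold. The noise $2\tau\cdot \operatorname{Err}(B(\mathbf{I});\varepsilon,\delta,\beta)$ then gives the claimed bound. Throughout I will use the structure of Stage 2 of Algorithm~\ref{algo:sample-pmsja}, run without sampling, and Inequality~\eqref{eq:pmsja_ls_bound}.

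\emph{Step 1 (no truncation at a large enough threshold).} Let $\tau_0$ be the smallest power of $2$ with $\tau_0\geq \tau^*(\mathbf{I})$. At $\tau_0$, setting $y_u=1$ and $z_{f,t}=1$ for every $u,f,t$ is feasible for Program~\eqref{eq:qcqp}. The first constraint holds because the left-hand side is exactly $\|\boldsymbol{x}_u\|_2^2\leq \tau^*(\mathbf{I})^2\leq\tau_0^2$. This solution attains the maximum objective value $|U|$, so $E(\mathbf{I})=0$ at $\tau_0$.

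\emph{Step 2 (SVT stops early with small $E$).} There are at most $L=O(\log \GS)$ candidate thresholds. I would take a union bound over the Laplace noise on $\theta$ and on each of the $L$ queries, each failing with probability $\beta/(3L)$. On the resulting good event, every noise term is at most $O(\frac{1}{\varepsilon}\ln\frac{L}{\beta})$ in absolute value. The threshold $\theta$ is set to $\frac{6}{\varepsilon_1}\ln\frac{6}{\beta}$ plus noise, which exceeds the noisy value of $E=0$ at $\tau_0$. Hence the loop halts at some $\tau\leq\tau_0\leq 2\tau^*(\mathbf{I})$. At whatever threshold it halts, $E(\mathbf{I})\leq \theta+O(\frac{1}{\varepsilon}\ln\frac{L}{\beta})=\tilde{O}(1/\varepsilon)$, and therefore $B(\mathbf{I})=E(\mathbf{I})+1=\tilde{O}(1/\varepsilon)$.

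\emph{Step 3 (plugging into $\M_{\LS}$).} By Inequality~\eqref{eq:pmsja_ls_bound}, $\bar{F}/(2\tau)$ has local sensitivity at most $B(\mathbf{I})$. As argued after that inequality, $B$ changes by at most $1$ between neighbors, so $\M_{\LS}$ is applicable. The released noise is $2\tau\cdot\operatorname{Err}(B(\mathbf{I});\varepsilon,\delta,\beta)$, which by Steps 1 and 2 is at most $4\tau^*(\mathbf{I})\cdot \operatorname{Err}(\tilde{O}(1/\varepsilon);\varepsilon,\delta,\beta)$. Moving the polylogarithmic factors outside $\operatorname{Err}$ gives $\tilde{O}(\tau^*(\mathbf{I})\operatorname{Err}(1/\varepsilon;\varepsilon,\delta,\beta))$. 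For the special case of the privately bounding local sensitivity mechanism, substituting $\operatorname{Err}(B)=\tilde{O}(B/\varepsilon)$ yields $\tilde{O}(\tau^*(\mathbf{I})/\varepsilon^2)$.

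The main obstacle is this last substitution. It requires $\operatorname{Err}$ to be monotone in its first argument and to scale at most linearly, up to polylog factors, when $B$ is inflated by the $\ln(L/\beta)$ factor from Step 2. I would state this as a mild regularity assumption on $\M_{\LS}$, and verify it explicitly for the Vadhan mechanism, where the error is linear in $B$. A secondary point is the probability accounting: Steps 1–3 and the internal failure probability of $\M_{\LS}$ together must consume at most the total failure probability $\beta$.
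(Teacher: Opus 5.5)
The paper gives no proof of this lemma---it only recalls the bound from \textsf{PMSJA}---and your reconstruction is the argument behind that cited bound and is sound. It runs as follows: $E=0$ at the first power of two $\tau_0\geq\tau^*(\mathbf{I})$, so SVT halts at some $\tau\le\tau_0$ with $E=\tilde{O}(1/\varepsilon)$ at that point, and $\LS_{\bar{F}}\le 2\tau(E+1)$ is then fed into $\M_{\LS}$. Your remark that the generic form needs $\operatorname{Err}$ to be monotone and polylog-stable in its first argument is a fair reading of what the statement implicitly assumes.

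One step in Step 2 needs a small repair. The threshold offset $\frac{6}{\varepsilon_1}\ln\frac{6}{\beta}$ does not grow with $L$. So the event that every noise is $O(\frac{1}{\varepsilon_1}\ln\frac{L}{\beta})$ does not by itself make the test pass at $\tau_0$. Argue that part using only the two noises involved there; each violates its one-sided bound at scale $\ln\frac{6}{\beta}$ with probability $\beta/12$. Use the $\ln\frac{L}{\beta}$ union bound only to upper-bound $E$ at the random halting index.
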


Following similar analysis as in {\sf PMSJA}, the DP noise in {\sf DP-S4S} for vector queries after rescale can be bounded as
\[
\sigma_{\sf DP-S4S} = \tilde{O}\left(\frac{1}{q}\cdot \tau^*(S)\cdot \operatorname{Err}_{\mathcal{M}_{\sf RDP-SS}}\left(\frac{1}{\varepsilon};\varepsilon,\delta,\beta\right)\right)
\]
where $\tau^*(S)$ is the maximum contribution of a user in the sample; and $\operatorname{Err}(\cdot)$ is the error from our {\sf RDP-SS} mechanism.
In addition, there is a sampling error, which is $\tilde{O}(\sqrt{\frac{n_f}{q}}+\frac{1}{q})$ for query $f$ where $n_f$ is the query size.

However, it is difficult to provide an exact error bound. The major challenge is in bounding the error of smooth-sensitivity mechanisms which we discuss later.
Nevertheless, for vector queries, there is still a reduction in the contribution of each user to the sample, similar to the case of scalar queries.
Let $y_{f,u}$ be the records sampled for query $f$ from user $u$, then $y_{f,u}\sim \operatorname{Bin}(x_{f,u},q)$.
By triangle inequality, $\|y_u\|_2\leq q\|x_u\|_2+\|z_u\|_2$, where $z_{f,u}:=y_{f,u}-qx_{f,u}$.
By Hoeffding's inequality,
\[
\Pr\left[|y_{f,u}-qx_{f,u}|>\sqrt{\frac{x_{f,u}}{2}\ln\frac{2d}{\beta}}\right]\leq \frac{\beta}{d}
\]
Thus, with probability $1-\beta$, we can bound $\|z_u\|_2$ as 
\[
\|z_u\|_2=\sqrt{\sum_{f} z_{f,u}^2}=\tilde{O}\left(\sqrt{\sum_f  x_{f,u}}\right)=\tilde{O}(\sqrt{\|x_u\|_1})\,.
\]
With a union bound, conditioned on all the $\|z_u\|_2$'s are bounded,
\begin{align*}
\tau^*(S):&=\max_u \|y_u\|_2\leq \max_u q\|x_u\|_2 + \max_u \|z_u\|_2\\
&=\tilde{O}(q\cdot \tau^*(\mathbf{I})+d^\frac{1}{4}\cdot \tau^*(\mathbf{I}))\,,
\end{align*}
where we used $\|x_u\|_1\leq \sqrt{d}\cdot \|x_u\|_2$.
Thus compared to {\sf PMSJA}, the DP noise of {\sf DP-S4S} has an additional overhead factor $(1+\frac{d^\frac{1}{4}}{q})$, which still increases when reducing $q$, as Figure~\ref{fig:ss-error} shows.

Finally, we discuss the major challenge: bounding the error of {\sf RDP-SS} in Lemma~\ref{lemma:additive-smooth}.
Unlike the privately bounding local sensitivity mechanism \cite{DBLP:books/sp/17/Vadhan17}, it is hard to provide an exact error bound, since the value of $\gamma$ depends on solving a polynomial of degree $\alpha$. But as has been discussed in the empirical study in Appendix~\ref{sec:ablation-ss}, the {\sf RDP-SS} mechanism achieves better smoothness requirement $\gamma$ and smaller error compared to {\sf SS}.

%% file: sections/cdp.tex
\section{Connection to zCDP}\label{sec:cdp}
In this paper, we adopt R\'enyi-DP (Definition~\ref{def:rdp}) for Gaussian-based mechanisms besides the standard DP notion. A closely related definition is Zero-Concentrated Differential Privacy~\cite{DBLP:conf/tcc/BunS16}, formally defined as follows.

\begin{definition}[zCDP~\cite{DBLP:conf/tcc/BunS16}]
For $\rho>0$, a randomized mechanism $\M:\mathcal{I}\to\mathcal{O}$ is $\rho$-zCDP, if for any pair of neighboring datasets $\mathbf{I}\sim \mathbf{I}'$ and any $\alpha>1$, it holds that
\[
\mathrm{D}_\alpha(\M(\mathbf{I}) \| \M(\mathbf{I}'))\leq \rho\alpha\,,
\]
where $\mathrm{D}_\alpha(P\|Q):=\frac{1}{\alpha-1}\ln\int_{z\in \operatorname{supp}(Q)} P(z)^\alpha Q(z)^{1-\alpha}\,\mathrm{d}z$ is the R\'enyi divergence of order $\alpha$.
\end{definition}

It is easy to see that any $\rho$-zCDP mechanism is immediately $(\alpha,\rho\alpha)$-RDP for any $\alpha>1$; and any $(\alpha,\rho(\alpha))$-RDP mechanism is $\sup_{\alpha>1}\frac{\rho(\alpha)}{\alpha}$-zCDP. Besides, $\varepsilon$-DP implies $\frac{1}{2}\varepsilon^2$-zCDP; and $\rho$-zCDP implies $(\rho+2\sqrt{\rho\ln(1/\delta)},\delta)$-DP for any $\delta>0$.
For the standard Gaussian mechanism, the two definitions are equivalent when reduced to approximate-DP, as
\[
\varepsilon_{\text{RDP}}=\min_\alpha \frac{\alpha}{2\sigma^2}+\frac{\ln(1/\delta)}{\alpha-1} = \frac{1}{2\sigma^2}+\sqrt{\frac{2\ln(1/\delta)}{\sigma}}=\varepsilon_{\text{zCDP}}
\]
is optimized exactly when $\alpha=1+\sigma\cdot \sqrt{2\ln(1/\delta)}$.

Since our proof for RDP works by bounding the R\'enyi divergence for any $\alpha$, it can be naturally extended to zCDP through the general reduction.
However, since $\alpha$ appears in exponents in Lemma~\ref{lemma:sample-truncate}, the resulting guarantee does not have a trivial representation in closed form.

%% file: sections/data.tex
\section{Data Statistics and Additional Experiments} \label{section:data}

\input{fig/a2q}
\input{fig/roadnetca}

Table~\ref{tab:graph} and~\ref{tab:tpch} summarize statistics of the datasets used in our experiments. Figure~\ref{fig:a2q} presents experiment results varying $q$ on the A2Q dataset in~\cite{DBLP:journals/pacmmod/0007S023}, where the result is similar to Figure~\ref{fig:c2a} in the main text.

\begin{table}[t]
    \centering
    \caption{Graph Statistics in the experiments.}     \label{tab:graph}
    
    \begin{tabular}{ccccc}
    \toprule
       \textbf{Dataset}  & \textbf{Vertices} & \textbf{Edges} & \textbf{Max degree} & \textbf{Degree bound} \\
    \midrule
       Deezer & 143,884 & 846,915 & 420 & 1024 \\
       Amazon  & 334,863 & 925,872 & 549 & 1024  \\
       RoadnetCA  & 1,965,206 & 2,766,607 & 12 & 16 \\
    \midrule
       A2Q  & 1,192,133 & 1,468,092 & 23 & 1024 \\
       C2A  & 679,729 & 1,425,352 & 40 & 1024 \\
    \bottomrule
    \end{tabular}
    \vspace{1mm}
\end{table}

\begin{table}[t]
    \centering
    \caption{Table Sizes in TPC-H SF4.}     \label{tab:tpch}
    
    \vspace{-2mm}
    \begin{tabular}{lc}
    \toprule
       \textbf{Table} & \textbf{Number of Records} \\
    \midrule
       Region & 5\\
       Nation & 25\\
       Supplier & 40,000 \\
       Customer & 600,000 \\
       Part & 800,000 \\
       PartSupp & 3,200,000 \\
       Orders & 6,000,000 \\
       Lineitem & 23,996,604\\
    \bottomrule
    \end{tabular}
\end{table}

In addition, we run an experiment to show the effect of data skewness on the mechanisms.
The datasets Deezer and Amazon we consider in the main text are both highly skewed, with a maximum degree over 400 but a majority of vertices having degree at most 10.
We include a less skewed road network dataset RoadnetCA for completeness, where $99\%$ of the vertices have degrees between 1 and 4.
Recall that {\sf DP-S4S} sample each join result independently, therefore its sampling error is independent with respect to the skewness of the data.
Contrarily, for {\sf S\&E}, less skewed data is favorable for its sampling error since the variance in the sample size is smaller.
As shown in Figure~\ref{fig:roadnetca-q}, since we use a smaller degree bound $D=16$ for this dataset, the difference between {\sf DP-S4S} and {\sf S\&E} is smaller. 
Nevertheless, we still observe improvements on $Q_{2-}$, $Q_\triangle$ and $Q_{\square}$.

%% file: fig/a2q.tex
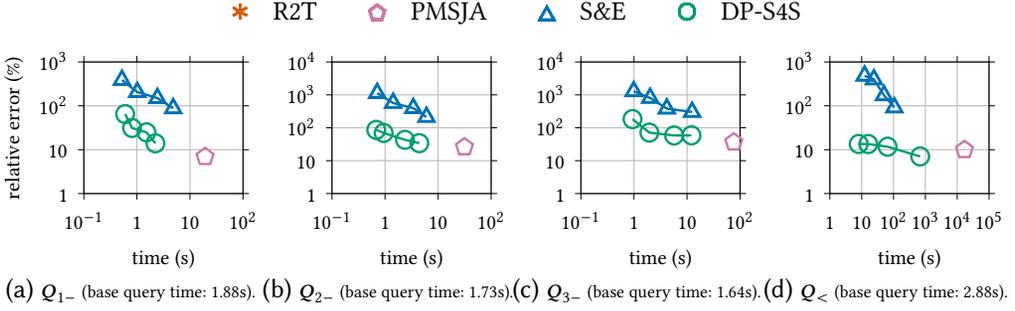
\begin{figure}[t]
\centering

\begin{tikzpicture}
  \node{
    \begin{tabular}{@{}c@{\quad}c@{\qquad}c@{\quad}c@{\qquad}c@{\quad}c@{\qquad}c@{\quad}c@{}}
    \tikz{\draw[thick, draw=r2tCol]     plot[mark=asterisk, mark size=3pt] (0,0);} & R2T &
    \tikz{\draw[thick, draw=pmsjaCol]      plot[mark=pentagon, mark size=3pt, mark options={fill=white, line width=1pt}] (0,0);} & PMSJA &
    \tikz{\draw[thick, draw=spCol]      plot[mark=triangle*, mark size=3pt, mark options={fill=white, line width=1pt}] (0,0);} & S\&E &
    \tikz{\draw[thick, draw=dpsCol]     plot[mark=o, mark size=3pt] (0,0);} & DP-S4S
    \end{tabular}
  };
\end{tikzpicture}

\makebox[\textwidth][c]{%
   \resizebox{\linewidth}{!}{%
\begin{tabular}{@{}c@{}c@{}c@{}c@{}}

\hspace{-2mm}
\subcaptionbox{$\boldsymbol{Q}_{1-}$ (base query time: 1.88s).}[ \W ]{%
\begin{tikzpicture}
\begin{axis}[
  width=\W, height=\Hgt,
  xmode=log, ymode=log,
  log ticks with fixed point, scaled ticks=false,
  minor tick style={draw=none},
  grid=major, major tick length=2pt,
  minor grid style={draw=none},  
  xtick={0.1,1,10,100,1000,10000,100000}, xticklabels={10$^{-1}$,1,10,10$^2$,10$^3$,10$^4$,10$^5$},
  xticklabel style={font=\scriptsize, text height=1.5ex, text depth=.25ex}, 
  ytick={0.1,1,10,100,1000,10000}, 
  yticklabels={10$^{-1}$,1,10,10$^2$,10$^3$,10$^4$},
  yticklabel style={font=\scriptsize, text width=1.5em, align=right, text height=1.5ex, text depth=.25ex},   
  tick align=outside, tick style={black},
  label style={font=\scriptsize},
  xlabel={time (s)}, ylabel={relative error (\%)},
  xmin=0.1, xmax=100, ymin=1, ymax=1000,
  ylabel style={yshift=-4mm}
]
  \addplot[only marks, mark=pentagon, mark size=3pt, thick, draw=pmsjaCol] coordinates {
(19.20,6.91)
}; 
  \addplot[mark=triangle, mark size=3pt, mark options={fill=white, line width=1pt}, thick, draw=spCol] coordinates { 
(0.5214,388.65) (1.015,205.71) (2.432,150.169) (4.824,87.84)
 }; 
  \addplot[mark=o, mark size=3pt, thick, draw=dpsCol] coordinates {
  (0.600,64.81) (0.810,31.29) (1.5143, 24.85) (2.23,14.00)
}; 
\end{axis}
\end{tikzpicture}
} & \hspace{-5mm}
\subcaptionbox{$\boldsymbol{Q}_{2-}$ (base query time: 1.73s).}[ \W ]{%
\begin{tikzpicture}
\begin{axis}[
  width=\W, height=\Hgt,
  xmode=log, ymode=log,
  log ticks with fixed point, scaled ticks=false,
  minor tick style={draw=none},
  grid=major, major tick length=2pt,
  minor grid style={draw=none}, 
  xtick={0.1,1,10,100,1000,10000,100000}, xticklabels={10$^{-1}$,1,10,10$^2$,10$^3$,10$^4$,10$^5$},
  xticklabel style={font=\scriptsize, text height=1.5ex, text depth=.25ex}, 
  ytick={0.1,1,10,100,1000,10000}, 
  yticklabels={10$^{-1}$,1,10,10$^2$,10$^3$,10$^4$},
  yticklabel style={font=\scriptsize, text width=1.5em, align=right, text height=1.5ex, text depth=.25ex},
  tick align=outside, tick style={black},
  label style={font=\scriptsize},
  xlabel={time (s)}, ylabel={},
  xmin=0.1, xmax=100, ymin=1, ymax=10000
]
  \addplot[only marks, mark=pentagon, mark size=3pt, thick, draw=pmsjaCol] coordinates {
(31.37,26.00)
}; 
  \addplot[mark=triangle, mark size=3pt, mark options={fill=white, line width=1pt}, thick, draw=spCol] coordinates {
(0.711,1131.60) (1.421,567.45) (3.405,393.99) (6.0475,213.05)
}; 
  \addplot[mark=o, mark size=3pt, thick, draw=dpsCol] coordinates {
  (0.682,86.08) (0.942, 69.68) (2.376, 42.96) (4.428,34.09)
  }; 
\end{axis}
\end{tikzpicture}
} & \hspace{-6mm}
\subcaptionbox{$\boldsymbol{Q}_{3-}$ (base query time: 1.64s).}[ \W ]{%
\begin{tikzpicture}
\begin{axis}[
  width=\W, height=\Hgt,
  xmode=log, ymode=log,
  log ticks with fixed point, scaled ticks=false,
  minor tick style={draw=none},
  grid=major,major tick length=2pt,
  minor grid style={draw=none}, 
  xtick={0.1,1,10,100,1000,10000,100000}, xticklabels={10$^{-1}$,1,10,10$^2$,10$^3$,10$^4$,10$^5$},
  xticklabel style={font=\scriptsize, text height=1.5ex, text depth=.25ex}, 
  ytick={0.1,1,10,100,1000,10000}, 
  yticklabels={10$^{-1}$,1,10,10$^2$,10$^3$,10$^4$},
  yticklabel style={font=\scriptsize, text width=1.5em, align=right, text height=1.5ex, text depth=.25ex},
  tick align=outside, tick style={black},
  label style={font=\scriptsize},
  xlabel={time (s)}, ylabel={},
  xmin=0.1, xmax=100, ymin=1, ymax=10000
]
  \addplot[only marks, mark=pentagon, mark size=3pt, thick, draw=pmsjaCol] coordinates {
(75.98,36.45)
}; 
  \addplot[mark=triangle, mark size=3pt, mark options={fill=white, line width=1pt}, thick, draw=spCol] coordinates {
(0.99,1287.86) (2.017,806.09) (4.158,381.237) (12.46,301.861)
}; 
  \addplot[mark=o, mark size=3pt, thick, draw=dpsCol] coordinates {
(0.939,183.43) (1.952, 71.28) (5.70, 58.34) (12.12, 58.91) 
}; 
\end{axis}
\end{tikzpicture}
} & \hspace{-6mm}
\subcaptionbox{$\boldsymbol{Q}_{<}$ (base query time: 2.88s).}[ \W ]{%
\begin{tikzpicture}
\begin{axis}[
  width=\W, height=\Hgt,
  xmode=log, ymode=log,
  log ticks with fixed point, scaled ticks=false,
  minor tick style={draw=none},
  grid=major, major tick length=2pt,
  minor grid style={draw=none}, 
  xtick={0.1,1,10,100,1000,10000,100000}, xticklabels={10$^{-1}$,1,10,10$^2$,10$^3$,10$^4$,10$^5$},
  xticklabel style={font=\scriptsize, text height=1.5ex, text depth=.25ex}, 
  ytick={0.1,1,10,100,1000,10000}, 
  yticklabels={10$^{-1}$,1,10,10$^2$,10$^3$,10$^4$},
  yticklabel style={font=\scriptsize, text width=1.5em, align=right, text height=1.5ex, text depth=.25ex},
  tick align=outside, tick style={black},
  label style={font=\scriptsize},
  xlabel={time (s)}, ylabel={},
  xmin=1, xmax=100000, ymin=1, ymax=1000
]
  \addplot[only marks, mark=pentagon, mark size=3pt, thick, draw=pmsjaCol] coordinates {
(16937.83,9.967)
}; 
  \addplot[mark=triangle, mark size=3pt, mark options={fill=white, line width=1pt}, thick, draw=spCol] coordinates {
(12.33,484.39) (24.16,399.94) (50.06,174.20) (106.05,91.81)
}; 
  \addplot[mark=o, mark size=3pt, thick, draw=dpsCol] coordinates {
(8.04,13.53)(16.04,13.46)(65.07,11.71)(686.63,7.023)
}; 
\end{axis}
\end{tikzpicture}
} 
\end{tabular}
}}
\vspace{-3mm}
\caption{Query error vs.\ time on A2Q with $\varepsilon = 4$, $\delta=10^{-7}$. (Sampling rates $q\in\left\{ \frac{1}{32},\tfrac{1}{16}, \tfrac{1}{8}, \tfrac{1}{4}\right\}$)}
\label{fig:a2q}
\end{figure}

%% file: fig/roadnetca.tex
\begin{figure}[t]
\centering

\makebox[\textwidth][c]{%
   \resizebox{\linewidth}{!}{%
\begin{tabular}{@{}c@{}c@{}c@{}c@{}}

\hspace{-2mm}
\subcaptionbox{$\boldsymbol{Q}_{1-}$ (base query time: 0.74s).}[ \W ]{%
\begin{tikzpicture}
\begin{axis}[
  width=\W, height=\Hgt,
  xmode=log, ymode=log,
  log ticks with fixed point, scaled ticks=false,
  minor tick style={draw=none},
  grid=major, major tick length=2pt,
  minor grid style={draw=none},  
  xtick={0.1,1,10,100}, xticklabels={0.1,1,10,100},
  ytick={0.01,0.1,1,10,100}, yticklabels={0.01,0.1,1,10,100},
  tick align=outside, tick style={black},
  tick label style={font=\scriptsize}, label style={font=\scriptsize},
  xlabel={time (s)}, ylabel={relative error (\%)},
  xmin=0.1, xmax=100, ymin=0.01, ymax=1,
  ylabel style={yshift=-4mm}
]
  \addplot[only marks, mark=asterisk, mark size=3pt, thick, draw=r2tCol] coordinates {
( 54.545363227526344,0.020765892987966907)}; 
  \addplot[mark=triangle, mark size=3pt, mark options={fill=white, line width=1pt}, thick, draw=spCol] coordinates { 
( 1.0796512365341187, 0.41212942718068585) 
(2.5249050855636597,0.18603370273861927)
(6.287203669548035, 0.1292764553157737)
(11.088283697764078,0.07752370376295699)
(23.226762255032856,0.05649072089128087)
 }; 
  \addplot[mark=o, mark size=3pt, thick, draw=dpsCol] coordinates {
( 19.79918082555135,0.08674804267339216) 
(9.105802416801453, 0.12609734049243015)
(4.64126396179199211, 0.22565298227213929)
(2.154212315877279,0.18776164622027997)
(0.8605647881825765,0.37826747627863816)
}; 
\end{axis}
\end{tikzpicture}
} & \hspace{-5mm}
\subcaptionbox{$\boldsymbol{Q}_{2-}$ (base query time: 7.36s).}[ \W ]{%
\begin{tikzpicture}
\begin{axis}[
  width=\W, height=\Hgt,
  xmode=log, ymode=log,
  log ticks with fixed point, scaled ticks=false,
  minor tick style={draw=none},
  grid=major, major tick length=2pt,
  minor grid style={draw=none}, 
  xtick={1,10,100,1000}, xticklabels={1,10,100,1000},
  ytick={0.1,1,10,100}, yticklabels={0.1,1,10,100},
  tick align=outside, tick style={black},
  tick label style={font=\scriptsize}, label style={font=\scriptsize},
  xlabel={time (s)}, ylabel={},
  xmin=1, xmax=1000, ymin=0.01, ymax=10
]
  \addplot[only marks, mark=asterisk, mark size=3pt, thick, draw=r2tCol] coordinates {
(255.84634550412497,  0.07651950954840248)
}; 
  \addplot[mark=triangle, mark size=3pt, mark options={fill=white, line width=1pt}, thick, draw=spCol] coordinates {
(4.194633960723877,1.1582687835641066) (11.821539680163065,0.5478162294580766)  (22.463021874427795, 0.3039231695580492) (42.04878409703573, 0.1773296420788949) ( 99.50857937335968, 0.12252154549388108)
}; 
  \addplot[mark=o, mark size=3pt, thick, draw=dpsCol] coordinates {
(2.4646236896514893,0.3543198819973329) (10.53205426534017, 0.15542274260263697)
(21.639310240745544,0.12876525175418022) ( 42.236300667126976, 0.14257251491273557)
(80.30322341124217,0.05571743745099758)
 }; 
\end{axis}
\end{tikzpicture}
} & \hspace{-6mm}
\subcaptionbox{$\boldsymbol{Q}_{\triangle}$ (base query time: 2.98s).}[ \W ]{%
\begin{tikzpicture}
\begin{axis}[
  width=\W, height=\Hgt,
  xmode=log, ymode=log,
  log ticks with fixed point, scaled ticks=false,
  minor tick style={draw=none},
  grid=major, major tick length=2pt,
  minor grid style={draw=none}, 
  xtick={0.01,0.1,1,10,100}, xticklabels={0.01,0.1,1,10,100},
  ytick={0.1,1,10,100}, yticklabels={0.1,1,10,100},
  tick align=outside, tick style={black},
  tick label style={font=\scriptsize}, label style={font=\scriptsize},
  xlabel={time (s)}, ylabel={},
  xmin=0.01, xmax=10, ymin=0.01, ymax=10
]
  \addplot[only marks, mark=asterisk, mark size=3pt, thick, draw=r2tCol] coordinates {(3.718867778778076, 0.2040841195029093)}; 
  \addplot[mark=triangle, mark size=3pt, mark options={fill=white, line width=1pt}, thick, draw=spCol] coordinates {
(0.056593616803487144,8.697080492124915) (0.08085731665293376,5.828798239924312) (0.2132432858149210,3.0103331798414668) ( 0.6116565863291422,1.5198792103815548) ( 1.0598469575246174, 0.8410447151500393)
}; 
  \addplot[mark=o, mark size=3pt, thick, draw=dpsCol] coordinates {
(0.04028503100077311,4.40729864386075) (0.06289347012837727, 2.523270954029436)
(0.18249154090881348,1.7696622984459491)  (0.5653611818949381,1.1268825582224025) ( 1.0265581210454304, 0.34111813832713733) 
  }; 
\end{axis}
\end{tikzpicture}
} & \hspace{-6mm}
\subcaptionbox{$\boldsymbol{Q}_{\square}$ (base query time: 7.47s).}[ \W ]{%
\begin{tikzpicture}
\begin{axis}[
  width=\W, height=\Hgt,
  xmode=log, ymode=log,
  log ticks with fixed point, scaled ticks=false,
  minor tick style={draw=none},
  grid=major, major tick length=2pt,
  minor grid style={draw=none}, 
  xtick={0.01,0.1,1,10,100,1000}, xticklabels={0.01,0.1,1,10,100,1000},
  ytick={0.1,1,10,100,1000}, yticklabels={0.1,1,10,100,1000},
  tick align=outside, tick style={black},
  tick label style={font=\scriptsize}, label style={font=\scriptsize},
  xlabel={time (s)}, ylabel={},
  xmin=0.01, xmax=10, ymin=0.01, ymax=10
]
  \addplot[only marks, mark=asterisk, mark size=3pt, thick, draw=r2tCol] coordinates {(7.110532482465108,  0.1371054685784109)}; 
  \addplot[mark=triangle, mark size=3pt, mark options={fill=white, line width=1pt}, thick, draw=spCol] coordinates {
( 0.13623833656311035,8.54766004190782)(0.3481195370356242,4.339479797601884)(0.7865856885910034, 1.9204761262918142)(1.5474845170974731,0.9942578910136742) ( 3.103283246358236,0.46504389023413184)
}; 
  \addplot[mark=o, mark size=3pt, thick, draw=dpsCol] coordinates {
( 0.08472235997517903,3.3835174730620357) (0.1328669786453247,3.430417267712052)
(0.6400534709294637, 2.0729479954649457) (1.295336922009786,0.9093297677280461) (2.265034794807434, 0.3788641115296967)
}; 
\end{axis}
\end{tikzpicture}
} 
\end{tabular}
}}
\vspace{-3mm}
\caption{Query error vs.\ time on RoadnetCA with $\varepsilon = 1$. (Sampling rates $q\in\left\{ \frac{1}{64},\tfrac{1}{32}, \tfrac{1}{16}, \tfrac{1}{8}, \tfrac{1}{4} \right\}$)}
\label{fig:roadnetca-q}
\end{figure}

%% file: ref.bib
@book{NesterovNemirovskii1994,
  author    = {Yurii Nesterov and Arkadii Nemirovskii},
  title     = {Interior-Point Polynomial Algorithms in Convex Programming},
  publisher = {SIAM},
  address   = {Philadelphia},
  year      = {1994},
  series    = {SIAM Studies in Applied Mathematics},
  doi       = {10.1137/1.9781611970791}
}

@article{PardalosVavasis1991,
  author  = {Panos M. Pardalos and Stephen A. Vavasis},
  title   = {Quadratic Programming with One Negative Eigenvalue is NP-hard},
  journal = {Journal of Global Optimization},
  year    = {1991},
  volume  = {1},
  number  = {1},
  pages   = {15--22},
  doi     = {10.1007/BF00120662}
}

@article{Ugander2011Facebook,
  author       = {Johan Ugander and
                  Brian Karrer and
                  Lars Backstrom and
                  Cameron Marlow},
  title        = {The Anatomy of the Facebook Social Graph},
  journal      = {arXiv preprint arXiv:1111.4503},
  year         = {2011},
  url          = {https://arxiv.org/abs/1111.4503}
}

@inproceedings{DBLP:conf/nips/BalleBG18,
  author       = {Borja Balle and
                  Gilles Barthe and
                  Marco Gaboardi},
  title        = {Privacy Amplification by Subsampling: Tight Analyses via Couplings
                  and Divergences},
  booktitle    = {NeurIPS},
  pages        = {6280--6290},
  year         = {2018}
}

@article{DBLP:journals/fttcs/DworkR14,
  author       = {Cynthia Dwork and
                  Aaron Roth},
  title        = {The Algorithmic Foundations of Differential Privacy},
  journal      = {Found. Trends Theor. Comput. Sci.},
  volume       = {9},
  number       = {3-4},
  pages        = {211--407},
  year         = {2014}
}

@inproceedings{DBLP:conf/sigmod/DongF0TM22,
  author       = {Wei Dong and
                  Juanru Fang and
                  Ke Yi and
                  Yuchao Tao and
                  Ashwin Machanavajjhala},
  title        = {{R2T:} Instance-optimal Truncation for Differentially Private Query
                  Evaluation with Foreign Keys},
  booktitle    = {{SIGMOD} Conference},
  pages        = {759--772},
  publisher    = {{ACM}},
  year         = {2022}
}

@article{DBLP:journals/pacmmod/FangY24,
  author       = {Juanru Fang and
                  Ke Yi},
  title        = {Privacy Amplification by Sampling under User-level Differential Privacy},
  journal      = {Proc. {ACM} Manag. Data},
  volume       = {2},
  number       = {1},
  pages        = {34:1--34:26},
  year         = {2024}
}

@article{DBLP:journals/pacmmod/0007S023,
  author       = {Wei Dong and
                  Dajun Sun and
                  Ke Yi},
  title        = {Better than Composition: How to Answer Multiple Relational Queries
                  under Differential Privacy},
  journal      = {Proc. {ACM} Manag. Data},
  volume       = {1},
  number       = {2},
  pages        = {123:1--123:26},
  year         = {2023}
}

@article{DBLP:journals/tods/KarwaRSY14,
  author       = {Vishesh Karwa and
                  Sofya Raskhodnikova and
                  Adam D. Smith and
                  Grigory Yaroslavtsev},
  title        = {Private Analysis of Graph Structure},
  journal      = {{ACM} Trans. Database Syst.},
  volume       = {39},
  number       = {3},
  pages        = {22:1--22:33},
  year         = {2014}
}

@incollection{DBLP:books/sp/17/Vadhan17,
  author       = {Salil P. Vadhan},
  title        = {The Complexity of Differential Privacy},
  booktitle    = {Tutorials on the Foundations of Cryptography},
  pages        = {347--450},
  publisher    = {Springer International Publishing},
  year         = {2017}
}

@article{DBLP:journals/isci/GilAL13,
  author       = {Manuel Gil and
                  Fady Alajaji and
                  Tam{\'{a}}s Linder},
  title        = {R{\'{e}}nyi divergence measures for commonly used univariate
                  continuous distributions},
  journal      = {Inf. Sci.},
  volume       = {249},
  pages        = {124--131},
  year         = {2013}
}

@inproceedings{DBLP:conf/stoc/NissimRS07,
  author       = {Kobbi Nissim and
                  Sofya Raskhodnikova and
                  Adam D. Smith},
  title        = {Smooth sensitivity and sampling in private data analysis},
  booktitle    = {{STOC}},
  pages        = {75--84},
  publisher    = {{ACM}},
  year         = {2007}
}

@inproceedings{DBLP:conf/csfw/Mironov17,
  author       = {Ilya Mironov},
  title        = {R{\'{e}}nyi Differential Privacy},
  booktitle    = {{CSF}},
  pages        = {263--275},
  publisher    = {{IEEE} Computer Society},
  year         = {2017}
}

@inproceedings{DBLP:conf/focs/DworkRV10,
  author       = {Cynthia Dwork and
                  Guy N. Rothblum and
                  Salil P. Vadhan},
  title        = {Boosting and Differential Privacy},
  booktitle    = {{FOCS}},
  pages        = {51--60},
  publisher    = {{IEEE} Computer Society},
  year         = {2010}
}

@inproceedings{DBLP:conf/nips/LevySAKKMS21,
  author       = {Daniel Levy and
                  Ziteng Sun and
                  Kareem Amin and
                  Satyen Kale and
                  Alex Kulesza and
                  Mehryar Mohri and
                  Ananda Theertha Suresh},
  title        = {Learning with User-Level Privacy},
  booktitle    = {NeurIPS},
  pages        = {12466--12479},
  year         = {2021}
}

@inproceedings{DBLP:conf/nips/GhaziK0MMZ23,
  author       = {Badih Ghazi and
                  Pritish Kamath and
                  Ravi Kumar and
                  Pasin Manurangsi and
                  Raghu Meka and
                  Chiyuan Zhang},
  title        = {User-Level Differential Privacy With Few Examples Per User},
  booktitle    = {NeurIPS},
  year         = {2023},
    volume={36},
  pages={19263--19290},
}

@inproceedings{DBLP:conf/tcc/KasiviswanathanNRS13,
  author       = {Shiva Prasad Kasiviswanathan and
                  Kobbi Nissim and
                  Sofya Raskhodnikova and
                  Adam D. Smith},
  title        = {Analyzing Graphs with Node Differential Privacy},
  booktitle    = {{TCC}},
  series       = {Lecture Notes in Computer Science},
  volume       = {7785},
  pages        = {457--476},
  publisher    = {Springer},
  year         = {2013}
}

@article{DBLP:journals/pvldb/JiangLYX24,
  author       = {Yangfan Jiang and
                  Xinjian Luo and
                  Yin Yang and
                  Xiaokui Xiao},
  title        = {Calibrating Noise for Group Privacy in Subsampled Mechanisms},
  journal      = {Proc. {VLDB} Endow.},
  volume       = {18},
  number       = {2},
  pages        = {322--334},
  year         = {2024}
}

@article{DBLP:journals/tit/ErvenH14,
  author       = {Tim van Erven and
                  Peter Harremo{\"{e}}s},
  title        = {R{\'{e}}nyi Divergence and Kullback-Leibler Divergence},
  journal      = {{IEEE} Trans. Inf. Theory},
  volume       = {60},
  number       = {7},
  pages        = {3797--3820},
  year         = {2014}
}

@misc{snapnets,
  author       = {Jure Leskovec and Andrej Krevl},
  title        = {{SNAP Datasets}: {Stanford} Large Network Dataset Collection},
  howpublished = {\url{http://snap.stanford.edu/data}},
  month        = jun,
  year         = 2014
}

@article{DBLP:journals/pvldb/KotsogiannisTHF19,
  author       = {Ios Kotsogiannis and
                  Yuchao Tao and
                  Xi He and
                  Maryam Fanaeepour and
                  Ashwin Machanavajjhala and
                  Michael Hay and
                  Gerome Miklau},
  title        = {PrivateSQL: {A} Differentially Private {SQL} Query Engine},
  journal      = {Proc. {VLDB} Endow.},
  volume       = {12},
  number       = {11},
  pages        = {1371--1384},
  year         = {2019}
}

@book{brent2013algorithms,
  title={Algorithms for minimization without derivatives},
  author={Brent, Richard P},
  year={2013},
  publisher={Courier Corporation}
}

@manual{mosek,
   author = "MOSEK ApS",
   title = "The MOSEK Python Fusion API manual. Version 11.0.",
   year = 2025,
   url = "https://docs.mosek.com/latest/pythonfusion/index.html"
 }

@inproceedings{DBLP:conf/dbsec/CaoCPNY23,
  author       = {Xuyang Cao and
                  Yang Cao and
                  Primal Pappachan and
                  Atsuyoshi Nakamura and
                  Masatoshi Yoshikawa},
  title        = {Differentially Private Streaming Data Release Under Temporal Correlations
                  via Post-processing},
  booktitle    = {DBSec},
  series       = {Lecture Notes in Computer Science},
  volume       = {13942},
  pages        = {184--200},
  publisher    = {Springer},
  year         = {2023}
}

@article{DBLP:journals/pacmmod/0007CLS024,
  author       = {Wei Dong and
                  Zijun Chen and
                  Qiyao Luo and
                  Elaine Shi and
                  Ke Yi},
  title        = {Continual Observation of Joins under Differential Privacy},
  journal      = {Proc. {ACM} Manag. Data},
  volume       = {2},
  number       = {3},
  pages        = {128},
  year         = {2024}
}

@inproceedings{DBLP:conf/stoc/DworkNPR10,
  author       = {Cynthia Dwork and
                  Moni Naor and
                  Toniann Pitassi and
                  Guy N. Rothblum},
  title        = {Differential privacy under continual observation},
  booktitle    = {{STOC}},
  pages        = {715--724},
  publisher    = {{ACM}},
  year         = {2010}
}

@inproceedings{DBLP:conf/pet/ChanLSX12,
  author       = {T.{-}H. Hubert Chan and
                  Mingfei Li and
                  Elaine Shi and
                  Wenchang Xu},
  title        = {Differentially Private Continual Monitoring of Heavy Hitters from
                  Distributed Streams},
  booktitle    = {Privacy Enhancing Technologies},
  series       = {Lecture Notes in Computer Science},
  volume       = {7384},
  pages        = {140--159},
  publisher    = {Springer},
  year         = {2012}
}

@article{DBLP:journals/vldb/TaoGMR25,
  author       = {Yuchao Tao and
                  Amir Gilad and
                  Ashwin Machanavajjhala and
                  Sudeepa Roy},
  title        = {Differentially private explanations for aggregate query answers},
  journal      = {{VLDB} J.},
  volume       = {34},
  number       = {2},
  pages        = {20},
  year         = {2025}
}

@inproceedings{DBLP:conf/sigmod/0002HIM19,
  author       = {Chang Ge and
                  Xi He and
                  Ihab F. Ilyas and
                  Ashwin Machanavajjhala},
  title        = {APEx: Accuracy-Aware Differentially Private Data Exploration},
  booktitle    = {{SIGMOD} Conference},
  pages        = {177--194},
  publisher    = {{ACM}},
  year         = {2019}
}

@inproceedings{DBLP:conf/icalp/Dwork06,
  author       = {Cynthia Dwork},
  title        = {Differential Privacy},
  booktitle    = {{ICALP} {(2)}},
  series       = {Lecture Notes in Computer Science},
  volume       = {4052},
  pages        = {1--12},
  publisher    = {Springer},
  year         = {2006}
}

@article{DBLP:journals/pacmmod/ZhangH23,
  author       = {Shufan Zhang and
                  Xi He},
  title        = {DProvDB: Differentially Private Query Processing with Multi-Analyst
                  Provenance},
  journal      = {Proc. {ACM} Manag. Data},
  volume       = {1},
  number       = {4},
  pages        = {267:1--267:27},
  year         = {2023}
}

@article{DBLP:journals/pacmmod/GuanYZXCXXZ25,
  author       = {Hong Guan and
                  Lei Yu and
                  Lixi Zhou and
                  Li Xiong and
                  Kanchan Chowdhury and
                  Lulu Xie and
                  Xusheng Xiao and
                  Jia Zou},
  title        = {Privacy and Accuracy-Aware {AI/ML} Model Deduplication},
  journal      = {Proc. {ACM} Manag. Data},
  volume       = {3},
  number       = {3},
  pages        = {203:1--203:28},
  year         = {2025}
}

@inproceedings{DBLP:conf/tcc/BunS16,
  author       = {Mark Bun and
                  Thomas Steinke},
  title        = {Concentrated Differential Privacy: Simplifications, Extensions, and
                  Lower Bounds},
  booktitle    = {{TCC} {(B1)}},
  series       = {Lecture Notes in Computer Science},
  volume       = {9985},
  pages        = {635--658},
  year         = {2016}
}

@inproceedings{DBLP:conf/focs/AtseriasGM08,
  author       = {Albert Atserias and
                  Martin Grohe and
                  D{\'{a}}niel Marx},
  title        = {Size Bounds and Query Plans for Relational Joins},
  booktitle    = {{FOCS}},
  pages        = {739--748},
  publisher    = {{IEEE} Computer Society},
  year         = {2008}
}

@online{full-ver,
  author = {Yuan Qiu and
            Xiaokui Xiao and
            Yin Yang},
  title = {DP-S4S: Accurate and Scalable Select-Join-Aggregate Query Processing with User-Level Differential Privacy},
  year = 2026,
  url = {https://arxiv.org/abs/2603.14994}
}
